\pdfoutput=1
\documentclass[letterpaper, 12pt]{article}

\usepackage{amsmath}
\usepackage{amsthm}
\usepackage{amssymb}
\usepackage{graphicx} % Required for inserting images
\usepackage[top=1in,bottom=1in,left=1in,right=1in,marginparwidth=2cm,marginparsep=0.3cm]{geometry}
\usepackage[onehalfspacing]{setspace}
\usepackage{xcolor}
\usepackage{todonotes}
\usepackage[colorlinks=true, linkcolor=blue, citecolor=blue, urlcolor=blue, pdfborder={0 0 0}]{hyperref}
\usepackage[ruled, vlined]{algorithm2e}
\usepackage{booktabs}
\usepackage{comment}
\usepackage{placeins}
\usepackage{subcaption}
\usepackage{amsfonts} 
\usepackage{multirow}
\usepackage{siunitx}
    \usepackage{comment}
\allowdisplaybreaks

\usepackage{threeparttable}
\usepackage{array}

\usepackage{natbib} 
\definecolor{maroon}{RGB}{128,0,0}
\colorlet{revcolor}{maroon}
\newcommand{\rev}[1]{#1} % revision mark-up switched off; see main_highlight_changes.tex
\hypersetup{pdftitle={A Stochastic Nested Fixed Point Algorithm for Large-Scale BLP Estimation},
  pdfauthor={Zhentong Lu, Myung Hwan Seo, Youngki Shin, Qichen Zhang}}
\newtheorem{assum}{Assumption}
\newtheorem{lemma}{Lemma}
\newtheorem{theorem}{Theorem}
\newtheorem{corollary}{Corollary}
\newtheorem{remark}{Remark}
\DeclareMathOperator*{\argmin}{arg\,min}
\newcommand{\pto}{\overset{p}{\to}}

\title{\rev{A Stochastic Nested Fixed Point Algorithm for Large-Scale BLP Estimation}}
\author{
    Zhentong Lu\thanks{Financial Stability Department, Bank of Canada, Email:
  \texttt{zlu@bankofcanada.ca}},   
    Myung Hwan Seo\thanks{Department of Economics,  The Hong Kong University of Science and Technology, Email: \texttt{myunghseo@ust.hk}},
    Youngki Shin\thanks{Department of Economics, McMaster University, Email:
  \texttt{shiny11@mcmaster.ca}}, 
  and Qichen Zhang\thanks{Department of Economics, McMaster University, Email:
  \texttt{zhanq73@mcmaster.ca}}
  }
\date{\today}

\begin{document}

\maketitle

\begin{abstract}
We develop a stochastic nested fixed point (SNFP) estimator for random coefficients logit demand models that updates model parameters using stochastic gradients and performs demand inversion one market at a time. Relative to the conventional nested fixed point (NFP) estimator, SNFP substantially reduces memory requirements and computational cost, making estimation feasible in very large datasets. We establish the large-$T$ (number of markets) asymptotic properties of the estimator under regularity conditions. \rev{We also characterize the effect of sharing one block of simulation draws across markets and show how to correct for it.} Monte Carlo simulations show that the SNFP estimator achieves statistical accuracy comparable to the NFP estimator\rev{, and in our benchmark a single online pass estimates a model with 100 million markets in about 5.5 hours}. An empirical application using scanner data further demonstrates the practical advantages of SNFP for large-scale demand estimation.
\end{abstract}

\medskip
\textbf{Keywords:} stochastic GMM, stochastic gradient descent, BLP, discrete choice, demand estimation 

\textbf{JEL Codes:} C13, C15, C25, D12

\newpage
\section{Introduction}\label{section:intro}

The random coefficients logit model of \citet{berry1995automobile}, henceforth BLP, is the workhorse framework for demand analysis of differentiated product markets. Using market-level data on prices, shares, and product characteristics, it recovers flexible substitution patterns while accommodating unobserved consumer heterogeneity and endogenous prices. These demand estimates underpin a wide range of applications, including the measurement of market power, the evaluation of mergers and trade policies, and the welfare analysis of new products \citep{berry1995automobile,goldberg1995,nevo2001measuring,petrin2002quantifying,berry1999,nevo2000mergers,millerweinberg2017}. For an overview, see \citet{berryhaile2021}.

Estimating the model is computationally demanding, and the cost grows with the number of markets. The standard NFP algorithm and its reformulation as a mathematical program with
equilibrium constraints (MPEC) \citep{dube2012improving,su2012constrained} carry the whole sample into every parameter update. Specifically, NFP \rev{inverts} the share equations of all \rev{markets} at each iteration of the generalized method of moments (GMM) optimization. Although MPEC replaces the inversion by constraints, it still optimizes over the mean utilities of all $T$ markets at once. \rev{For an NFP implementation that retains the simulated individual-level utilities and market shares for every market, storage scales as $O(T \cdot J \cdot R)$ with $T$ markets, $J$ products, and $R$ simulation draws}, and every iteration passes over all $T$ markets. 

Modern datasets routinely push $T$ into the millions. The Nielsen Retail Scanner panel records weekly sales for roughly $35{,}000$--$50{,}000$ stores. Over a decade, about $520$ weeks, a single product category observed at the store-week level yields millions of markets. Daily product--market records on an e-commerce platform and fine geographic cells in credit-card spending data are of similar scale. \rev{At these dimensions, every NFP GMM objective evaluation inverts the share equations for millions of markets, while memory requirements continue to grow linearly with the number of markets.}

To circumvent the computational burden, researchers may restrict attention to a small, computable sample, or may shrink
a large one before estimation by aggregating markets into coarser units or by keeping a subset of markets. Either
choice has a cost, because the BLP estimator may perform poorly when the number of markets is small and instruments
have low power. The following simulation experiment shows this clearly. We simulate a simple random-coefficient logit model with an
endogenous price, in which the parameter of interest is the standard deviation $\sigma_p$ of the random coefficient on
price. We estimate it by NFP with the standard BLP moment condition, $R = 500$
simulation draws, and $1{,}000$ replications at each $T \in \{2{,}000,\; 8{,}000,\; 32{,}000,\; 128{,}000\}$.
Appendix~\ref{appendix:small_sample_details} gives the full simulation design. At $T = 2{,}000$ the GMM objective is
nearly flat in $\sigma_p$, and the optimizer stops at a boundary solution ($\hat\sigma_p \leq 0.02$) in $33\%$ of
replications, against less than $1\%$ at $T = 128{,}000$. Figure~\ref{fig:small_sample_blp} plots the $t$-statistic
density for $\hat\sigma_p$. At $T = 2{,}000$ the density is dispersed and visibly non-normal, whereas by
$T = 128{,}000$ it tracks $N(0,1)$ closely. Researchers are thus caught between a full-sample estimator \rev{that is costly to}
compute at the $T$ that many-markets asymptotics require \citep{freyberger2015asymptotic} and a reduced sample that
distorts the parameter of interest.

\begin{figure}[t]
    \centering
    \caption{Small-Sample Distribution of the BLP Estimator}
    \label{fig:small_sample_blp}
    \vspace{2pt}
    \includegraphics[width=0.62\textwidth]{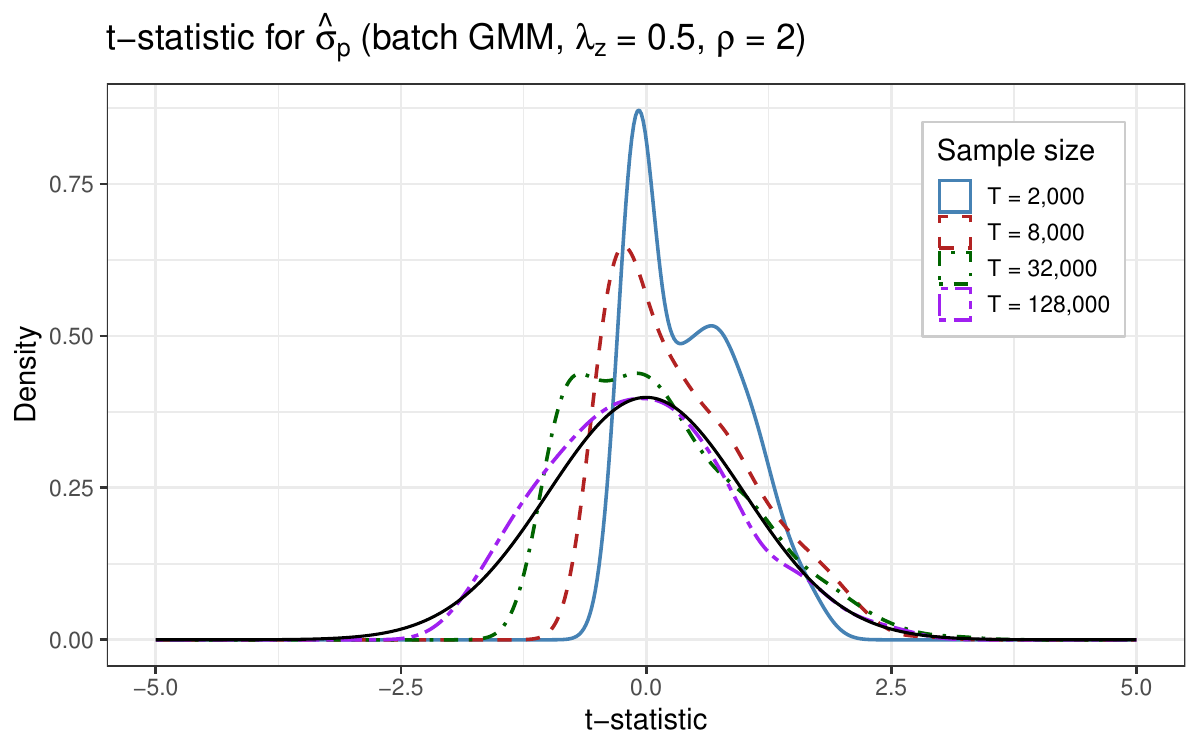}

    \vspace{2pt}
    \begin{minipage}{0.62\textwidth}
    \footnotesize \textit{Notes:} Density of the $t$-statistic for $\hat\sigma_p$. Solid black line: $N(0,1)$. Boundary solutions are excluded.
    \end{minipage}
\end{figure}

This paper aims to relax this trade-off. We propose the SNFP algorithm, to our knowledge, the first online GMM estimator for
the BLP model. In the spirit of online learning, SNFP visits the markets one at a time, needs memory for only
one market, and has the same large-$T$ distribution as full-sample GMM.

Rather than recomputing sample moments using all $T$ markets at every parameter update, SNFP processes one market at a time. For each market, it runs the BLP contraction mapping to recover the structural error $\xi_t(\theta_{t-1})$, evaluates the market's GMM moment contribution $g_t(\theta_{t-1}) =
Z_t'\,\xi_t(\theta_{t-1})$, with $Z_t$ the market's instruments, and takes one stochastic gradient descent (SGD) step \citep{robbins1951stochastic}. \rev{Because SNFP preserves the standard BLP market-level demand inversion, advances in market-share computation and mean-utility recovery can be incorporated directly.} 

The update is scaled by a matrix that is computed once from a pilot sample of $T_0$ markets and then held fixed, so no full-sample Jacobian or weighting matrix is ever recomputed. The final estimator is the Polyak--Ruppert average of the iterates,
$\bar\theta_T = (T-T_0)^{-1}\sum_{t=T_0+1}^T \theta_t$ \citep{polyak1992acceleration,ruppert1988efficient}. Because only one market is active at any step, memory per step is $O(J \cdot R)$, independent of $T$. \rev{The theory covers a
single pass over the markets, although a small number of shuffled passes can further improve finite-sample accuracy.}

The asymptotic theory \rev{for the SNFP estimator} is non-standard, because the BLP moment depends on $\theta$ through an implicitly defined,
simulation-based fixed point, a case existing stochastic GMM theory does not cover. We show that the Polyak--Ruppert
average is consistent and $\sqrt{T}$-asymptotically normal with the same sandwich covariance as the
full-sample GMM estimator \rev{with the same weighting matrix}, provided the number of simulation draws grows faster than $\sqrt{T}$. A plug-in
sandwich estimator delivers standard errors. 

\rev{We also analyze the implementation in which one block of $R$ i.i.d.\
simulation draws is shared by every market.} Sharing makes the simulation error common across markets, so it no longer averages
out. \rev{The estimator remains consistent as $R \to \infty$. When $T/R \to c < \infty$, its asymptotic variance acquires
an additional term $c\,\Lambda_\nu$ from the shared draws, which plug-in standard errors omit, and when $R = o(T)$ the $\sqrt T$ rate is
lost. For full-sample estimators, \citet[Remark~3]{freyberger2015asymptotic} notes that shared draws require $T/R$ to
be bounded for $\sqrt T$-consistency, and \citet{hong2021blp} establish $\sqrt{\min(R,T)}$ asymptotics, with a consistent
variance estimator, for the GMM estimator under shared draws. We establish the corresponding result for the online
estimator and show that the additional variance term can be estimated within the same pass. We further show that standard
inference is restored if the shared block is taken instead from a low-discrepancy, or quasi-Monte Carlo, point set,
such as a scrambled Sobol' or Halton sequence. Such a block integrates the choice probabilities with an error
$\epsilon_R$ that shrinks faster than the $R^{-1/2}$ of random draws, and what the theory requires is
$\sqrt T\,\epsilon_R \to 0$. For digital nets and Halton sequences, $R = T^{b}$ with any $b > 1/2$ suffices.} 

The computational gains are large, and statistical accuracy is comparable. In Monte Carlo experiments, \rev{a single pass of} SNFP is \rev{between $45$ and $146$} times faster
than NFP and uses far less memory, and estimates a model with $100$ million markets in about $5.5$ hours on a
single core. By contrast, \rev{in our benchmark} NFP does not finish within a ten-hour budget already at about two million markets. \rev{From $8{,}000$ markets onward, the single-pass estimator's} bias \rev{and} root mean squared error are close to NFP's, \rev{its coverage approaches the nominal level as $T$ grows,} and its accuracy improves at the rate
the theory predicts as $T$ grows. \rev{In an application to $817{,}031$ store--week markets of IRI yogurt data, SNFP
estimates the model on the full sample, whereas estimates from repeated subsamples remain dispersed and systematically differ from the
full-sample estimate, particularly for the parameters governing heterogeneity in price sensitivity.}

The paper makes two contributions. First, we propose the SNFP algorithm, which removes the full-sample iteration from
BLP estimation. Existing accelerations, from MPEC \citep{dube2012improving,su2012constrained} to approximate BLP
\citep{leeseo2015} and the Anderson-accelerated contraction in \texttt{PyBLP} \citep{conlon2020best}, reduce the cost of
each full-sample iteration. SNFP embeds the market-level inversion inside a stochastic gradient update, so that
working
memory is $O(JR)$ rather than $O(TJR)$, and because the within-market inversion is unchanged, these accelerations can be
used inside it. In benchmarks, SNFP is one to two orders of magnitude faster than NFP and uses up to an order of
magnitude less memory, with \rev{bias and RMSE} close to NFP's \rev{at all but the smallest sample size}. Second, we provide the asymptotic theory
for a stochastic GMM estimator whose moment is an implicitly defined, simulation-based object. The results of
\citet{chen2025sgmm} and \citet{chen2026slim} assume moments available in closed form after one forward pass. We
extend
them to a moment defined by a simulated fixed point and \rev{show that the condition $\sqrt T/R \to 0$, under which
simulation error is negligible for full-sample BLP with fresh per-market draws \citep{freyberger2015asymptotic}, also
suffices for the online recursion.}

\subsection*{Related Literature}\label{subsection:related_literature}

Our paper is related to two strands of literature. The first seeks to reduce the computational burden of BLP estimation. \citet{dube2012improving} and \citet{su2012constrained}
recast estimation as mathematical programming with equilibrium constraints (MPEC), optimizing jointly over parameters
and mean utilities and so avoiding repeated inner-loop solves at each outer step. \citet{leeseo2015} propose approximate
BLP (ABLP), replacing the numerical contraction with a closed-form inversion based on a linearized market-share system.
\citet{conlon2020best} implement Anderson acceleration in the contraction mapping and provide the widely used
\texttt{PyBLP} package. \rev{\citet{fukasawa2024fast} propose fast inner-loop mappings and acceleration for static and
dynamic BLP estimation.} Part of the motivation for this literature is numerical:
\citet{knittel2014estimation} document that BLP estimates can be sensitive to the choice of optimizer, the
starting values, and the inner-loop tolerance, and that many runs terminate at points that fail the first-order
conditions. \citet{fukasawa2026sequential} develop a unified theory of sequential algorithms for structural
estimation with equilibrium constraints, show that ABLP satisfies the zero-Jacobian property, and propose a
Jacobian-free sequential linearly constrained (SLC) algorithm as an alternative to NFP and MPEC.
\citet{aguirregabiria2026nested} propose a nested pseudo-GMM (NP-GMM) estimator that, in the spirit of nested
pseudo-likelihood methods for dynamic discrete choice, swaps the order of the GMM optimization and the share inversion
so that mean utilities are recovered in closed form. \citet{lushitao2023} develop
a semi-nonparametric two-step estimator that re-expresses the demand system as a partial linear model and avoids
numerical share inversion while allowing flexible specifications of the random-coefficient distribution.
\rev{\citet{salanie2022frac} propose a fast approximate alternative that replaces the inversion by a small-heterogeneity
expansion estimated by linear instrumental variables.}

\rev{These methods accelerate or reformulate full-sample estimation, and those that iterate still pass over all $T$
markets at every parameter update.} SNFP
targets that loop instead. Markets enter the outer update sequentially, working memory is $O(JR)$, and no step
requires
a full pass. Because the within-market inversion problem is unchanged, the methods above are complementary. Anderson
acceleration, analytic inversions, and the pseudo-GMM shortcuts of MPEC, ABLP, SLC, or NP-GMM, as well as inversion-free
semi-nonparametric estimation \citep{lushitao2023}, can in principle run inside the stochastic outer loop. The moment
inequalities of \citet{gandhilushi2022} for demand estimation with zero market shares, which are common under fine
disaggregation, can likewise be combined with SNFP's online processing of markets. Motivated by weak-instrument
concerns in BLP applications, \citet{lushimizu2026} take a different route, estimating demand under a sparsity
restriction on the market--product shocks without relying on instruments.

The second strand is stochastic approximation. The Robbins--Monro framework \citep{robbins1951stochastic} underlies
a large literature on online estimation, and Polyak--Ruppert averaging
\citep{polyak1992acceleration,ruppert1988efficient} restores $\sqrt{n}$ convergence rates in settings where raw SGD
iterates converge more slowly. \citet{chen2025sgmm} bring these ideas to GMM, showing that SGD on moment conditions
yields $\sqrt{T}$-consistent, asymptotically normal estimators with the classical sandwich covariance, and
\citet{chen2026slim} extend the analysis to overidentified models. Closer to discrete choice, \citet{lee2026sauss}
estimate multinomial choice models by averaged stochastic approximation with unbiased simulated scores, which avoids
the simulation bias of simulated maximum likelihood. These methods assume that the moment or score and its gradient
can be evaluated, exactly or by an unbiased simulation, after one forward pass. BLP violates that
assumption, because the structural error $\xi_t(\theta)$ solves a market-level fixed-point equation, so even a
single market's moment
requires an inner iterative solve. SNFP
embeds that inversion in the stochastic gradient update, coupling inner inversion and outer optimization in one
sequential pass without ever forming the full-sample mean-utility matrix, and we specialize and extend the theory of
\citet{chen2026slim,chen2025sgmm} to this implicitly defined, simulation-based moment function.

% ----  Online inference via random scaling ----
% Random-scaling inference is not included in this version.
\begin{comment}
Because SNFP processes data sequentially, the partial-sum process of the Polyak-averaged iterates
is available as a byproduct of estimation at no additional cost. This makes the random-scaling
inference method of \citet{lee2022fast}, originating with the fixed-$b$ approach of
\citet{kiefer2000simple}, the natural device for inference. The self-normalized test statistic,
formed by dividing the centered estimator by a functional of the partial-sum process, converges to
a pivotal distribution under the null without requiring estimation of the long-run variance.
Confidence intervals therefore require neither a second pass over the data nor an explicit estimate
of the sandwich covariance matrix.
\end{comment}

% ---- Roadmap ----
\medskip
The remainder of the paper is organized as follows. Section~\ref{section:model} sets up the random coefficients
logit model and the full-sample NFP--GMM estimator. Section~\ref{section:algorithm} states the SNFP algorithm
(Algorithm~\ref{alg:snfp-skeleton}), including the pilot step that produces the starting value and the fixed scaling
matrix. Section~\ref{section:asymptotic_theory} develops the asymptotic theory. Section~\ref{section: Monte Carlo}
reports Monte Carlo simulations. Section~\ref{section:empirical} presents the empirical application.
Section~\ref{section:conclusion} concludes. Proofs and supplementary simulation results are collected in the
appendices.

\section{Discrete Choice Model and the BLP Estimator}\label{section:model}

\subsection{Setup}

Consider a standard BLP demand model for differentiated products observed over markets $t=1,\dots,T$. In each market $t$, there are $J_t$ inside goods indexed by $j=1,\dots,J_t$, as well as an outside option indexed by $0$.

Consumer $i$ in market $t$ derives (indirect) utility from choosing product $j$
\begin{align}
u_{ijt} = \delta_{jt} + \mu_{ijt} + \varepsilon_{ijt},
\label{eq:utility}
\end{align}
where $\delta_{jt}$ is the mean utility of product $j$ in market $t$, $\mu_{ijt}$ captures deviations from mean utility arising from preference heterogeneity, and $\varepsilon_{ijt}$ is i.i.d.\ type I extreme value. The outside option utility is
normalized to \(u_{i0t} = \varepsilon_{i0t}\).

We introduce preference heterogeneity through a random-coefficients specification. Let $x_{jt}$ denote the vector
of observed product characteristics (typically including price), and let $\xi_{jt}$ denote an unobserved product
characteristic observed by firms but not by the econometrician. The systematic part of
utility in \eqref{eq:utility} is linear in the characteristics with consumer-specific coefficients,
\[
\delta_{jt} + \mu_{ijt} = x_{jt}'\beta_i + \xi_{jt}, \qquad \beta_i = \theta_1 + \Sigma v_i,
\]
where{} $\theta_1$ is the vector of mean taste parameters, $v_i \sim N(0,I)$ is a vector of independent
standard normal random variables, and $\Sigma$ is a matrix of parameters governing the dispersion of tastes across
consumers. Let $\theta_2 := \mathrm{vec}(\Sigma)$ and \(\theta := (\theta_1',\theta_2')'\).

Substituting $\beta_i$ separates the systematic part of utility into a component common
to all consumers and an individual-specific deviation,
\[
\delta_{jt} = x_{jt}'\theta_1 + \xi_{jt}, \qquad \mu_{ijt} = x_{jt}'\Sigma v_i,
\]
so that the mean utility $\delta_{jt}$ absorbs the unobserved characteristic $\xi_{jt}$ and $\mu_{ijt}$ carries the
preference heterogeneity.

Given $(\delta_t,\Sigma)$, the choice probability (market share) of product $j$ in market $t$
is
\begin{align}
\tilde{s}_{jt}(\delta_t,\theta_2)
&=
\int
\frac{\exp\!\left(\delta_{jt} + \mu_{ijt}\right)}
{1+\sum_{k=1}^{J_t}\exp\!\left(\delta_{kt} + \mu_{ikt}\right)}
\, dF_v(v_i),
\label{eq:share_integral_delta}
\end{align}
where $\delta_t := (\delta_{1t},\ldots,\delta_{J_t t})'$ and $F_v$ denotes the standard multivariate normal
distribution of $v_i$.

\subsection{The BLP Estimator}

Let $\xi_t := (\xi_{1t},\ldots,\xi_{J_t t})'$ denote the vector of 
unobserved product characteristics in market $t$ and assume the existence of instruments 
$Z_t \in \mathbb{R}^{J_t \times d_z}$ such that
\begin{equation}
E[ Z_t' \xi_t ] = 0. 
\label{eq:true_moment_condition}
\end{equation}
It follows that
\begin{equation}
E\!\left[ Z_t' \Big( \tilde{\delta}_t(S_t,\theta_2) - X_t \theta_1 \Big) \right] = 0,
\label{eq:moment_theta}
\end{equation}
where $X_t = (x_{1t},\ldots,x_{J_t t})'\in \mathbb{R}^{J_t \times d_x}$, and $\tilde{\delta}_t(S_t,\theta_2)$ is the unique vector of mean utilities that rationalizes
the observed market share vector \(S_t\) under heterogeneity parameters $\Sigma$, i.e.,
$\tilde{\delta}_t(S_t,\theta_2)$ solves the system 
\begin{equation}
    S_t = \tilde{S}_t(\delta_t,\theta_2).
\end{equation}
The existence and uniqueness of the solution $\tilde{\delta}_t(S_t,\theta_2)$ are established by \cite{berry1994estimating} and \cite{berry2013connected}.

Based on the moment conditions~\eqref{eq:moment_theta}, the standard BLP approach estimates the parameter of interest by a nonlinear GMM:
\begin{align}
    \hat\theta = \argmin_{\theta \in \Theta} \left( \dfrac{1}{T} \sum_t g_t(\theta) \right)' W_T \left( \dfrac{1}{T}\sum_t g_t(\theta) \right) \label{eq: gmm obj}
\end{align}
where $g_t(\theta) := Z_t'\bigl(\tilde{\delta}_t(S_t,\theta_2) - X_t \theta_1\bigr)$, and $W_T$ is a weighting matrix that converges to a positive definite matrix $W$. 

In practice, $\tilde{\delta}_t(S_t,\theta_2)$ does not have a closed-form expression but can be obtained numerically by the BLP contraction mapping
\begin{align}
    \delta_{t}^{h+1} = \delta_{t}^{h} + \log S_{t} - \log \tilde{S}_{t}(\delta_{t}^h,\theta_2), \label{eq:contraction mapping}
\end{align} 
where \(h\) indexes the iteration step, and the integral in $\tilde{S}_{t}(\delta_{t}^h;\theta_2)$ (defined by~\eqref{eq:share_integral_delta}) can be evaluated by numerical integrations or Monte Carlo simulations. Since $\tilde{\delta}_t(S_t,\theta_2)$ needs to be solved for any $\theta_2$, the GMM estimation procedure here is known as the nested fixed point (NFP) algorithm. Note that the NFP algorithm uses all data points for each iteration in the contraction mapping, so computation becomes substantially burdensome as $T$ increases, often preventing researchers from using the full set of available observations (say $T>100{,}000$).
 
\section{The SNFP Algorithm}\label{section:algorithm}

In this section, we introduce the SNFP algorithm. Building on the stochastic generalized method of moments (SGMM) approach in \citet{chen2025sgmm}, we use stochastic gradient descent (SGD) in the outer loop, while the inner loop solves the BLP contraction mapping for a \emph{single} market corresponding to the current update. The main feature of the SNFP algorithm is that \rev{each update uses the data of a \emph{single} market, and a single pass visits each market once}, substantially saving computation time and memory usage when the \rev{number of markets} $T$ is large. \rev{SNFP is to NFP what SGD is to gradient descent (GD). Each step solves the simulated share equations of a single market, exactly as NFP does for every market, and takes one stochastic-approximation step on that market's GMM moment, scaled by a matrix fixed after an initial pilot step.} 

We apply SGD to the GMM problem~\eqref{eq: gmm obj}, updating the parameter vector after each market $t$ is processed. Given this value of the updated parameter, the mean utility is calculated only for the next market $t+1$, and the procedure continues until all markets have been visited. Because each update depends only on a single market, the remaining markets need not be held in memory during the estimation. Relative to the SGMM framework of \citet{chen2025sgmm}, the key distinction is the inner loop computation. The SNFP computes the mean utility only for the current market, which is the only quantity required for the next stochastic gradient update. 

First, we compute an initial estimate using a pilot sample of size $T_0$. Specifically, applying the standard NFP algorithm to the pilot sample, we can estimate the following initial values:
\begin{align}
    \hat\theta_{T_0} & = \argmin_{\theta} \bar{g}_{T_0} (\theta)'\bar{g}_{T_0} (\theta)\\
    \hat\Phi_{T_0} & = \nabla_\theta \bar{g}_{T_0}(\hat\theta_{T_0}) \\
    \hat W_{T_0} &= \left( \frac{1}{T_0} \sum_{t=1}^{T_0} g_t(\hat\theta_{T_0}) g_t(\hat\theta_{T_0})' + \eta^\ast   I_{d_z}  \right)^{-1} \\
    \hat A_{T_0} &= \left( \hat\Phi_{T_0}^\top \hat W_{T_0} \hat\Phi_{T_0} \right)^{\dagger} \hat\Phi_{T_0}^\top \hat W_{T_0},
\end{align}
where $\bar{g}_{T_0}(\theta) = T_0^{-1}\sum_t g_t(\theta)$ and $\eta^\ast >0$ is a small constant. The matrix $\hat A_{T_0}$ is the pilot scaling matrix. It is computed once and then held fixed for all subsequent updates.

Next, we conduct the SGD procedure for $t=T_0+1,\ldots,T$. For each $t$, we use the BLP contraction mapping~\eqref{eq:contraction mapping} to compute the mean utility vector $\tilde{\delta}_t(S_t,\theta_{t-1,2})$, given $\theta_{t-1,2}$ and data $\{S_t, X_t\}$ from market \(t\), and then update the parameter by a projected stochastic gradient step using the \emph{fixed} pilot scaling matrix $\hat A_{T_0}$, followed by Polyak--Ruppert averaging:
\begin{align}
    \theta_t & = \Pi_\Theta\!\left\{ \theta_{t-1} - \gamma_t\, \hat A_{T_0}\, g_t(\theta_{t-1}) \right\}, \qquad \gamma_t = \gamma_0\, t^{-a}, \\
    \bar{\theta}_t & = \dfrac{t-1-T_0}{t-T_0}\,\bar{\theta}_{t-1} + \frac{1}{t-T_0}\, \theta_t,
\end{align}
where $\Pi_\Theta$ denotes the Euclidean projection onto the parameter space $\Theta$. The scaling matrix $\hat A_{T_0}$ is the fixed pilot quantity defined above. Unlike the standard NFP--GMM iteration, no full-sample Jacobian or weighting matrix is recomputed at any step.
Note that at iteration \(t\), we apply the BLP contraction mapping to compute the vector of mean utilities for market \(t\) only, rather than for all markets. The full algorithm is shown in Algorithm~\ref{alg:snfp-skeleton}.

% 
% 10:18 pm 
% 
% exact gradient formula. how to prove the consistency. 
% need to look at the consistency proof: 
%
\SetKwInOut{Input}{Input}
\SetKwInOut{Output}{Output}
\SetKwFunction{OneDeltaStep}{OneDeltaStep}
\SetKwFunction{StochMoment}{StochMoment}
\SetKwFunction{ThetaUpdate}{ThetaUpdate}

\begin{algorithm}[tbp]\small
\caption{Stochastic Nested Fixed Point (SNFP)}
\label{alg:snfp-skeleton}
\DontPrintSemicolon
\Input{Stream or large dataset $\{(S_{jt},X_{jt},Z_{jt})\}$; initial $(\hat\theta_{T_0},\delta^\ast)$; step sizes $\{\gamma_t\}$}
\Output{Final estimate $\hat\theta$}

\BlankLine

\textbf{Pilot initialization} \;
Set a pilot dataset from $t=1,\ldots,T_0$\;
Compute a consistent pilot estimate $\hat{\theta}_{T_0}$ (e.g.\ (offline) NFP, MPEC, or SLIM \citep{chen2026slim})\;
Compute a consistent scaling matrix $\hat A_{T_0}$\;
\[
\hat A_{T_0} = \left(\hat \Phi_{T_0}' \hat W_{T_0} \hat \Phi_{T_0}\right)^{\dagger} \hat \Phi_{T_0}' \hat W_{T_0}
\]where, for a small number $\eta^\ast>0$,
\[
\hat \Phi_{T_0} = \frac{1}{T_0}\sum_{t=1}^{T_0} \nabla_\theta g_t(\hat \theta_{T_0})\mbox{  and  }
\hat W_{T_0} = \left(\frac{1}{T_0}\sum_{t=1}^{T_0} g_t(\hat\theta_{T_0}) g_t(\hat\theta_{T_0})' + \eta^\ast I_{d_z} \right)^{-1}.
\]\;

\BlankLine

\textbf{Main stochastic estimation procedure}\;
Set $\theta_{T_0}\leftarrow \hat\theta_{T_0}$; $\bar\theta_{T_0}\leftarrow \theta_{T_0}$; and $A\leftarrow \hat A_{T_0}$\;

\BlankLine

\For{$t=T_0+1,\,T_0+2,\,\ldots,\,T$}{

  \BlankLine
  \textbf{Inner loop (market $t$ only)}\;

  \Repeat{$\|\delta_t^{(m+1)}-\delta_t^{(m)}\|_\infty \le \eta $}{
    \tcp{Simulated market shares $\tilde S_t(\delta_t^{(m)};\theta_{t-1,2})$}
\For{$r=1,\ldots,R$}{
  Compute $\nu_{jrt} \leftarrow x_{jt}' \Sigma(\theta_{t-1,2}) v_r$, $j=1,\ldots,J_t$\;
  Compute
  $
  P_{jrt}
  \leftarrow
  \dfrac{\exp(\delta_{jt}^{(m)}+\nu_{jrt})}
  {1+\sum_{k=1}^{J_t}\exp(\delta_{kt}^{(m)}+\nu_{krt})}
  $,
  $j=1,\ldots,J_t$\;
}
Set
$
\tilde S_{jt}(\delta_t^{(m)};\theta_{t-1,2})
\leftarrow
\frac{1}{R}\sum_{r=1}^R P_{jrt}
$,
$j=1,\ldots,J_t$\;
    Update
    $
    \delta_t^{(m+1)} \leftarrow \delta_t^{(m)} + \log S_t - \log \tilde S_t(\delta_t^{(m)};\theta_{t-1,2})
    $\;
  }
  Set $\tilde\delta_t \leftarrow \delta_t^{(m)}$\;

  \BlankLine
  \tcp{Construct the demand shock and market moment}
  Compute
  $
  \tilde\xi_t(\theta_{t-1})
  \leftarrow
  \tilde\delta_t - X_t \theta_{t-1,1}
  $\;
  Set the market moment
  $
  g_t(\theta_{t-1})\leftarrow Z_t'\tilde\xi_t(\theta_{t-1})
  $\;

  \BlankLine
  \tcp{Robbins--Monro update}
  Update parameter:
  $
  \theta_t\leftarrow \Pi_\Theta\bigl\{\theta_{t-1} - \gamma_t\, A\, g_t(\theta_{t-1})\bigr\}
  $\;

  \BlankLine
  \tcp{Polyak--Ruppert averaging}
  Update
  $
  \bar\theta_t\leftarrow \frac{t-1-T_0}{t-T_0}\bar\theta_{t-1} + \frac{1}{t-T_0}\theta_t
  $\;
}

\BlankLine
\Return $\hat\theta \leftarrow \bar \theta_T$\;

\BlankLine

\end{algorithm}

We have a few remarks. First, this algorithm can be viewed as a variant of the nonlinear stochastic GMM (SGMM) in
\citet{chen2025sgmm}, with mean utilities updated in a fully online manner. Combined with SGD updates for $\theta$,
this yields substantial gains in computational speed relative to the standard NFP--GMM approach, which requires
full-sample updates and nonlinear optimization. Second, the SNFP algorithm is also memory efficient, because
each iteration uses only a single observation (market), and no full dataset storage is required during estimation.
These computational advantages are illustrated in the Monte Carlo simulations in Section~\ref{section: Monte Carlo}.
Finally, building on the MPEC formulation for BLP \citep{dube2012improving} (and subsequent practical guidance in
\citet{conlon2020best}), it would be natural to develop a stochastic MPEC variant within an online framework, which
we leave to future work. 

\subsection*{Implementation Details}
We conclude this section by summarizing the implementation choices used throughout the paper and justified by the theoretical results developed below.
\rev{As a starting configuration, we recommend a pilot of $T_0 \ge 1{,}000$ markets to compute the starting value and the
scaling matrix, step sizes $\gamma_t = t^{-2/3}$, that is, $(\gamma_0, a) = (1, 2/3)$, a single pass as the
initial budget when $T$ is in the hundreds of thousands or more, and additional shuffled passes at smaller $T$.
When one block of draws is shared across markets, we recommend scrambled Sobol' or Halton draws with $R$ large
enough for the integration-error condition of Corollary~\ref{cor:qmc}. The plug-in sandwich variance applies
under that condition or under fresh per-market draws with $\sqrt T/R \to 0$, and shared i.i.d.\ draws require
the correction of Remark~\ref{rem:common-draws-use}.}

\begin{comment}

\subsection*{Testing for Local Optima (To Be Completed)}

Since the preceding GMM objective function is not globally convex there is no guarantee that a stable point from the algorithm is the global optimizer. Since the global optimum converges to zero in case of GMM, we apply the overidentification test to check it out. If the test rejects, then we look for another stable point with a different starting point. 

Since the stochastic algorithm does not directly solve the first order condition for the full sample GMM, we follow \cite{chen2026slim} and employ the debiased J-test statistic in their Section 5.2. Specifically, let $\bar{\Phi}=\Phi_{T}$, $\bar{W}=W_{T}$, and $\bar{g}=\bar{g}_{T}(\bar{\theta}_T )$.\footnote{Alternatively, we may recompute $\bar{\Phi}$ and $\bar{W}$ using $\bar{\theta}_T$ and full sample. } 
Define
\[
\bar{g}_{n}^{D}=\Big(I-\bar{W}^{1/2}\bar{\Phi}
(\bar{\Phi}' \bar{W} \bar{\Phi})^{-1}\bar{\Phi}'\bar{W}^{1/2}\Big)\bar{W}^{1/2}\bar{g}.
\] and
\[
J_{\mathrm{D}} := n \, \bar{g}_{n}^{D}{}' \bar{g}_{n}^{D}
= n \, \bar{g}' \!\left(\bar{W}-\bar{W}\bar{\Phi}(\bar{\Phi}' \bar{W}\bar{\Phi})^{-1}\bar{\Phi}'\bar{W}\right)\! \bar{g}.
\]

the $J$-statistic
\[
J := n \, \bar{g}' \!\left(\bar{W}-\bar{W}\bar{\Phi}(\bar{\Phi}' \bar{W}\bar{\Phi})^{-1}\bar{\Phi}'\bar{W}\right)\! \bar{g}.
\]
\end{comment}

\section{Asymptotic Theory}\label{section:asymptotic_theory}

In this section, we establish the large-sample properties of the SNFP estimator as the number of markets $T$ grows.
When each market receives a fresh block of simulation draws, Section~\ref{subsection:large_sample} shows that the
SNFP estimator is consistent and attains the classical GMM limit distribution once $R$ grows faster than
$\sqrt T$. Section~\ref{subsection:common_draws} then turns to a common block of draws shared across markets and
shows how both the rate requirement on $R$ and the limiting variance change.

\subsection{Large-Sample Properties of the SNFP Estimator}\label{subsection:large_sample}

The SNFP algorithm operates in two phases.
The pilot phase uses the first $T_0$ markets and a shared block $\{\nu_r^{\mathrm{pilot}}\}_{r=1}^R$ of $R$ simulation
draws to produce the offline NFP--GMM estimate $\hat\theta_{T_0}$ and the pilot scaling matrix $\hat A_{T_0}$.
In the online phase, market $O_t = (Z_t, X_t, S_t)$ arrives at $t = T_0+1, \ldots, T$ with a fresh i.i.d.\ block
$\{\nu_{r,t}\}_{r=1}^R$ of $R$ simulation draws and the iterate is updated as
\begin{equation}\label{eq:snfp}
  \theta_t = \Pi_\Theta\!\bigl\{\theta_{t-1} - \gamma_t\,\hat A_{T_0}\,g(O_t;\theta_{t-1},P_t^R)\bigr\},
  \qquad \gamma_t = \gamma_0\, t^{-a},
\end{equation}
where $\Pi_\Theta$ is Euclidean projection onto $\Theta\subset\mathbb R^d$, $g$ is the BLP moment function, and
$P_t^R := R^{-1}\sum_{r=1}^R \delta_{\nu_{r,t}}$ is the empirical measure of the $R$ simulation draws for market $t$
, and $P^\ast$ denotes the common distribution of the draws, so that $P_t^R$ is its empirical counterpart.
The Polyak--Ruppert average is $\bar\theta_T := (T-T_0)^{-1}\sum_{t=T_0+1}^T \theta_t$.
Define the population objects
\[
  G(\theta,P) := E[\nabla_\theta g(O;\theta,P)], \qquad G^\ast := G(\theta^\ast,P^\ast), \qquad
  \Omega := \mathrm{Var}[g(O;\theta^\ast,P^\ast)],
\]
\[
  A^\ast := (G^{*\top}WG^\ast)^{-1}G^{*\top}W, \qquad \Sigma_\theta := A^\ast\,\Omega\,A^{*\top},
\]
where $W$ is the almost sure limit of the GMM weighting matrix and by construction $A^\ast G^\ast = I_d$.
Let $\mathcal F_t$ denote the natural online filtration, formally defined in Appendix~\ref{appendix:lemmas}
together with the shorthand $\Delta A$, $g_t^\circ$, $u_t^R$, and $e_t$ used throughout the proofs.

Our analysis follows the SLIM framework of \citet{chen2026slim}. Because the pilot phase supplies a consistent
estimate $\hat\theta_{T_0}$ and a consistent scaling matrix $\hat A_{T_0}$ that is held fixed thereafter, the
recursion~\eqref{eq:snfp} is the efficient (second-order) version of SLIM applied to the \emph{simulated} moments
$g(O_t;\theta,P_t^R)$. If $R = \infty$, the simulated moment coincides with the exact BLP moment
$g(O_t;\theta,P^\ast)$ and the estimation problem reduces to the efficient SLIM without simulation error. This
observation organizes the analysis. We first establish the asymptotic properties of the simulation-free
oracle and
then show that the approximation error induced by the finite number of simulation draws is negligible provided $R$
grows sufficiently fast. Formally, define the \emph{oracle recursion}
\begin{equation}\label{eq:snfp-oracle}
  \theta_t^o = \Pi_\Theta\!\bigl\{\theta_{t-1}^o - \gamma_t\,\hat A_{T_0}\,g(O_t;\theta_{t-1}^o,P^\ast)\bigr\},
  \qquad \theta_{T_0}^o = \hat\theta_{T_0},
\end{equation}
driven by the same markets, the same pilot quantities, and the same learning rates as~\eqref{eq:snfp}, and let
$\bar\theta_T^o := (T-T_0)^{-1}\sum_{t=T_0+1}^T \theta_t^o$ denote its Polyak--Ruppert average.
Theorems~\ref{thm:consistency-snfp} and \ref{thm:oracle-clt} below establish consistency and asymptotic normality
for the oracle sequence, and Theorem~\ref{thm:negligible-sim} shows that
$\sqrt T\,(\bar\theta_T - \bar\theta_T^o) \pto 0$ whenever $\sqrt T / R \to 0$, so that the feasible SNFP estimator
inherits the oracle limit distribution.

We first collect the conditions the results require, and discuss them after the statements.

\setcounter{assum}{0}

\begin{assum}[Random sample and independence]\label{ass:sampling}
$\{O_t\}_{t\ge 1}$ is i.i.d.
The pilot simulation block $\{\nu_r^{\mathrm{pilot}}\}_{r=1}^R$, the online simulation blocks $\{\nu_{r,t}\}_{r=1}^R$ for
$t > T_0$, and the markets $\{O_t\}$ are mutually independent. Online blocks are i.i.d.\ across $t$.
\end{assum}

\begin{assum}[Parameter space and projection]\label{ass:param-space}
$\Theta \subset \mathbb R^d$ is compact and convex, and $\theta^\ast \in \mathrm{int}(\Theta)$.
The online recursion is~\eqref{eq:snfp}.
\end{assum}

\begin{assum}[Identification and rank]\label{ass:identification}
$\theta^\ast$ is the unique solution of $E[g(O;\theta,P^\ast)] = 0$ in $\Theta$.
$G^\ast$ has full column rank $d$ and the population weighting matrix $W$ is symmetric positive definite.
Consequently $G^{*\top}WG^\ast$ is invertible and $A^\ast G^\ast = I_d$.
\end{assum}

\begin{assum}[Rate conditions]\label{ass:rates}
\hspace{0pt}
\begin{enumerate}
\item[(i)] $\gamma_t = \gamma_0\, t^{-a}$ with $\gamma_0 > 0$ and $a \in (1/2, 1)$.
\item[(ii)] $R = R(T) \to \infty$ as $T \to \infty$.
\item[(iii)] $T_0 = T_0(T) \to \infty$ and $T_0/T \to 0$ as $T \to \infty$.
\end{enumerate}
\end{assum}

\begin{assum}[Regular markets]\label{ass:regular-markets}
There exist constants $\bar J < \infty$, $C_0 < \infty$, and $\underline s > 0$ such that, almost surely,
the following hold.
(i) $J_t \le \bar J$.
(ii) $\max_{1 \le j \le J_t}\bigl(\|x_{jt}\| \vee \|z_{jt}\|\bigr) \le C_0$.
(iii) $\min_{0 \le j \le J_t} S_{jt} \ge \underline s$, where $S_{0t}$ denotes the market share of the outside
good.
\end{assum}

\begin{assum}[Local Lyapunov drift]\label{ass:drift}
There exist $\delta_0 > 0$, $C_1, C > 0$ such that, with the population mean moment $\bar m(\theta,P^\ast) := E[g(O;\theta,P^\ast)]$,
\begin{align*}
  \|\theta - \theta^\ast\| \ge \delta_0 &\;\Longrightarrow\;
   (\theta - \theta^\ast)^\top A^\ast \bar m(\theta,P^\ast) \ge C_1, \\
  \|\theta - \theta^\ast\| \le \delta_0 &\;\Longrightarrow\;
   (\theta - \theta^\ast)^\top A^\ast \bar m(\theta,P^\ast) \ge C\|\theta - \theta^\ast\|^2.
\end{align*}
Together with compactness of $\Theta$ (Assumption~\ref{ass:param-space}), there exists $\tilde C > 0$ such that
\[
(\theta - \theta^\ast)^\top A^\ast \bar m(\theta,P^\ast) \ge \tilde C\|\theta - \theta^\ast\|^2
\quad \text{for all } \theta \in \Theta.
\]
The map $\bar m(\cdot,P^\ast)$ is $L_G$-Lipschitz on $\Theta$ in $\theta$.
\end{assum}

We state the pilot conditions in terms of the measures that generate the pilot quantities. Let
$P^R := R^{-1}\sum_{r=1}^R\delta_{\nu_r^{\mathrm{pilot}}}$ be the empirical measure of the pilot draws,
$P_w^{T_0} := T_0^{-1}\sum_{t=1}^{T_0}\delta_{O_t}$ the empirical measure of the pilot markets, and $P_w^\ast$ the
distribution of $O$. For a parameter $\theta$, a simulation measure $P$, and a data measure $P_w$, let
$\Phi(\theta,P,P_w)$ and $W(\theta,P,P_w)$ be the pilot Jacobian and weighting matrices of
Section~\ref{section:algorithm}, with the sample averages over pilot markets replaced by integrals against $P_w$ and
the pilot draws replaced by $P$, and let $A(\theta,P,P_w) := (\Phi^\top W\Phi)^{\dagger}\Phi^\top W$. Then
$\hat\Phi_{T_0} = \Phi(\hat\theta_{T_0},P^R,P_w^{T_0})$, $\hat W_{T_0} = W(\hat\theta_{T_0},P^R,P_w^{T_0})$, and
$\hat A_{T_0} = A(\hat\theta_{T_0},P^R,P_w^{T_0})$.

\begin{assum}[Pilot regularity]\label{ass:pilot}
There exist matrix functions $\Phi^\ast$ and $W^\ast$ on $\Theta$, a neighborhood $N(\theta^\ast)$ of
$\theta^\ast$, and a constant $c > 0$ such that the following hold, with limits taken as $T_0, R \to \infty$.
\begin{enumerate}
\item[(i)] $\sup_{\theta\in \rev{N(\theta^\ast)}}\|\Phi(\theta,P^R,P_w^{T_0}) - \Phi^\ast(\theta)\| = o_p(1) $,
  and likewise for $W$ and $W^\ast$.
\item[(ii)] $\Phi^\ast(\theta^\ast) = G^\ast$ and $W^\ast(\theta^\ast) = W$, the weighting matrix in $A^\ast$.
\item[(iii)] $\inf_{\theta\in N(\theta^\ast)}\lambda_{\min}\bigl(\Phi^{\ast\top}(\theta) W^\ast(\theta)
  \Phi^\ast(\theta)\bigr) \ge c$, and $\Phi^\ast$ and $W^\ast$ are Lipschitz on $N(\theta^\ast)$.
\item[(iv)] $\hat\theta_{T_0} \pto \theta^\ast$.
\end{enumerate}
\end{assum}

Assumptions~\ref{ass:sampling}--\ref{ass:identification} are standard GMM conditions. The i.i.d.\ markets
ensure that the oracle moments at
the truth form an i.i.d.\ sequence (the key CLT input), the compact convex parameter space supports the projection
in~\eqref{eq:snfp}, and the interior identification with full-rank Jacobian guarantees that $A^\ast G^\ast = I_d$.

The rate conditions in Assumption~\ref{ass:rates} are mild. The exponent $a \in (1/2,1)$ places us in the
Robbins--Monro regime
needed for Polyak--Ruppert averaging, while $T_0 \to \infty$ with $T_0 = o(T)$ trades off pilot accuracy (so that
$\hat\theta_{T_0}$ and $\hat A_{T_0}$ concentrate) against sample negligibility (so that discarding the pilot
markets from the average is innocuous). For the number of simulation draws $R$, consistency requires only $R \to \infty$ in
Assumption~\ref{ass:rates}(ii). Asymptotic normality at the $\sqrt T$ scale additionally requires $\sqrt T / R \to 0$\rev{, as for the full-sample
estimator with fresh per-market draws \citep{freyberger2015asymptotic}}. In the polynomial parametrization
$R = T^{b_1}$ and
$T_0 = T^{b_2}$, this amounts to $b_1 > 1/2$ with $b_2 \in (0,1)$ unrestricted. The requirement is driven solely by
the simulation \emph{bias}. Because the mean utility $\tilde\delta_t$ is a nonlinear function of the
simulated
shares, the simulated moment is biased of order $O(R^{-1})$ (Lemma~\ref{lem:sim-error}), and this bias enters the
Polyak--Ruppert average multiplied by $\sqrt T$. The simulation \emph{variance}, by contrast, contributes only
$O_p(R^{-1/2})$ to the scaled average and vanishes for any divergent $R$. This contrasts with the method of
simulated moments with moments linear in the simulator \citep{mcfadden1989method,pakes1989simulation}, where a
fixed $R$ leaves the estimator consistent and merely inflates the asymptotic variance by the factor $1 + R^{-1}$.
These conditions presume a fresh block for each market. Section~\ref{subsection:common_draws} treats \rev{a
common
block shared by all markets}, under which both the rate requirement on $R$ and the limiting variance change.

Assumption~\ref{ass:regular-markets} imposes primitive regularity on the markets. The number of products, the
product
characteristics, and the instruments are bounded, and the observed shares, including that of the outside good, are
bounded away from zero. No high-level conditions on the moment function are required. Because the mixed-logit
share map is smooth in $(\delta,\theta)$ and \emph{linear} in the simulation measure, and because the simulated
shares coincide with the observed shares at the fixed point of the BLP inversion, Assumption~\ref{ass:regular-markets} implies
that $g(O;\theta,P)$ and $\nabla_\theta g(O;\theta,P)$ are uniformly bounded and Lipschitz in $\theta$, and that,
for $h \in \{g, \nabla_\theta g\}$,
\begin{align*}
  & \bigl\|E[h(O;\theta,P_t^R)] - E[h(O;\theta,P^\ast)]\bigr\| = O(R^{-1}), \\
  & E\bigl\|h(O;\theta,P_t^R) - h(O;\theta,P^\ast)\bigr\|^2 = O(R^{-1}),
\end{align*}
uniformly in $\theta \in \Theta$. The first line gives a sharp $O(R^{-1})$ control on the deterministic
simulation bias, which arises because the mean utility is a nonlinear function of the simulated shares. The
second gives the $O(R^{-1})$ mean-square deviation of the simulated moment from its oracle counterpart, which
holds because the simulated shares average $R$ i.i.d.\ draws with conditional mean equal to the exact shares
given $O$ (Assumption~\ref{ass:sampling}). These properties are recorded in Lemmas~\ref{lem:blp-regularity} and
\ref{lem:sim-error} in the appendix. Analogous expansions of the BLP inversion in the simulation error are
developed by \citet{berry2004limit} and \citet{freyberger2015asymptotic}.

Assumption~\ref{ass:drift} is the key stability condition. The scaled population mean moment
$A^\ast \bar m(\theta,P^\ast)$ always
points toward $\theta^\ast$, with quadratic drift locally and a uniform drift globally that lifts to a quadratic drift
on all of $\Theta$ via compactness.

Assumption~\ref{ass:pilot} collects the regularity conditions for the pilot phase to be consistent. \rev{By Assumption~\ref{ass:pilot}(i) and (iv), $\hat\Phi_{T_0} = \Phi^\ast(\hat\theta_{T_0}) + o_p(1)$ and $\hat W_{T_0} = W^\ast(\hat\theta_{T_0}) + o_p(1)$. Continuity of $\Phi^\ast$ and $W^\ast$ at $\theta^\ast$, (ii), and the eigenvalue bound in (iii) then give $\hat A_{T_0} \pto A^\ast$ as $T_0, R \to \infty$ by the continuous mapping theorem.} 

We are now ready to state the main results. The first theorem establishes consistency of both the oracle and the feasible SNFP sequences, together with their Polyak--Ruppert averages.

\begin{theorem}[Consistency]\label{thm:consistency-snfp}
Under Assumptions~\ref{ass:sampling}--\ref{ass:pilot}, as $T \to \infty$,
\[
  \theta_T^o \pto \theta^\ast, \qquad \bar\theta_T^o \pto \theta^\ast, \qquad
  \theta_T \pto \theta^\ast, \qquad \bar\theta_T \pto \theta^\ast.
\]
\end{theorem}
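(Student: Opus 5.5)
We prove the statement first for the oracle recursion~\eqref{eq:snfp-oracle} through a Lyapunov/Robbins--Siegmund argument, and then pass to the feasible recursion~\eqref{eq:snfp} by treating the simulation error as a vanishing perturbation. All arguments are conditional on the pilot $\sigma$-field. Since $\hat A_{T_0}\pto A^\ast$ by Assumption~\ref{ass:pilot}, for any $\epsilon>0$ we may work on the event $E_\epsilon:=\{\|\hat A_{T_0}-A^\ast\|\le\epsilon\}$, whose probability tends to one. Put $V_t:=\|\theta_t^o-\theta^\ast\|^2$ and write $\Delta A:=\hat A_{T_0}-A^\ast$.

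\textbf{One-step bound for the oracle.} The projection $\Pi_\Theta$ is nonexpansive and $\theta^\ast\in\Theta$, so
\[
V_t\le V_{t-1}-2\gamma_t(\theta_{t-1}^o-\theta^\ast)^\top \hat A_{T_0}\, g(O_t;\theta^o_{t-1},P^\ast)+\gamma_t^2\|\hat A_{T_0}\|^2 \bar g^2,
\]
where $\bar g$ is the uniform bound on $g$ supplied by Lemma~\ref{lem:blp-regularity}. We take conditional expectations given $\mathcal F_{t-1}$. Because $O_t$ is independent of $\mathcal F_{t-1}$ (Assumption~\ref{ass:sampling}), the cross term becomes $(\theta-\theta^\ast)^\top\hat A_{T_0}\bar m(\theta,P^\ast)$ evaluated at $\theta=\theta^o_{t-1}$. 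We split $\hat A_{T_0}=A^\ast+\Delta A$. Assumption~\ref{ass:drift} bounds the $A^\ast$ part below by $\tilde C V_{t-1}$. For the $\Delta A$ part we use $\bar m(\theta^\ast,P^\ast)=0$ and the Lipschitz property to get $|(\theta-\theta^\ast)^\top \Delta A\,\bar m|\le \epsilon L_G\|\theta-\theta^\ast\|^2$. Choosing $\epsilon<\tilde C/(2L_G)$ then gives
\[
E[V_t\mid\mathcal F_{t-1}]\le (1-\tilde C\gamma_t)V_{t-1}+C'\gamma_t^2 .
\]
Since $\sum\gamma_t=\infty$ and $\sum\gamma_t^2<\infty$ (recall $a\in(1/2,1)$), the Robbins--Siegmund theorem yields $V_t\to0$ almost surely on $E_\epsilon$. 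Iterating the bound also gives $E[V_T\mid E_\epsilon]=O(\gamma_T)+O\bigl(\exp(-\tilde C\sum_{s\le T}\gamma_s)\bigr)\to 0$. The only point requiring care is that the recursion starts at $T_0\to\infty$. To handle this we use boundedness of $\Theta$ and the explicit product $\prod_{s=T_0+1}^{T}(1-\tilde C\gamma_s)\le\exp\bigl(-\tilde C\sum_{s=T_0+1}^{T}\gamma_s\bigr)\to0$, which holds because $T_0/T\to0$. This establishes $\theta_T^o\pto\theta^\ast$. For the average, the bound $E V_t\to0$ uniformly over $t\ge T_0+t_1$ together with Cesàro averaging gives $E\|\bar\theta^o_T-\theta^\ast\|\le (T-T_0)^{-1}\sum_{t}(EV_t)^{1/2}\to0$.

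\textbf{Feasible recursion.} We repeat the same computation with $g(O_t;\theta_{t-1},P_t^R)$ in place of the oracle moment. Since the block $P_t^R$ is fresh and independent of $\mathcal F_{t-1}$, the conditional mean of the cross term equals $(\theta-\theta^\ast)^\top\hat A_{T_0}E[g(O;\theta,P^R_t)]$ at $\theta=\theta_{t-1}$. This differs from the oracle drift by at most $\mathrm{diam}(\Theta)\|\hat A_{T_0}\|\,O(R^{-1})$, uniformly in $\theta$, by the bias bound in Lemma~\ref{lem:sim-error}. The resulting recursion is
\[
E[V_t\mid\mathcal F_{t-1}]\le(1-\tilde C\gamma_t)V_{t-1}+C'\gamma_t^2+C''\gamma_t R^{-1}.
\]
Solving it, the additional term contributes $O(R^{-1})$ to $E V_T$, which vanishes by Assumption~\ref{ass:rates}(ii). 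Hence $\theta_T\pto\theta^\ast$, and the same Cesàro argument as before gives $\bar\theta_T\pto\theta^\ast$.

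\textbf{Expected main obstacle.} The step most likely to need care is the interplay between the random, $T$-dependent pilot quantities and the triangular-array structure in which $R=R(T)$ and $T_0=T_0(T)$. Almost-sure Robbins--Siegmund convergence does not apply directly to a triangular array. For that reason the plan works with the explicit mean-square recursion, conditional on the pilot and restricted to the event $E_\epsilon$, whose constants depend only on $\epsilon$ and on the uniform bounds of Lemma~\ref{lem:blp-regularity}. The conclusion then follows from $P(E_\epsilon)\to1$.
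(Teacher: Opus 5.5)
Your argument is essentially the paper's: it uses the same nonexpansive-projection Lyapunov recursion, the split $\hat A_{T_0}=A^\ast+\Delta A$ controlled by the drift and Lipschitz bounds, the deterministic recursion bound (Lemma~\ref{lem:recursion-bound}), and Jensen plus Ces\`{a}ro for the averages. Your linear treatment of the simulation bias leaves an $O(R^{-1})$ remainder instead of the paper's $O(R^{-2})$, which the paper obtains via Young's inequality, but $O(R^{-1})$ is still enough for consistency. One step needs tightening: if you only use $V_{T_0}\le\mathrm{diam}(\Theta)^2$, the transient $\exp\bigl(-c(\Gamma_t-\Gamma_{T_0})\bigr)$ is not uniformly small over $t\ge T_0+t_1$ for fixed $t_1$, so your ``uniformly'' claim does not follow as stated; fix it by invoking pilot consistency to get $E[V_{T_0}\mathbf 1_{E_\epsilon}]\to 0$, by using a burn-in $t_1=T_0=o(T)$, or by bounding the Ces\`{a}ro sum of the transient by $O(T^{a})$ as the paper does.
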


The projection, rather than the compactness of $\Theta$, is the fundamental device in the convergence proofs. Its
nonexpansiveness, $\|\Pi_\Theta(x) - \theta^\ast\| \le \|x - \theta^\ast\|$, is the only property of the constraint
used in the Lyapunov recursion behind Theorem~\ref{thm:consistency-snfp} and in the coupling recursion behind
Theorem~\ref{thm:negligible-sim}. It confines the iterates to the set on which the drift and moment conditions
are imposed, and it is asymptotically innocuous, since by Lemma~\ref{lem:projection} it binds at most finitely
many times along the oracle path and, with probability approaching one, never during the online phase. Compactness of
$\Theta$ is a simplification. With the projection retained the proofs extend to an unbounded parameter space
provided the quadratic drift holds globally and the moment bounds grow at most linearly in
$\|\theta - \theta^\ast\|$, as is standard \citep{robbins1951stochastic,polyak1992acceleration}.

The next theorem gives the limit distribution of the oracle average. It is the analogue, in the streaming
large-$T$ setting, of the asymptotic normality of the efficient SLIM estimator in \citet{chen2026slim}. With
$R = \infty$ there is no simulation error, and the only departures from the classical averaged stochastic
approximation of \citet{polyak1992acceleration} and \citet{ruppert1988efficient} are the estimated scaling matrix
$\hat A_{T_0}$ and the projection, both of which are shown to be asymptotically innocuous.

\begin{theorem}[Oracle asymptotic normality]\label{thm:oracle-clt}
Under Assumptions~\ref{ass:sampling}--\ref{ass:pilot}, as $T \to \infty$,
\[
  \sqrt T\,(\bar\theta_T^o - \theta^\ast) \;\xrightarrow{d}\; N(0,\Sigma_\theta).
\]
\end{theorem}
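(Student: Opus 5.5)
The plan follows the Polyak--Juditsky decomposition, adapted to a fixed estimated scaling matrix and a projection. First, by Lemma~\ref{lem:projection} and Theorem~\ref{thm:consistency-snfp}, with probability approaching one the projection never binds during the online phase, so on that event
\[
\theta_t^o-\theta^\ast=\theta_{t-1}^o-\theta^\ast-\gamma_t\hat A_{T_0}\,g(O_t;\theta_{t-1}^o,P^\ast).
\]
We work on that event throughout. Write $g(O_t;\theta,P^\ast)=\bar m(\theta,P^\ast)+\varepsilon_t(\theta)$, where $\varepsilon_t(\theta)$ is a martingale difference with respect to $\mathcal F_t$ (markets are i.i.d.\ and independent of the pilot quantities). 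Expanding $\bar m(\theta,P^\ast)=G^\ast(\theta-\theta^\ast)+r(\theta)$ with $\|r(\theta)\|\le C\|\theta-\theta^\ast\|^2$ near $\theta^\ast$ uses the Lipschitz Jacobian implied by Lemma~\ref{lem:blp-regularity}. Set $\Delta_t=\theta_t^o-\theta^\ast$ and $H=\hat A_{T_0}G^\ast$; note that $H\pto A^\ast G^\ast=I_d$. The recursion then becomes
\[
\Delta_t=(I-\gamma_t H)\Delta_{t-1}-\gamma_t\hat A_{T_0}\bigl[\varepsilon_t(\theta^\ast)+(\varepsilon_t(\theta_{t-1}^o)-\varepsilon_t(\theta^\ast))+r(\theta_{t-1}^o)\bigr].
\]

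The second step is an $L^2$ rate for the raw iterates: $E\|\Delta_t\|^2\mathbf 1\{\mathcal E\}=O(\gamma_t)$, where $\mathcal E$ is a high-probability event on which $\|\hat A_{T_0}-A^\ast\|$ is small. The argument is a Lyapunov recursion using the quadratic drift of Assumption~\ref{ass:drift} with $A^\ast$, with $\hat A_{T_0}$ replacing $A^\ast$ at a cost absorbed by smallness on $\mathcal E$, bounded moments, and a standard Chung-type lemma for $a\in(1/2,1)$. The fixed $\hat A_{T_0}$ depends only on the pilot data, so we condition on $\mathcal F_{T_0}$ and treat it as deterministic.

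The third step is the standard summation-by-parts argument. Rewrite the recursion as $H\Delta_{t-1}=(\Delta_{t-1}-\Delta_t)/\gamma_t-\hat A_{T_0}[\cdots]$, sum over $t=T_0+1,\dots,T$, and divide by $\sqrt T$. This splits $\sqrt T H(\bar\theta^o_T-\theta^\ast)$ into four terms.
\begin{enumerate}
\item[(a)] The leading term $-T^{-1/2}\hat A_{T_0}\sum_t\varepsilon_t(\theta^\ast)$. By the i.i.d.\ CLT and $T_0/T\to0$, it converges to $N(0,A^\ast\Omega A^{*\top})$.
\item[(b)] The telescoping term $T^{-1/2}\sum_t(\Delta_{t-1}-\Delta_t)/\gamma_t$. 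It is bounded via Abel summation by $T^{-1/2}\bigl(\|\Delta_{T_0}\|/\gamma_{T_0}+\|\Delta_T\|/\gamma_T+\sum_t\|\Delta_t\||\gamma_{t+1}^{-1}-\gamma_t^{-1}|\bigr)$. The interior sum and the $\Delta_T$ term are $O_p(T^{a/2-1/2})\to0$.
\item[(c)] The martingale remainder $T^{-1/2}\sum_t(\varepsilon_t(\theta_{t-1}^o)-\varepsilon_t(\theta^\ast))$. Its variance is at most $T^{-1}\sum_t E\|\Delta_{t-1}\|^2=O(T^{-a})$, by Lipschitz continuity of $g$.
\item[(d)] The curvature term $T^{-1/2}\sum_t\|r(\theta^o_{t-1})\|\le T^{-1/2}\sum_t\|\Delta_{t-1}\|^2=O_p(T^{1/2-a})\to0$, which uses $a>1/2$.
\end{enumerate}
Since $H\pto I_d$, Slutsky's lemma then gives the claimed limit with $\Sigma_\theta=A^\ast\Omega A^{*\top}$.

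\textbf{Main obstacle.} The delicate term is the initial boundary term $\|\Delta_{T_0}\|/(\sqrt T\gamma_{T_0})\asymp \|\hat\theta_{T_0}-\theta^\ast\|T_0^{a}/\sqrt T$, together with the transient in which the error from a possibly slow pilot decays. Assumption~\ref{ass:pilot}(iv) gives only consistency and places no rate on $T_0$. I would handle this by not using the crude Abel bound at $t=T_0$. Instead, I would solve the linear recursion explicitly: the contribution of the initial error to $\Delta_t$ is $\prod_{s=T_0+1}^t(I-\gamma_sH)\Delta_{T_0}$, which decays like $\exp(-c\sum_s\gamma_s)$. Its average over $t$ is $O_p(\|\Delta_{T_0}\|T_0^{a}/T)$, and after scaling by $\sqrt T$ this is $o_p(1)$ provided $T_0^{a}=o(\sqrt T)$. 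If that fails, I would instead use the burn-in fact that after $O(T_0)$ online steps the iterate has already reached the $\sqrt{\gamma_t}$ regime. That would make the effective start time comparable to $T_0$, so that $T_0/T\to0$ suffices. Checking this last point carefully is the step I expect to require the most work.
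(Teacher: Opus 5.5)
\emph{Verdict: there is a genuine gap.} The initial-condition term you flag cannot be made to vanish under Assumptions 1--7 alone. Your second remedy fails for $\bar\theta^o_T$ as defined, and your first remedy amounts to an extra assumption.

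\emph{Where the gap is.} Your decomposition (a)--(d) is the Polyak--Juditsky argument written out by hand. The paper takes the same route in black-box form: it conditions on $\mathcal F_{T_0}$ on $E_0$, cites Theorem~2 of \citet{polyak1992acceleration} with gain $\hat A_{T_0}$, and then removes the conditioning via $\Sigma(\hat A_{T_0})\pto\Sigma_\theta$. Terms (a) and (c), the $\Delta_T$ boundary term, and the stationary parts of (b) and (d) go through. (For the no-projection event, use the oracle version of Lemma~\ref{lem:projection}(ii), i.e.\ Lemma~\ref{lem:L2_rate}(ii) applied to $V_t^o$. Part (i) only gives finitely many bindings almost surely.)

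The trouble starts in your step two. The bound $E\|\Delta_t\|^2\mathbf 1\{\mathcal E\}=O(\gamma_t)$ holds only after the transient. The correct bound is $C\gamma_t+C\|\Delta_{T_0}\|^2e^{-c(\Gamma_t-\Gamma_{T_0})}$ (Lemma~\ref{lem:L2_rate}(i)). The transient then enters $\sqrt T(\bar\theta_T^o-\theta^\ast)$ at order $T_0^{a}\|\hat\theta_{T_0}-\theta^\ast\|/\sqrt T$.

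\emph{This order is sharp.} Solving the linear recursion explicitly cannot beat the crude Abel bound. Since $\Gamma_t-\Gamma_{T_0}\le\gamma_{T_0}(t-T_0)$, the matrix $\sum_{t>T_0}\prod_{s=T_0+1}^{t}(I-\gamma_sH)$ is approximately $(\gamma_{T_0}H)^{-1}$, of exact order $T_0^a$. Conditionally on $\mathcal F_{T_0}$, the average therefore carries the shift $T^{-1/2}(\gamma_{T_0}H)^{-1}(\hat\theta_{T_0}-\theta^\ast)(1+o_p(1))$. The conditionally mean-zero noise cannot cancel it. Assumptions 1--7 give only pilot consistency and $T_0=o(T)$, so this shift need not vanish. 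With the shared pilot block, the pilot error is typically of exact order $T_0^{-1/2}+R^{-1/2}$ (cf.\ Lemma~\ref{lem:pseudo-true}). For example, $T_0=T^{0.9}$, $a=0.9$ and $R=\log T$ give $T_0^aR^{-1/2}/\sqrt T\to\infty$.

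\emph{Your two remedies.} The first is an additional hypothesis, $T_0^{a}\|\hat\theta_{T_0}-\theta^\ast\|=o_p(\sqrt T)$. It is implied by $T_0^a=O(\sqrt T)$. It is also implied by $\|\hat\theta_{T_0}-\theta^\ast\|=O_p(T_0^{-1/2}+R^{-1/2})$ together with $T_0^{2a}=o(RT)$; the $T_0^{-1/2}$ part is harmless because $T_0^{a-1/2}/\sqrt T\le(T_0/T)^{1/2}\to0$. Under this condition your argument closes. The transient pieces of (b), of order $\|\Delta_{T_0}\|T_0^{2a-1}/\sqrt T$, and of (d), of order $\|\Delta_{T_0}\|^2T_0^a/\sqrt T$, then vanish as well.

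The second remedy fails for $\bar\theta_T^o$ as defined. The shift accumulates during the first $\asymp T_0^a$ online steps, and those iterates are inside the average. Reaching the $\sqrt{\gamma_t}$ regime after $O(T_0)$ steps removes nothing. The burn-in idea does work for a modified estimator that starts averaging at, say, $2T_0$. Then $\Gamma_{2T_0}-\Gamma_{T_0}\asymp T_0^{1-a}\to\infty$, and the boundary term becomes $O_p(\sqrt{\gamma_{2T_0}}\,T_0^a/\sqrt T)=O_p(T_0^{a/2}/\sqrt T)\to0$.

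\emph{The paper does not fill this gap either.} Theorem~2 of \citet{polyak1992acceleration} is stated for a fixed starting time and a fixed step sequence. The paper asserts uniformity only over the gain $\hat A_{T_0}$ on $E_0$, not over the $T$-dependent start $T_0$ and start error $\hat\theta_{T_0}-\theta^\ast$, so this term is implicitly assumed away. Closing the proof requires either the extra rate condition or the burn-in modification.
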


The third theorem is a key result. The approximation error due to the finite number of simulation draws
is
negligible at the $\sqrt T$ scale once $R$ grows faster than $\sqrt T$.

\begin{theorem}[Negligibility of the simulation error]\label{thm:negligible-sim}
Under Assumptions~\ref{ass:sampling}--\ref{ass:pilot}, if in addition $\sqrt T / R \to 0$, then, as $T \to \infty$,
\[
  \sqrt T\,(\bar\theta_T - \bar\theta_T^o) \pto 0.
\]
\end{theorem}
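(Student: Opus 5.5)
We couple the feasible recursion~\eqref{eq:snfp} and the oracle recursion~\eqref{eq:snfp-oracle}, which share the same markets, the same pilot quantities and the same step sizes, and we study the difference $D_t := \theta_t - \theta_t^o$ with $D_{T_0}=0$. Write $A:=\hat A_{T_0}$. The first step is to decompose the simulated moment at $\theta_{t-1}$ as
\[
g(O_t;\theta_{t-1},P_t^R) = g(O_t;\theta_{t-1},P^\ast) + b_R(\theta_{t-1}) + u_t^R(\theta_{t-1}),
\]
where $b_R(\theta):=E[g(O;\theta,P_t^R)]-E[g(O;\theta,P^\ast)]$ is the deterministic simulation bias. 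By Lemma~\ref{lem:sim-error} it satisfies $\sup_\theta\|b_R(\theta)\|=O(R^{-1})$. The remainder $u_t^R(\theta_{t-1})$ is a martingale difference with respect to $\mathcal F_{t-1}$, because the block $\{\nu_{r,t}\}$ and $O_t$ are fresh and independent of $\mathcal F_{t-1}$ (Assumption~\ref{ass:sampling}). Its second moment satisfies $E\|u_t^R\|^2=O(R^{-1})$, again by Lemma~\ref{lem:sim-error}.

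The second step derives a mean-square recursion for $D_t$. Using the nonexpansiveness of $\Pi_\Theta$, we have
\[
\|D_t\| \le \bigl\|D_{t-1} - \gamma_t A\{g(O_t;\theta_{t-1},P^\ast)-g(O_t;\theta_{t-1}^o,P^\ast)\} - \gamma_t A(b_R+u_t^R)\bigr\|.
\]
Squaring and taking conditional expectations, the cross term involving the oracle-moment difference has conditional mean $(D_{t-1})^\top A\{\bar m(\theta_{t-1})-\bar m(\theta^o_{t-1})\}$. For this term we need a contraction. Near $\theta^\ast$ it follows from the mean value theorem, $\bar m(\theta_{t-1})-\bar m(\theta_{t-1}^o)=\bar G D_{t-1}$ with $\bar G\approx G^\ast$, combined with $A\to A^\ast$ and $A^\ast G^\ast=I_d$. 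Theorem~\ref{thm:consistency-snfp} places both sequences in a small neighbourhood of $\theta^\ast$ with high probability after some random time, and we truncate on that event. This yields
\[
E\|D_t\|^2 \le (1-c\gamma_t)E\|D_{t-1}\|^2 + C\gamma_t^2 E\|D_{t-1}\|^2 + C\gamma_t R^{-1}\|D_{t-1}\| + C\gamma_t^2R^{-1}.
\]
Standard Chung-type lemmas then give $E\|D_t\|^2 = O(\gamma_t R^{-1} + R^{-2})$.

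The third step, which we expect to be the main obstacle, is the averaging. A crude bound $\sqrt T\cdot\max_t\|D_t\|$ is insufficient, since it only gives $\sqrt T(\gamma_T^{1/2}R^{-1/2}+R^{-1})$. Instead we follow the Polyak--Ruppert argument as in the proof of Theorem~\ref{thm:oracle-clt}. We linearize,
\[
D_t = (I-\gamma_t A\bar G_t)D_{t-1} - \gamma_t A(b_R+u_t^R) + \gamma_t A\,\zeta_t,
\]
where $\zeta_t$ collects the martingale noise difference $g(O_t;\theta_{t-1},P^\ast)-g(O_t;\theta^o_{t-1},P^\ast)-(\bar m(\theta_{t-1})-\bar m(\theta^o_{t-1}))$, whose conditional variance is $O(\|D_{t-1}\|^2)$ by the Lipschitz property of Lemma~\ref{lem:blp-regularity}. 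Inverting this relation gives
\[
A^\ast G^\ast D_{t-1}=\gamma_t^{-1}(D_{t-1}-D_t)+\text{remainders}.
\]
We then sum and apply summation by parts:
\[
\frac{1}{\sqrt T}\sum_t D_{t-1} = \frac{1}{\sqrt T}\sum_t \gamma_t^{-1}(D_{t-1}-D_t) - \frac{1}{\sqrt T}\sum_t A(b_R+u_t^R)+\cdots.
\]
Each resulting term is controlled separately.
\begin{itemize}
\item The bias term is $\sqrt T\,O(R^{-1})\to0$, which is exactly where $\sqrt T/R\to0$ is used.
\item The martingale term $T^{-1/2}\sum u_t^R$ has variance $O(R^{-1})\to0$.
\item The summation-by-parts term is bounded by $T^{-1/2}\bigl(\gamma_T^{-1}\|D_T\|+\sum_t|\gamma_t^{-1}-\gamma_{t-1}^{-1}|\,\|D_t\|\bigr)$. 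Since $\gamma_T^{-1}T^{-1/2}=T^{a-1/2}$, $\|D_T\|=O_p(T^{-a/2}R^{-1/2}+R^{-1})$ and $a<1$, this is $o_p(1)$ whenever $\sqrt T/R\to0$.
\item The martingale term in $\zeta_t$ has variance $T^{-1}\sum E\|D_{t-1}\|^2\to0$.
\item The linearization remainder $O(\|D_{t-1}\|(\|\theta_{t-1}-\theta^\ast\|+\|D_{t-1}\|))$ is handled using the oracle rate $E\|\theta_t^o-\theta^\ast\|^2=O(\gamma_t)$ from the proof of Theorem~\ref{thm:oracle-clt}.
\item Replacing $A$ by $A^\ast$ costs $\|A-A^\ast\|$ times an $O_p(1)$ term.
\end{itemize}
Finally, by Lemma~\ref{lem:projection} the projection is inactive with probability approaching one, so the unprojected linearization is valid on that event, and $G^\ast$ is invertible on the relevant range. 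Together these give $\sqrt T(\bar\theta_T-\bar\theta_T^o)\pto0$.
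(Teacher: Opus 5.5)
Your proof is correct in substance, and its second step is the paper's own key estimate. Lemma~\ref{lem:L2_rate}(iii) is exactly a mean-square recursion for the (stopped) difference $e_t=\theta_t-\theta_t^o$, started from $e_{T_0}=0$, and it gives $E\|e_t\|^2\le C(R^{-2}+\gamma_tR^{-1})$. What you call $b_R$ is the paper's $\beta_R$; the paper reserves $b_R$ for the random common-draw bias of Section~\ref{subsection:common_draws}.

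You diverge from the paper in the third step, and the reason you give for it is a miscalculation. The paper finishes in two lines. By the triangle inequality and Jensen, on the good event,
$E\bigl[\sqrt T\,\|\bar\theta_T-\bar\theta_T^o\|\bigr]\le \frac{C\sqrt T}{T-T_0}\sum_{t}\bigl(R^{-1}+\gamma_t^{1/2}R^{-1/2}\bigr)\le C\bigl(\sqrt T/R+T^{(1-a)/2}R^{-1/2}\bigr)$.
The second term vanishes because $T^{1-a}/R=(\sqrt T/R)\,T^{1/2-a}\to0$ for $a>1/2$. So the quantity $\sqrt T(\gamma_T^{1/2}R^{-1/2}+R^{-1})$ that you call insufficient in fact tends to zero. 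The right crude bound averages the $L^1$ bounds rather than taking a maximum, since $\gamma_t$ is largest near $T_0$.

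Your Polyak--Ruppert summation by parts on the difference process does go through: each term in your list is $o_p(1)$ under $\sqrt T/R\to0$. For this theorem, though, it only reproduces what the direct average already gives. What it can buy is an expansion rather than a bound. It identifies the leading simulation contribution as $-\sqrt T\,A^\ast\beta_R(\theta^\ast)$ to first order. You would need that if $\sqrt T/R$ tended to a positive constant and you wanted the resulting asymptotic bias rather than its negligibility.

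One point in your second step needs tightening. Theorem~\ref{thm:consistency-snfp} gives convergence in probability at each $T$. It does not give containment of both paths in $N_0$ simultaneously for all $T_0<t\le T$. Truncating ``after some random time'' also does not work, because the difference recursion must run from $e_{T_0}=0$. The paper gets the needed containment from the maximal inequality of Lemma~\ref{lem:L2_rate}(ii), which is Ville's inequality for a nonnegative supermartingale. That inequality yields $\mathbb P(\tau\le T)\to0$ for the exit time $\tau$ from $N_0$, and the paper then works with $\tilde e_t=e_t\mathbf 1\{\tau>t\}$. The same bound also underlies Lemma~\ref{lem:projection}(ii), which you invoke in your third step.
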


Combining Theorems~\ref{thm:oracle-clt} and \ref{thm:negligible-sim} yields the limit distribution of the feasible
SNFP estimator.

\begin{corollary}[Asymptotic normality of $\bar\theta_T$]\label{thm:asymptotic_dist}
Under Assumptions~\ref{ass:sampling}--\ref{ass:pilot} and $\sqrt T/R \to 0$, as $T \to \infty$,
\[
  \sqrt T\,(\bar\theta_T - \theta^\ast) \;\xrightarrow{d}\; N(0,\Sigma_\theta),
\]
where $\Sigma_\theta = (G^{*\top}WG^\ast)^{-1}G^{*\top}W\,\Omega\,WG^\ast(G^{*\top}WG^\ast)^{-1}$ is the classical
GMM sandwich variance. If, in addition, $W = \Omega^{-1}$, then
$\Sigma_\theta = (G^{*\top}\Omega^{-1}G^\ast)^{-1}$, the efficient GMM variance.
\end{corollary}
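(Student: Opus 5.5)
The corollary follows by combining Theorems~\ref{thm:oracle-clt} and \ref{thm:negligible-sim} through Slutsky's lemma, and then simplifying $\Sigma_\theta$ algebraically. First decompose
\[
  \sqrt T\,(\bar\theta_T - \theta^\ast) = \sqrt T\,(\bar\theta_T^o - \theta^\ast) + \sqrt T\,(\bar\theta_T - \bar\theta_T^o).
\]
The hypotheses of the corollary are Assumptions~\ref{ass:sampling}--\ref{ass:pilot} together with $\sqrt T/R\to 0$. These are exactly the hypotheses of both theorems, so nothing new needs to be checked.

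By Theorem~\ref{thm:oracle-clt}, the first term converges in distribution to $N(0,\Sigma_\theta)$. By Theorem~\ref{thm:negligible-sim}, the second term is $o_p(1)$. Slutsky's lemma then gives $\sqrt T\,(\bar\theta_T-\theta^\ast)\xrightarrow{d}N(0,\Sigma_\theta)$.

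Next, identify $\Sigma_\theta$ explicitly. By definition $\Sigma_\theta = A^\ast\Omega A^{\ast\top}$ with $A^\ast = (G^{*\top}WG^\ast)^{-1}G^{*\top}W$. Since $W$ is symmetric (Assumption~\ref{ass:identification}), so is $G^{*\top}WG^\ast$, and hence
\[
  A^{\ast\top} = WG^\ast(G^{*\top}WG^\ast)^{-1}.
\]
Substituting gives the displayed sandwich formula.

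When $W=\Omega^{-1}$, the middle factor becomes $G^{*\top}\Omega^{-1}\Omega\,\Omega^{-1}G^\ast = G^{*\top}\Omega^{-1}G^\ast$. This cancels one outer inverse and leaves $(G^{*\top}\Omega^{-1}G^\ast)^{-1}$. Invertibility of $G^{*\top}\Omega^{-1}G^\ast$ holds by the full column rank of $G^\ast$ and positive definiteness of $W=\Omega^{-1}$, both from Assumption~\ref{ass:identification}. The latter implicitly requires $\Omega$ to be nonsingular.

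There is no substantive obstacle; all the analytic work is in the two theorems being invoked. The only point needing care is that both theorems hold along the same sequence $T\to\infty$, with $R=R(T)$ and $T_0=T_0(T)$ as in Assumption~\ref{ass:rates}, and under the same probability space coupling the oracle and feasible recursions. This holds by construction, since the oracle recursion~\eqref{eq:snfp-oracle} is driven by the same markets, pilot quantities and step sizes as~\eqref{eq:snfp}.
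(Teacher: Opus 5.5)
Your proposal is correct and follows the paper's proof: the same split into the oracle term and the coupling difference $\sqrt T(\bar\theta_T-\bar\theta_T^o)$, Theorems~\ref{thm:oracle-clt} and~\ref{thm:negligible-sim} combined by Slutsky's theorem, and direct substitution for the case $W=\Omega^{-1}$. Your explicit computation of $A^{\ast\top}=WG^\ast(G^{*\top}WG^\ast)^{-1}$ and your remark that $W=\Omega^{-1}$ presupposes a nonsingular $\Omega$ simply spell out what the paper calls direct substitution.
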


The corollary shows that the SNFP estimator, which touches each market exactly once and never recomputes a
full-sample Jacobian or weighting matrix, attains the same first-order asymptotic distribution as the infeasible
full-sample GMM estimator based on the exact share integrals.
Proofs are provided in Appendices~\ref{appendix:lemmas}--\ref{appendix:thm1_proof}.

\rev{The theory above covers a single pass over independent markets, and the main Monte Carlo results in
Section~\ref{section: Monte Carlo} run SNFP for a single pass. In practice additional shuffled passes (epochs) are
often run; Remark~\ref{rem:epochs} records their asymptotic status.}
 
\subsection{A Common Block of Simulation Draws}\label{subsection:common_draws}

Assumption~\ref{ass:sampling} draws a fresh block for each market. \rev{An alternative, used for instance whenever the
same agent data are supplied to every market, draws} one block $\{\nu_r\}_{r=1}^R$ once and \rev{reuses it in all
markets}, so that
$P_t^R = P^R := R^{-1}\sum_{r=1}^R \delta_{\nu_r}$ for all $t$. A common block is cheaper \rev{and} permits draw-level
precomputation\rev{. Like any fixed set of draws, it also avoids} the simulation ``chatter'' \rev{of redrawing across
parameter evaluations}
\citep{berry2004limit,conlon2020best}. It also changes the asymptotics, and the change is not innocuous.
\rev{\citet{freyberger2015asymptotic}'s Remark 3 states, without proof, the corresponding result for the full-sample estimator, and \citet{hong2021blp} prove $\sqrt{\min (R,T)}$ asymptotic normality with a consistent variance estimator; both concern a fixed full-sample objective, so neither carries over to SNFP, which never forms that objective and reuses the block in every update, so the online estimator requires a separate analysis.}

% Assumption 1' uses the assum environment with its number shown as 1'; the
% counter is restored afterwards so it does not consume an assumption number.
\begingroup
\renewcommand{\theassum}{1$'$}
\begin{assum}[Common simulation block]\label{ass:sampling-common}
$\{O_t\}_{t \ge 1}$ is i.i.d. The draws $\nu_1,\ldots,\nu_R$ are i.i.d.\ $N(0,I_{d_v})$, drawn once and
independently of $\{O_t\}$, and the same block is used in the pilot phase and in every online update, so that
$P_t^R = P^R$ for all $t$ and the recursion is~\eqref{eq:snfp} with $P^R$ in place of $P_t^R$.
\end{assum}
\endgroup
\addtocounter{assum}{-1}

Because the block is now common to all markets, $E[g(O;\theta,P^R) \mid P^R]$ is random, and we write it
$\bar m^{c}_R(\theta)$, set $b_R(\theta) := \bar m^{c}_R(\theta) - \bar m(\theta,P^\ast)$, and let the pseudo-true
parameter $\theta^R$ solve $A^\ast\bar m^{c}_R(\theta) = 0$. The deterministic $\bar m_R$ and $\beta_R$ of
Appendix~\ref{appendix:lemmas} are reserved for the fresh-draw objects of Lemma~\ref{lem:sim-error}. Define the
market-averaged simulation influence function
\begin{equation}\label{eq:psi-common}
  \psi(\nu;\theta) := E\bigl[\mathcal L_O(\theta)\bigl\{\sigma(\tilde\delta^\ast(O,\theta_2),\theta_2,\nu)
    - \tilde s(\tilde\delta^\ast(O,\theta_2),\theta_2;P^\ast)\bigr\}\bigr],
\end{equation}
where $\sigma(\delta,\theta_2,\nu)$ is the vector of inside-good choice probabilities at a single draw,
$\tilde\delta^\ast(O,\theta_2) := \tilde\delta(S,\theta_2;P^\ast)$, and $\mathcal L_O(\theta)$ is the BLP inversion
multiplier $\mathcal L_t(\theta)$ of the proof of Lemma~\ref{lem:sim-error}, written for a generic market $O$.
The expectation is over the market alone, so $E_\nu[\psi(\nu;\theta)] = 0$. Set
$\Omega_\nu := E[\psi(\nu;\theta^\ast)\psi(\nu;\theta^\ast)^\top]$ and
$\Lambda_\nu := A^\ast\Omega_\nu A^{\ast\top}$.

The conditional-mean property behind Assumption~\ref{ass:sampling}, namely that the simulated shares average $R$
draws that are independent across markets, is exactly what fails here. Conditionally on the block, \eqref{eq:snfp}
is an \emph{exact} stochastic approximation to $\bar m^{c}_R$. No simulation noise is left to control, the
coupling
argument of Theorem~\ref{thm:negligible-sim} is not needed, and Theorems~\ref{thm:consistency-snfp}
and~\ref{thm:oracle-clt} apply verbatim, centered at $\theta^R$ rather than at $\theta^\ast$. The cost appears in the
recentering. All the simulation randomness is frozen into $b_R$, which linearizes in $\psi$
(Lemma~\ref{lem:perturbation}), that is, $b_R(\theta) = R^{-1}\sum_{r=1}^R \psi(\nu_r;\theta) + O_p(R^{-1})$
uniformly in
$\theta$, so $\theta^R - \theta^\ast = -A^\ast b_R(\theta^\ast) + O_p(R^{-1})$ and
$\sqrt R\,(\theta^R - \theta^\ast) \xrightarrow{d} N(0,\Lambda_\nu)$. A common block displaces the estimand. The
next theorem shows how this displacement enters the limit distribution of $\bar\theta_T$.

\begin{theorem}[Common simulation draws]\label{thm:common-draws}
Under Assumption~\ref{ass:sampling-common} and Assumptions~\ref{ass:param-space}--\ref{ass:pilot}, as
$T \to \infty$:
\begin{enumerate}
\item[(i)] $\theta_T \pto \theta^\ast$ and $\bar\theta_T \pto \theta^\ast$, with no further condition linking $R$ and $T$.
\item[(ii)] If $T/R \to c \in [0,\infty)$, then
$\sqrt T\,(\bar\theta_T - \theta^\ast) \xrightarrow{d} N\bigl(0,\; \Sigma_\theta + c\,\Lambda_\nu\bigr)$, the two
Gaussian components being independent. In particular $c = 0$ recovers Corollary~\ref{thm:asymptotic_dist}.
\item[(iii)] If $R/T \to 0$, then $\sqrt R\,(\bar\theta_T - \theta^\ast) \xrightarrow{d} N(0,\Lambda_\nu)$. The
simulation error dominates and the rate drops to $\sqrt R$.
\end{enumerate}
\end{theorem}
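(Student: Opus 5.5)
The plan is to condition on the common block $P^R$ and decompose
\[
\sqrt T(\bar\theta_T-\theta^\ast)=\sqrt T(\bar\theta_T-\theta^R)+\sqrt T(\theta^R-\theta^\ast).
\]
Conditionally on $P^R$, the markets $O_t$ stay i.i.d., and the update \eqref{eq:snfp} with $P^R$ in place of $P_t^R$ is an exact Robbins--Monro recursion for the random mean field $\bar m^{c}_R(\theta)=E[g(O;\theta,P^R)\mid P^R]$. I will therefore rerun the proofs of Theorems~\ref{thm:consistency-snfp} and~\ref{thm:oracle-clt} conditionally, with $(\bar m^{c}_R,\theta^R)$ in place of $(\bar m(\cdot,P^\ast),\theta^\ast)$. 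This requires checking their hypotheses on an event $E_R$ whose probability tends to one. By Lemma~\ref{lem:perturbation}, $\sup_\theta\|b_R(\theta)\|=O_p(R^{-1/2})$. Uniform convergence of $\nabla_\theta\bar m^c_R$ to $G(\cdot,P^\ast)$ follows in the same way from Lemma~\ref{lem:blp-regularity}, since the bounds there are uniform in the simulation measure. Hence on $E_R$ the following hold: $\theta^R$ is interior and unique; $\|\theta^R-\theta^\ast\|\to0$; the drift inequalities of Assumption~\ref{ass:drift} hold around $\theta^R$ with constants $\tilde C/2$ (the perturbation is $o(1)$ uniformly and $\theta^R\to\theta^\ast$); and the conditional Jacobian $G_R:=\nabla\bar m^c_R(\theta^R)$ satisfies $G_R\to G^\ast$. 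The pilot uses the same $P^R$, so Assumption~\ref{ass:pilot} still gives $\hat A_{T_0}\pto A^\ast$.

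For (i), the conditional Lyapunov argument gives $\theta_T-\theta^R\pto0$ and $\bar\theta_T-\theta^R\pto0$ conditionally, hence unconditionally by dominated convergence of conditional probabilities. Combined with $\theta^R\pto\theta^\ast$ (from $R\to\infty$), this gives (i) with no link between $R$ and $T$.

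For the distribution results, the conditional version of Theorem~\ref{thm:oracle-clt} yields
\[
\sqrt T(\bar\theta_T-\theta^R)=-\frac{A_R}{\sqrt T}\sum_{t>T_0}\bigl\{g(O_t;\theta^R,P^R)-\bar m^c_R(\theta^R)\bigr\}+o_p(1),
\]
where $A_R$ is the analogue of $A^\ast$ built from $G_R$ and $A_R\to A^\ast$. The summands are conditionally i.i.d. with mean zero, and their conditional variance $\Omega_R\to\Omega$ in probability. A conditional Lindeberg CLT, valid because $g$ is bounded, then gives
\[
E\bigl[e^{iu^\top\sqrt T(\bar\theta_T-\theta^R)}\mid P^R\bigr]\pto e^{-u^\top\Sigma_\theta u/2}.
\]
For the second piece, Lemma~\ref{lem:perturbation} gives $\theta^R-\theta^\ast=-A^\ast R^{-1}\sum_r\psi(\nu_r;\theta^\ast)+O_p(R^{-1})$, and since the $\psi(\nu_r;\theta^\ast)$ are i.i.d. with mean zero, the i.i.d. CLT gives $\sqrt R(\theta^R-\theta^\ast)\xrightarrow{d}N(0,\Lambda_\nu)$. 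This piece is $P^R$-measurable. Writing the joint characteristic function as
\[
E\Bigl[e^{iu^\top\sqrt T(\theta^R-\theta^\ast)}\,E\bigl[e^{iu^\top\sqrt T(\bar\theta_T-\theta^R)}\mid P^R\bigr]\Bigr]
\]
and using bounded convergence yields the product of the two limits, which proves (ii): $\sqrt{T/R}\to\sqrt c$ scales the second component to variance $c\Lambda_\nu$, and the two Gaussians are independent. Note that the $O_p(R^{-1})$ remainder times $\sqrt T$ is $O_p(\sqrt T/R)=o_p(1)$ because $T/R$ is bounded. For (iii), multiply the decomposition by $\sqrt R$ instead. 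The first term becomes $\sqrt{R/T}\cdot O_p(1)\to0$, and the second converges to $N(0,\Lambda_\nu)$.

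The main obstacle is making the conditional rerun of Theorem~\ref{thm:oracle-clt} rigorous. Its proof uses fixed population constants: the drift constants, the Lipschitz bounds, and the linearization $\bar m(\theta)\approx G^\ast(\theta-\theta^\ast)$ with a remainder $O(\|\theta-\theta^\ast\|^2)$. Here these must hold uniformly over realizations of $P^R$ in $E_R$, so that every $o_p(1)$ bound is uniform and survives unconditioning. In particular, the second-order smoothness of $\bar m^c_R$ near $\theta^R$ has to be bounded uniformly in $R$. My first attempt would show that $\nabla^2_\theta g(O;\theta,P)$ is bounded uniformly over all probability measures $P$, using Assumption~\ref{ass:regular-markets} together with the implicit-function representation of $\tilde\delta$. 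Since the share map is linear in $P$, boundedness holding for each point mass should extend to all $P$. I must also make sure that the residual $O_p(R^{-1})$ in the linearization of $\theta^R$ is uniform, so that it is negligible after scaling by $\sqrt T$ in (ii).
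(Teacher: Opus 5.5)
Your architecture matches the paper's: condition on the block, treat the update as an exact stochastic approximation, and combine a conditional CLT with the $\sqrt R$-CLT for the displacement through the characteristic-function factorization. There is, however, a genuine gap in where the conditional recursion is centered. The update uses the gain $\hat A_{T_0}$, so given $\mathcal F_{T_0}$ its mean field is $\hat A_{T_0}\bar m^{c}_R(\theta)$, not $A^\ast\bar m^{c}_R(\theta)$. Under overidentification ($d_z>d$, as in the paper's designs), the condition $A^\ast\bar m^{c}_R(\theta^R)=0$ only places $\bar m^{c}_R(\theta^R)$ in the null space of $A^\ast$. Generically $\bar m^{c}_R(\theta^R)\neq 0$, with norm of order $\|b_R\|_\infty=O_p(R^{-1/2})$, so $\hat A_{T_0}\bar m^{c}_R(\theta^R)=\Delta A\,\bar m^{c}_R(\theta^R)\neq 0$. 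Your ``verbatim rerun'' with $(\bar m^{c}_R,\theta^R)$ therefore breaks exactly where the original proofs used $\bar m(\theta^\ast,P^\ast)=0$. First, the bound $|(\theta-\theta^\ast)^\top\Delta A\,\bar m(\theta)|\le\|\Delta A\|L_G\|\theta-\theta^\ast\|^2$ in Lemma~\ref{lem:L2_rate}(i) acquires a linear term $\|\Delta A\|\,\|\bar m^{c}_R(\theta^R)\|\,\|\theta-\theta^R\|$; your drift check is done with $A^\ast$, not with the gain actually used. Second, the Polyak--Juditsky requirement in the proof of Theorem~\ref{thm:oracle-clt} that the mean field vanish at the centering point fails. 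The recursion actually targets $\theta_c^R$, the zero of $\hat A_{T_0}\bar m^{c}_R$, and your displayed expansion of $\sqrt T(\bar\theta_T-\theta^R)$ omits the term $\sqrt T(\theta_c^R-\theta^R)$.

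The missing ingredient is the final clause of Lemma~\ref{lem:pseudo-true}. On $E_0\cap E_1$, strong monotonicity of $\hat A_{T_0}\bar m^{c}_R$ on $N_0$ gives a unique zero $\theta_c^R$ with $\|\theta_c^R-\theta^R\|\le C\|\Delta A\|\,\|b_R\|_\infty$. The drift then transfers to $\theta_c^R$, and one works with the three-term decomposition~\eqref{eq:cd-decomposition}. Part (i) survives, since the extra linear term is absorbed by Young's inequality, as $\beta_R$ was in Lemma~\ref{lem:L2_rate}(i). In (ii) the omitted term is $O_p(\|\Delta A\|\sqrt{T/R})=o_p(1)$, so your conclusion holds once this step is added. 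In (iii), however, your step $\sqrt R(\bar\theta_T-\theta^R)=\sqrt{R/T}\cdot O_p(1)$ is false. The term $\sqrt T(\theta_c^R-\theta^R)$ is of order $\|\Delta A\|\sqrt{T/R}$, and because the pilot uses the same block, $\|\Delta A\|$ is typically of order at least $R^{-1/2}$. That makes the term of order $\sqrt T/R$, which diverges for, say, $R=T^{0.3}$. You must multiply by $\sqrt R$ before bounding, writing $\sqrt R(\bar\theta_T-\theta^R)=\sqrt{R/T}\,\sqrt T(\bar\theta_T-\theta_c^R)+\sqrt R(\theta_c^R-\theta^R)$, where the last piece is $O_p(\|\Delta A\|)=o_p(1)$. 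Finally, condition on $\mathcal F_{T_0}$ (block and pilot markets) rather than on $P^R$ alone, since the conditional Polyak--Juditsky argument needs $\hat A_{T_0}$ fixed. Because $\sqrt R(\theta^R-\theta^\ast)$ is $\mathcal F_{T_0}$-measurable, your characteristic-function factorization is unaffected.
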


The displacement is a property of i.i.d.\ draws. If the common block is instead built from a low-discrepancy point
set, its quadrature error is of smaller order and the fresh-draw limit theory returns.

\begin{corollary}[Common low-discrepancy block]\label{cor:qmc}
Suppose the i.i.d.\ block of Assumption~\ref{ass:sampling-common} is replaced by $\nu_r = F_v^{-1}(u_r)$,
$r = 1,\ldots,R$, generated coordinate-wise from a low-discrepancy point set $\{u_r\} \subset (0,1)^{d_v}$ whose
quadrature error for the choice probabilities and their first derivatives is bounded, uniformly on the relevant
compact index set, by a sequence $\epsilon_R \to 0$ (deterministic or, for a randomly scrambled net, satisfying
$\epsilon_R = O_p(R^{-1+\eta})$ for every $\eta > 0$, in which case the conclusions hold conditionally on the
scramble). Under Assumptions~\ref{ass:param-space}--\ref{ass:pilot}, as $T \to \infty$,
$\|b_R\|_\infty \le C\epsilon_R$, $\bar\theta_T \pto \theta^\ast$ under $R \to \infty$ alone, and
$\sqrt T\,(\bar\theta_T - \theta^\ast) \xrightarrow{d} N(0,\Sigma_\theta)$ whenever $\sqrt T\,\epsilon_R \to 0$,
with the plug-in sandwich variance valid without correction.
\end{corollary}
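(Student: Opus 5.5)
Conditionally on the (possibly scrambled) point set, the common low-discrepancy block is a deterministic measure $P^R$, so the recursion~\eqref{eq:snfp} with $P_t^R = P^R$ is an exact stochastic approximation to the deterministic mean field $\bar m^c_R(\theta) = E[g(O;\theta,P^R)]$. This is the same observation made before Theorem~\ref{thm:common-draws}. The plan mirrors the proof of that theorem, with one change: the linearization $b_R = R^{-1}\sum_r\psi(\nu_r;\cdot)+O_p(R^{-1})$ of Lemma~\ref{lem:perturbation} is replaced by a direct deterministic bound on $b_R$. I carry out three steps, in order. First, the bound $\|b_R\|_\infty \le C\epsilon_R$. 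Second, consistency, by rerunning Theorem~\ref{thm:consistency-snfp} centered at the pseudo-true $\theta^R$ and showing $\theta^R \to \theta^\ast$. Third, the CLT, by applying Theorem~\ref{thm:oracle-clt} conditionally on $P^R$ and showing $\sqrt T(\theta^R-\theta^\ast)\to 0$.

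For the first step, fix a market $O$ and $\theta_2$. Write the simulated inversion $\tilde\delta(S,\theta_2;P^R)$ as the solution of $S = \tilde s(\delta,\theta_2;P^R)$. The difference $\tilde s(\delta,\theta_2;P^R)-\tilde s(\delta,\theta_2;P^\ast)$ is exactly the quadrature error of the choice probabilities. By hypothesis it is at most $\epsilon_R$, uniformly over $\delta$ in the compact set that Assumption~\ref{ass:regular-markets} guarantees contains all inversions. The derivative error is likewise at most $\epsilon_R$. Assumption~\ref{ass:regular-markets}(iii) keeps the shares bounded away from zero, so the Jacobian $\partial_\delta\tilde s(\cdot;P^\ast)$ is uniformly invertible, with a bounded inverse (the multiplier $\mathcal L_O$). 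For $R$ large, the derivative bound keeps the $P^R$ Jacobian invertible too. A mean-value / implicit-function argument then gives $\|\tilde\delta(S,\theta_2;P^R)-\tilde\delta^\ast(O,\theta_2)\| \le C\epsilon_R$ uniformly in $(O,\theta)$, and the same holds for $\nabla_\theta$ of it. Since $g = Z'(\tilde\delta - X\theta_1)$ with bounded $Z$, taking expectations over $O$ gives $\|b_R\|_\infty\le C\epsilon_R$ and the analogous bound for $\nabla_\theta \bar m^c_R - \nabla_\theta\bar m$. No $O(R^{-1})$ bias term appears separately, because nothing here is random beyond $P^R$. In the scrambled case the same holds on the event where $\epsilon_R$ is realized, and everything below is conditional on the scramble.

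For the second step, the derivative bound and Assumption~\ref{ass:drift} show that, for $R$ large, $A^\ast\bar m^c_R$ satisfies the drift condition with constants $\tilde C/2$ around a unique zero $\theta^R$. Moreover $\|\theta^R-\theta^\ast\|\le C\epsilon_R/\tilde C$, from $\tilde C\|\theta^R-\theta^\ast\|^2 \le (\theta^R-\theta^\ast)^\top A^\ast \bar m(\theta^R) = -(\theta^R-\theta^\ast)^\top A^\ast b_R(\theta^R)$. The pilot quantities are built from the same $P^R$, so Assumption~\ref{ass:pilot} still gives $\hat A_{T_0}\pto A^\ast$. The Lyapunov argument of Theorem~\ref{thm:consistency-snfp} then yields $\bar\theta_T-\theta^R\pto 0$, hence $\bar\theta_T\pto\theta^\ast$ under $R\to\infty$ alone. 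For the third step, Theorem~\ref{thm:oracle-clt} conditionally on $P^R$ gives $\sqrt T(\bar\theta_T-\theta^R)\to N(0,\Sigma_\theta^R)$. Here $\Sigma_\theta^R$ is built from $G(\theta^R,P^R)$ and $\mathrm{Var}[g(O;\theta^R,P^R)]$, and it converges to $\Sigma_\theta$ by the first step and Lipschitz continuity. Adding $\sqrt T(\theta^R-\theta^\ast) = O(\sqrt T\epsilon_R)\to 0$ gives the limit. The plug-in sandwich estimator is consistent because it converges to $\Sigma^R_\theta\to\Sigma_\theta$. There is no $c\Lambda_\nu$ term, since $\sqrt T\,b_R\to 0$.

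I expect the main obstacle to be the uniformity inside the conditional CLT. Theorem~\ref{thm:oracle-clt} is stated for a fixed mean field, whereas here the field $\bar m^c_R$ drifts with $T$ through $R(T)$. I would handle this by checking that every constant in the proofs of Theorems~\ref{thm:consistency-snfp} and~\ref{thm:oracle-clt} is uniform over mean fields within $C\epsilon_R$ in $C^1$ of $\bar m(\cdot,P^\ast)$. These constants are the drift constants, the Lipschitz bounds, and the second-order remainder of the Taylor expansion around $\theta^R$. I would then apply a triangular-array Lindeberg CLT to the martingale differences $g(O_t;\theta^R,P^R)$, which are bounded uniformly by Assumption~\ref{ass:regular-markets}.
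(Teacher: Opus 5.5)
Your proposal is correct and follows the paper's proof. You bound $b_R$ and $\nabla_\theta\bar m^{c}_R - G$ by $C\epsilon_R$, where your direct implicit-function argument is an equivalent substitute for the paper's bound on the linear term of Lemma~\ref{lem:perturbation}; you treat the recursion as an exact stochastic approximation given the block; and you add a deterministic displacement of order $\sqrt T\,\epsilon_R$. One small correction: the recursion is scaled by $\hat A_{T_0}$, not $A^\ast$, so the conditional limit is centered at the zero $\theta_c^R$ of $\hat A_{T_0}\bar m^{c}_R$ rather than at $\theta^R$, and the two differ when the model is overidentified; however, Lemma~\ref{lem:pseudo-true} gives $\|\theta_c^R-\theta^R\|\le C\|\Delta A\|\,\epsilon_R$, which is exactly the middle term of \eqref{eq:cd-decomposition} and vanishes under $\sqrt T\,\epsilon_R\to 0$.
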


Digital nets and Halton sequences attain $\epsilon_R = O\bigl(R^{-1}(\log R)^{d_v}\bigr)$
\citep{niederreiter1992random,owen2006halton,dick2010digital}, so in the polynomial parametrization $R = T^{b_1}$
any $b_1 > 1/2$ suffices, exactly as under fresh draws.

The gap between the requirement $T/R \to 0$ of Theorem~\ref{thm:common-draws}, that is, $b_1 > 1$, and the
requirement $b_1 > 1/2$ noted after Assumption~\ref{ass:rates} is driven by the correlation structure of the simulation
error, not by its size. With
fresh draws the mean-zero $O_p(R^{-1/2})$ part of that error is a martingale difference across markets. It
diversifies in the Polyak--Ruppert average, leaving only the $O(R^{-1})$ nonlinearity bias, whose cost is
$\sqrt T/R$. With a common block it is the \emph{same} random variable in every market, so it cannot average out,
it displaces the estimand by $\theta^R - \theta^\ast = O_p(R^{-1/2})$, and its cost is $\sqrt{T/R}$. The
nonlinearity bias is then dominated.

\begin{remark}\label{rem:common-draws-use}
Part~(ii) of Theorem~\ref{thm:common-draws} rehabilitates moderate $R$ provided inference accounts for
$c\,\Lambda_\nu$, and every ingredient is a byproduct of the online pass. The plug-in sandwich variance, widened by
\begin{equation}\label{eq:widened-variance}
  (T/R)\,\hat A_{T_0}\hat\Omega_\nu\hat A_{T_0}^\top,
\end{equation}
is valid. Here $\hat\Omega_\nu := R^{-1}\sum_{r} \hat\psi_r\hat\psi_r^\top$ with
$\hat\psi_r := (T-T_0)^{-1}\sum_{t > T_0} \hat{\mathcal L}_t\bigl[\sigma_t(\nu_r) - \tilde s_t\bigr]$, where
$\sigma_t(\nu_r) := \sigma(\tilde\delta_t,\theta_{t-1,2},\nu_r)$ and $\tilde s_t := R^{-1}\sum_r \sigma_t(\nu_r)$ are
the single-draw and simulated shares of market $t$ at the online iterate, and $\hat{\mathcal L}_t :=
-Z_t'\bigl[\partial \tilde s_t/\partial\delta'\bigr]^{-1}$ is the inversion
multiplier $\mathcal L_t(\theta)$ of the proof of Lemma~\ref{lem:sim-error} evaluated at the same point.
\rev{\citet{hong2021blp} give an analogous variance estimator for the full-sample GMM estimator under shared draws.} Inference computed from the iterate path itself, such as
random scaling \citep{lee2022fast}, is conditional on the block by construction, so its pivot is centered at the
pseudo-true parameter rather than at $\theta^\ast$, is blind to $c\,\Lambda_\nu$, and requires the same widening. \rev{For implementations that use a common block},
Corollary~\ref{cor:qmc} points to the combination we recommend, namely a single
shared block of
\rev{low-discrepancy (quasi-Monte Carlo) draws, such as scrambled Sobol' or Halton sequences}
\citep{owen1997scrambled,train2009discrete}, \rev{with $R$ large enough that $\sqrt T\,\epsilon_R$ is small}.
\end{remark}

Both results are proved in Appendix~\ref{appendix:common_draws}, which also collects the two perturbation lemmas on
which they rest. Section~\ref{section: Monte Carlo} examines all three regimes numerically.

\begin{remark}[Multiple shuffled epochs]\label{rem:epochs}
\rev{Additional shuffled passes (epochs) revisit markets, so Assumption~\ref{ass:sampling} fails from the second
epoch on. The remedy is the conditioning device of this section, now applied to the data: given the sample, the
draws, and the pilot, a shuffled epoch samples markets without replacement from the empirical distribution, and
the recursion is a stochastic approximation to $\hat A_{T_0}\bar g_T(\theta)$, whose zero $\hat\theta_T$ is the
full-sample estimating-equation estimator under the same draw scheme \citep{chen2026slim}.}

\rev{Data reuse thus displaces the target from $\theta^\ast$ to $\hat\theta_T$, exactly as draw reuse displaces it to
$\theta^R$. As epochs accumulate the averaged iterate converges, conditionally, to $\hat\theta_T$ and inherits
the full-sample asymptotics --- Corollary~\ref{thm:asymptotic_dist} under market-specific blocks, and the
shared-block theory of \citet[Remark~3]{freyberger2015asymptotic}, \citet{hong2021blp}, and
Theorem~\ref{thm:common-draws} otherwise --- while a small fixed number of epochs interpolates between the
single-pass and full-sample estimators, with the same first-order distribution at both ends
(Appendix~\ref{appendix:snfp_ep1}).}
\end{remark}

\section{Monte Carlo Experiments}\label{section: Monte Carlo}

In this section, we evaluate the SNFP estimator in Monte Carlo experiments. We examine its computational efficiency,
its statistical accuracy relative to the NFP benchmark, and the effect of simulation error from the random draws used
to approximate market shares.

The baseline data-generating process is based on \citet{freyberger2015asymptotic} and
\citet{dube2012improving}.
Each market $t = 1, \dots, T$ contains $J = 4$ differentiated products.
Consumer $i$'s indirect utility from product $j$ in market $t$ is
\[
u_{ijt} = X_{jt}\beta_i + \alpha_i p_{jt} + \xi_{jt} + \varepsilon_{ijt},
\]
where $X_{jt} = (1,\, x_{1jt})$ is a vector of observed product characteristics,
$x_{1jt} \sim \text{TN}(0,1)$ is a standard normal random variable truncated to $[-2,2]$,
$\xi_{jt} \sim \text{TN}(0,1)$ is an unobserved product characteristic, and
$\varepsilon_{ijt}$ is i.i.d.\ standard type I extreme value.
The endogenous price is generated by
\[
p_{jt} = \tfrac{1}{2} \bigl| 2 + 0.5\,\xi_{jt} + 1.1\,x_{1jt} + \eta_{jt} \bigr|,
\quad \eta_{jt} \sim \text{TN}(0,1).
\]
The dependence of $p_{jt}$ on $\xi_{jt}$ introduces price endogeneity.
The random coefficients $(\beta_i^0, \beta_i^1, \alpha_i)'$ on
$(1,\, x_{1jt},\, p_{jt})'$ follow the joint normal distribution
\[
\begin{pmatrix} \beta_i^0 \\ \beta_i^1 \\ \alpha_i \end{pmatrix}
\sim N\!\left(
\begin{pmatrix} -5 \\ 1.5 \\ -2 \end{pmatrix},
\begin{pmatrix} 0 & 0 & 0 \\ 0 & 0.5 & 0 \\ 0 & 0 & 0.5 \end{pmatrix}
\right).
\]
Thus, the standard deviations of the random coefficients on $x_{1jt}$ and $p_{jt}$ are
$\sigma_x = \sigma_p = \sqrt{0.5} \approx 0.707$, respectively. These are the nonlinear
parameters targeted by the NFP and SNFP estimators.
Market shares are simulated using $R = 1{,}000$ consumer draws:
\[
s_{jt} = \frac{1}{R} \sum_{r=1}^{R}
\frac{\exp\!\bigl(X_{jt}\beta_r + \alpha_r p_{jt} + \xi_{jt}\bigr)}
{1 + \sum_{k=1}^{J} \exp\!\bigl(X_{kt}\beta_r + \alpha_r p_{kt} + \xi_{kt}\bigr)}.
\]
Price endogeneity is addressed using instruments constructed from exogenous product
characteristics and auxiliary variation. The baseline instrument is
\[
z_{jt} = a_{jt} + 0.25(\eta_{jt} + 1.1\, x_{1jt}),
\quad a_{jt} \sim \text{U}(0,1).
\]
The full instrument vector consists of $1$, $x_{1jt}$, $z_{jt}$,
$x_{1jt} z_{jt}$, $x_{1jt}^2$, $z_{jt}^2$, $x_{1jt}^3$, and $z_{jt}^3$.
Instrument relevance follows from the dependence of $z_{jt}$ on $x_{1jt}$ and $\eta_{jt}$,
while the exclusion restriction holds because $z_{jt}$ is independent of $\xi_{jt}$.
Simulations are conducted at market sizes
\[
T \in \{2{,}000,\, 8{,}000,\, 32{,}000,\, 128{,}000,\, 512{,}000\},
\]
with $J = 4$ products per market and $R = 1{,}000$ simulated consumers. Computation time and memory usage are evaluated over 10 replications in a standardized computing environment. Statistical accuracy is evaluated over $1{,}000$ replications on the cluster system.

\subsection{Computational Efficiency}\label{subsection:mc_efficiency}

Table~\ref{tab:time_memory} summarizes the computational cost comparison between SNFP and NFP. Each
replication runs serially on one core of an AMD EPYC 9654 node of the Rorqual cluster of the Digital Research
Alliance of Canada. Both estimators read the same data file, a CSV file with one line per market. SNFP reads it one
line at a time in file order\rev{, visiting each market exactly once}. NFP
reads the whole file into memory once, holds the data as full-sample arrays\rev{, and is run to convergence}. We run 10 Monte Carlo replications
and report the average computation time and peak memory usage, together with their standard deviations. The computation time for SNFP includes both the pilot step for
constructing the scaling matrix and the SGD \rev{updates}, whereas the computation time for NFP is the sum of
data-loading time and offline optimization time using the L-BFGS-B method. The pilot sample size for SNFP is
$T_0 = 1{,}000$ markets, \rev{the initial value is drawn as $\theta_0 \sim \mathrm{Unif}(\theta^\ast \pm 0.1)$,} and the remaining tuning parameters are $(\gamma_0, a) = (1,\, 2/3)$ with parameter
clipping $\sigma \in [10^{-4}, 2]$.

% =============================================================================
% Numbers: T = 2,048,000 from tab:time_memory (local 3970X, .rds per market).
%   T = 100,000,000: Experimental_Design_2026_09_12_bigN, rorqual jobs
%   20985186-20985187, 10 reps, one epoch, results/tables/summary_T100000000.csv.
%   RMSE over the 10 reps (4.0e-4, 1.2e-4, 8.4e-4, 7.8e-5, 4.8e-4), 0.92-1.09
%   extrapolated NFP SE (SE at 512k times sqrt(512000/1e8)).
%   Min/max at 2,048,000: simulations_in_the_paper/tab1_time_memory/results
%   (est_sec from snfp_stream/T2048000/result_rep*.csv, max_rss_kb from timing_T0_1000.csv).
% =============================================================================
\begin{table}[tbp]
\centering
\caption{Computational cost of NFP vs.\ SNFP.}
\label{tab:time_memory}
\small
\begin{threeparttable}
\begin{tabular}{r c r r r r r r}
\toprule
\multicolumn{1}{c}{\multirow{2}{*}{$T$}} & \multirow{2}{*}{\shortstack{Accuracy\\ level $d_T$}} & \multicolumn{3}{c}{Time (seconds)} & \multicolumn{3}{c}{Peak memory (MB)} \\
\cmidrule(lr){3-5} \cmidrule(lr){6-8}
 & & SNFP & NFP & NFP/SNFP & SNFP & NFP & NFP/SNFP \\
\midrule
$2{,}000$ & $\underset{(2.19)}{1.73}$ & $\underset{(0.1)}{0.8}$ & $\underset{(20.4)}{34.9}$ & $45.1\times$ & $\underset{(0.7)}{86.8}$ & $\underset{(0.1)}{87.1}$ & $1.0\times$ \\[4pt]
$8{,}000$ & $\underset{(0.23)}{0.59}$ & $\underset{(0.1)}{1.9}$ & $\underset{(83.3)}{161.9}$ & $84.8\times$ & $\underset{(0.0)}{87.2}$ & $\underset{(0.2)}{98.5}$ & $1.1\times$ \\[4pt]
$32{,}000$ & $\underset{(0.53)}{0.46}$ & $\underset{(0.5)}{6.9}$ & $\underset{(487.2)}{1{,}008.9}$ & $146.1\times$ & $\underset{(0.8)}{86.8}$ & $\underset{(0.6)}{140.5}$ & $1.6\times$ \\[4pt]
$128{,}000$ & $\underset{(0.14)}{0.39}$ & $\underset{(1.0)}{25.0}$ & $\underset{(1{,}533.9)}{2{,}291.8}$ & $91.6\times$ & $\underset{(0.9)}{86.6}$ & $\underset{(1.0)}{288.6}$ & $3.3\times$ \\[4pt]
$512{,}000$ & $\underset{(0.15)}{0.35}$ & $\underset{(3.4)}{101.1}$ & $\underset{(5{,}149.8)}{12{,}441.4}$ & $123.1\times$ & $\underset{(0.6)}{86.9}$ & $\underset{(1.2)}{859.0}$ & $9.9\times$ \\[4pt]
$2{,}048{,}000$ & --- & $\underset{(18.7)}{409.6}$ & --- & --- & $\underset{(0.6)}{97.4}$ & --- & --- \\[4pt]
$100{,}000{,}000$ & --- & $\underset{(873.2)}{19{,}643.4}$ & --- & --- & $\underset{(1.3)}{97.6}$ & --- & --- \\
\bottomrule
\end{tabular}
\begin{tablenotes}[flushleft]
\footnotesize
\item \textit{Notes.} \rev{Time and memory are means across 10 replications, with standard deviations in
parentheses. The accuracy level is the median of $d_T$ across replications, with its standard deviation in
parentheses.}
\end{tablenotes}
\end{threeparttable}
\end{table}

\rev{SNFP is substantially cheaper than NFP at every sample size.} Table~\ref{tab:time_memory} reports speedup factors between $45.1$ and $146.1$. At $T = 512{,}000$, SNFP
completes in $101.1$ seconds against $12{,}441.4$ seconds for NFP. The two estimators also differ in the variability of their computation times.
The standard deviation of computation time is \rev{smaller for SNFP by two orders of magnitude or more} at every $T$, and the computation time of
NFP has a heavy right tail. At $T = 128{,}000$ the slowest NFP replication took $6{,}241$ seconds against a
median of $1{,}736$ seconds.

In terms of peak memory usage, NFP requires more memory than SNFP as the sample size increases, from parity at
$T = 2{,}000$ to $9.9$ times as much at $T = 512{,}000$. The peak memory of SNFP stays near $87$~MB at every $T$ \rev{up to $512{,}000$},
because SNFP holds one market in memory at a time while NFP retains the whole sample.

\rev{SNFP scales to $100$ million markets within a 10-hour computation-time budget, which NFP already exceeds at
$T = 2{,}048{,}000$. The last two rows of Table~\ref{tab:time_memory} show this.} At $T = 2{,}048{,}000$, SNFP completes
in a mean of $409.6$ seconds\rev{, between $387.9$ and $439.4$ seconds across replications,} with a peak memory of $97.4$~MB. At $T = 100{,}000{,}000$ markets with $J = 4$ products,
or $400$ million product-market observations, SNFP completes in a mean of $19{,}643.4$ seconds\rev{, between $18{,}204.0$ and $20{,}982.2$ seconds}, about $5.5$ hours and
well within the budget, with a peak memory of $97.6$~MB. Peak memory is essentially unchanged between the two sample
sizes, so memory use does not grow with $T$.

\rev{Finally, the accuracy level column of Table~\ref{tab:time_memory} records how far the SNFP estimate is from the
full-sample estimator.} Let $\hat\theta^{\mathrm{NFP}}$ denote the NFP estimate and
$\widehat{\mathrm{se}}_k$ its standard error for coordinate $k$, computed on the same replication and seed, and let
$\bar\theta_t$ denote the Polyak--Ruppert average of the SNFP iterates after market $t$. Define
\[
  d_t \;=\; \max_k \frac{|\bar\theta_{t,k} - \hat\theta^{\mathrm{NFP}}_k|}{\widehat{\mathrm{se}}_k},
\]
\rev{the largest coordinate-wise distance between the two estimators in units of NFP standard errors. The accuracy
level reported in Table~\ref{tab:time_memory} is $d_T$, the distance after the last market, as the median across
the 10 replications. We report the median rather than the mean because the distance is right-skewed.
At $T = 2{,}000$, SNFP is
$1.73$ NFP standard errors from the NFP estimate, but the distance falls to $0.59$ at $T = 8{,}000$ and to $0.35$
at $T = 512{,}000$, so larger samples come closer to the NFP estimate from a single reading of the data. In
practice, the performance at small sample sizes can be improved by running multiple epochs, that is, additional
shuffled passes over the data; Appendix~\ref{appendix:additional_mc} reports these results.}

\subsection{Statistical Accuracy}\label{subsection:mc_accuracy}

Table~\ref{tab:stat_accuracy} documents the statistical accuracy of SNFP. In this simulation, we replicate each model $1{,}000$ times for the sample sizes
$T \in \{2{,}000,\, 8{,}000,\, 32{,}000,\, 128{,}000\}$. Given the computation time limit, NFP is applied only up to $T=32{,}000$.
The table reports, for each parameter, the bias, root mean squared error (RMSE), average 95\% confidence
interval length, and empirical coverage.
SNFP uses $T_0 = 1{,}000$ pilot markets and a sandwich
plug-in variance. Other tuning parameters are set to be the same as above.

% =============================================================================
% Numbers: single-epoch SNFP (SNFP v4 ep1 snapshots for T in {2k, 8k, 32k};
% v6 ep1 for T=128k) and NFP (second_trial, finite-only filter), n=1000 per
% tier. These are the SNFP-1 and NFP columns of the former tab:stat_accuracy_ep1,
% now tab:stat_accuracy_epochs in Appendix E. Three-epoch numbers live there.
% =============================================================================
\begin{table}[tbp]
\centering
\small
\caption{Comparison of NFP and SNFP.}
\label{tab:stat_accuracy}
\begin{threeparttable}
\begin{tabular}{ll *{8}{S[table-format=-1.3]}}
\toprule
& & \multicolumn{2}{c}{$T = 2{,}000$} & \multicolumn{2}{c}{$T = 8{,}000$} & \multicolumn{2}{c}{$T = 32{,}000$} & \multicolumn{2}{c}{$T = 128{,}000$} \\
\cmidrule(lr){3-4}\cmidrule(lr){5-6}\cmidrule(lr){7-8}\cmidrule(lr){9-10}
& & {NFP} & {SNFP} & {NFP} & {SNFP} & {NFP} & {SNFP} & {NFP} & {SNFP} \\
\midrule
& Bias \\
& \quad Intercept  &  0.003 &  0.061 & -0.003 &  0.014 &  0.001 &  0.007 &  {---} & -0.003 \\
& \quad $\beta_x$  &  0.001 &  0.009 & -0.001 &  0.003 &  0.000 &  0.002 &  {---} & -0.001 \\
& \quad $\beta_p$  & -0.002 & -0.086 &  0.006 & -0.025 & -0.001 & -0.014 &  {---} &  0.007 \\
& \quad $\sigma_x$ &  0.000 &  0.003 &  0.000 &  0.001 &  0.000 &  0.001 &  {---} &  0.000 \\
& \quad $\sigma_p$ & -0.008 &  0.028 & -0.004 & -0.009 &  0.000 & -0.006 &  {---} & -0.005 \\
[3pt] & RMSE \\
& \quad Intercept  &  0.089 &  0.133 &  0.045 &  0.053 &  0.023 &  0.025 &  {---} &  0.012 \\
& \quad $\beta_x$  &  0.025 &  0.044 &  0.013 &  0.016 &  0.006 &  0.007 &  {---} &  0.003 \\
& \quad $\beta_p$  &  0.176 &  0.234 &  0.088 &  0.101 &  0.045 &  0.050 &  {---} &  0.023 \\
& \quad $\sigma_x$ &  0.019 &  0.025 &  0.009 &  0.010 &  0.005 &  0.005 &  {---} &  0.002 \\
& \quad $\sigma_p$ &  0.114 &  0.099 &  0.054 &  0.056 &  0.028 &  0.030 &  {---} &  0.015 \\
[3pt] & CI length \\
& \quad Intercept  &  0.346 &  0.356 &  0.172 &  0.173 &  0.086 &  0.087 &  {---} &  0.043 \\
& \quad $\beta_x$  &  0.099 &  0.101 &  0.049 &  0.050 &  0.025 &  0.025 &  {---} &  0.012 \\
& \quad $\beta_p$  &  0.681 &  0.707 &  0.336 &  0.339 &  0.170 &  0.170 &  {---} &  0.084 \\
& \quad $\sigma_x$ &  0.074 &  0.075 &  0.037 &  0.037 &  0.018 &  0.018 &  {---} &  0.009 \\
& \quad $\sigma_p$ &  0.435 &  0.428 &  0.210 &  0.213 &  0.105 &  0.106 &  {---} &  0.052 \\
[3pt] & Coverage (95\%) \\
& \quad Intercept  &  0.946 &  0.839 &  0.939 &  0.910 &  0.943 &  0.923 &  {---} &  0.944 \\
& \quad $\beta_x$  &  0.955 &  0.800 &  0.954 &  0.885 &  0.948 &  0.931 &  {---} &  0.935 \\
& \quad $\beta_p$  &  0.945 &  0.898 &  0.951 &  0.925 &  0.940 &  0.918 &  {---} &  0.942 \\
& \quad $\sigma_x$ &  0.950 &  0.856 &  0.952 &  0.934 &  0.951 &  0.936 &  {---} &  0.950 \\
& \quad $\sigma_p$ &  0.973 &  0.970 &  0.954 &  0.953 &  0.946 &  0.937 &  {---} &  0.937 \\
\bottomrule
\end{tabular}
\begin{tablenotes}[flushleft]
\footnotesize
\item \textit{Notes.} The design uses $J = 4$, $\gamma_0 = 1$, $a = 0.667$,
SNFP \rev{reads each market once}, $T_0 = 1{,}000$ pilot, $1{,}000$ replications for each sample size and estimator.
$\theta^\ast = (-5,\, 1.5,\, -2,\, \sqrt{0.5},\, \sqrt{0.5})$. \rev{Where both estimators are reported, NFP and SNFP use
the same main-sample simulated data and integration draws within each replication. NFP is not reported at
$T = 128{,}000$ because it does not complete within the computation-time budget.}
\end{tablenotes}
\end{threeparttable}
\end{table}

\rev{From $T = 8{,}000$ onward, the statistical performance of SNFP is close to that of NFP. The biases
of both estimators are small relative to the RMSE, the RMSEs differ by at most $0.013$ at $T = 8{,}000$ and by at
most $0.005$ at $T = 32{,}000$, and the average confidence interval lengths agree to within $0.003$. As $T$
increases, both bias and RMSE decrease, and from $T = 8{,}000$ onward the RMSE of SNFP roughly halves with each
fourfold increase in $T$, the $T^{-1/2}$ rate predicted by the asymptotic theory in
Section~\ref{section:asymptotic_theory}. The empirical coverage of SNFP is below that of NFP for most parameters at
$T = 8{,}000$, where the lowest rate is $0.885$ for $\beta_x$ against $0.954$ for NFP, and the gap narrows as $T$
grows. At $T = 128{,}000$, where only SNFP is reported because of the computation-time budget, all five coverage
rates lie between $0.935$ and $0.950$.}

\rev{The exception is the smallest sample size, $T = 2{,}000$, where SNFP is noticeably
noisier than NFP for the linear coefficients. The RMSE of the intercept is $0.133$ against $0.089$, and the coverage
rates for the intercept, $\beta_x$ and $\sigma_x$ are $0.839$, $0.800$ and $0.856$ against $0.946$, $0.955$ and
$0.950$ for NFP. The under-coverage reflects bias rather than a mis-scaled variance estimate, since the SNFP
interval lengths match the NFP lengths to within $0.026$. The intervals have the right width but are not correctly
centered. This is the small-sample regime identified by the accuracy level of Table~\ref{tab:time_memory}. Together
with the computational results above, these findings support the scalability of SNFP for estimating BLP models with
substantially large sample sizes.}

\subsection{Simulation Error}\label{subsection:mc_simulation_error}

We now investigate the effect of simulation integration error arising from the random draws. In the simulations reported so far, the inner loop of both NFP and SNFP uses the same ($R=1{,}000$) draws as those used to generate the data. Thus, the simulated share function coincides with the data-generating share function, and simulation integration error is zero by construction. We now generate the data using effectively exact integration and estimate the model using independent random draws. For each market, the draws used in estimation are generated independently, with their number increasing with the sample size at the rate ($R(T)=\lceil 2T^{2/3}\rceil$), which satisfies the rate condition of Theorem~\ref{thm:negligible-sim}. All other aspects of the design, including the market-level data, the pilot sample, and the tuning parameters, are identical to those underlying Table~\ref{tab:stat_accuracy}\rev{, and SNFP again reads each market once}.

\rev{With fresh draws and $R$ growing faster than $\sqrt T$, the simulation error is negligible.} Table~\ref{tab:sim_error} reports the results, which are nearly indistinguishable from the SNFP columns of Table~\ref{tab:stat_accuracy}. \rev{From $T = 8{,}000$ onward, the RMSE and the confidence interval length differ from Table~\ref{tab:stat_accuracy} by at most $0.003$, the bias remains small relative to the RMSE, and the RMSE roughly halves with each fourfold increase in $T$. The coverage also follows Table~\ref{tab:stat_accuracy}. It is below the nominal level for some parameters at $T = 2{,}000$ and approaches $0.95$ as $T$ grows, reaching $0.94$--$0.96$ at $T = 128{,}000$. The only visible trace of the simulation error appears at $T = 2{,}000$, where $R = 318$ and the RMSE differs from Table~\ref{tab:stat_accuracy} by up to $9\%$.} As predicted by Theorem~\ref{thm:negligible-sim} and Corollary~\ref{thm:asymptotic_dist}, when $R$ grows faster than $\sqrt T$ the simulation error is asymptotically negligible and the plug-in sandwich variance remains valid\rev{, so the coverage approaches the nominal level as $T$ grows}.

% =============================================================================
% Table: SNFP with independent estimation draws, R(T) = ceil(2 T^(2/3)).
% Numbers: codes/2026_09_01_R_rate_T23_GH (Gauss-Hermite 40x40 exact-normal DGP,
% fresh per-market draws, ep1, 1000 reps per tier, sandwich CI). Same market
% primitives, T0, epochs, and step sizes as tab:stat_accuracy.
% =============================================================================
\begin{table}[!htbp]
\centering
\small
\caption{SNFP with independent simulation draws, $R(T) = \lceil 2T^{2/3} \rceil$.}
\label{tab:sim_error}
\begin{threeparttable}
\begin{tabular}{ll *{4}{S[table-format=-1.3]}}
\toprule
& & {$T = 2{,}000$} & {$T = 8{,}000$} & {$T = 32{,}000$} & {$T = 128{,}000$} \\
& & {$R = 318$} & {$R = 800$} & {$R = 2{,}016$} & {$R = 5{,}080$} \\
\midrule
& Bias \\
& \quad Intercept    &  0.058 &  0.015 &  0.007 &  0.001 \\
& \quad $\beta_x$    &  0.007 &  0.003 &  0.002 &  0.001 \\
& \quad $\beta_p$    & -0.086 & -0.029 & -0.014 & -0.004 \\
& \quad $\sigma_x$   &  0.006 &  0.003 &  0.001 &  0.000 \\
& \quad $\sigma_p$   &  0.030 & -0.008 & -0.005 & -0.004 \\
[3pt] & RMSE \\
& \quad Intercept    &  0.135 &  0.054 &  0.025 &  0.011 \\
& \quad $\beta_x$    &  0.040 &  0.016 &  0.007 &  0.003 \\
& \quad $\beta_p$    &  0.245 &  0.104 &  0.049 &  0.022 \\
& \quad $\sigma_x$   &  0.025 &  0.010 &  0.005 &  0.002 \\
& \quad $\sigma_p$   &  0.102 &  0.057 &  0.029 &  0.014 \\
[3pt] & CI length \\
& \quad Intercept    &  0.358 &  0.174 &  0.086 &  0.043 \\
& \quad $\beta_x$    &  0.100 &  0.049 &  0.024 &  0.012 \\
& \quad $\beta_p$    &  0.716 &  0.340 &  0.168 &  0.084 \\
& \quad $\sigma_x$   &  0.075 &  0.037 &  0.018 &  0.009 \\
& \quad $\sigma_p$   &  0.437 &  0.212 &  0.104 &  0.051 \\
[3pt] & Coverage (95\%) \\
& \quad Intercept    &  0.856 &  0.896 &  0.926 &  0.943 \\
& \quad $\beta_x$    &  0.819 &  0.886 &  0.927 &  0.936 \\
& \quad $\beta_p$    &  0.911 &  0.908 &  0.922 &  0.943 \\
& \quad $\sigma_x$   &  0.873 &  0.936 &  0.934 &  0.955 \\
& \quad $\sigma_p$   &  0.967 &  0.940 &  0.934 &  0.943 \\
\bottomrule
\end{tabular}
\begin{tablenotes}[flushleft]
\footnotesize
\item \textit{Notes.} The design uses $J = 4$, $\gamma_0 = 1$, $a = 0.667$,
\rev{SNFP reads each market once}, $T_0 = 1{,}000$ pilot, $1{,}000$ replications.
$\theta^\ast = (-5,\, 1.5,\, -2,\, \sqrt{0.5},\, \sqrt{0.5})$. Market shares in the data are integrated exactly over the
normal distribution of the random coefficients.
\end{tablenotes}
\end{threeparttable}
\end{table}

The experiment above draws a fresh block for each market, as Assumption~\ref{ass:sampling} requires. \rev{An alternative design}
shares one block across all markets, and Theorem~\ref{thm:common-draws} suggests that this may come at a cost. Our last
experiment examines this cost. We rerun the design of Table~\ref{tab:sim_error} at the sample sizes used throughout the
paper, $T \in \{2{,}000,\,8{,}000,\,32{,}000\}$, with $R$ a power of two between $128$ and $8{,}192$, so that a
scrambled Sobol' block is a complete net and $c = T/R$ ranges from $0.98$ to $62.5$. We compare three draw schemes on
the \emph{same} simulated data within each replication. Under \emph{per-market draws}, a fresh block is
drawn for each market. Under \emph{shared i.i.d.\ draws}, one i.i.d.\ block is used by every market and
every online update. Under \emph{shared Sobol' draws}, one block of scrambled Sobol' points is used in the
same way. The three schemes also share the same starting value within a replication. Ten $(T,R)$ combinations are
run, with $1{,}000$ replications each. Tables~\ref{tab:common_draws} and~\ref{tab:common_draws_cov} report the RMSE
and the coverage at the largest sample, where the effect in Theorem~\ref{thm:common-draws} is expected to be
largest. The results for $T = 2{,}000$ and $T = 8{,}000$ are reported in
Tables~\ref{tab:common_draws_T2000}--\ref{tab:common_draws_T8000_cov} of Appendix~\ref{appendix:common_draws_mc}.

All three schemes are consistent, as Theorem~\ref{thm:common-draws}(i) asserts, but they differ in precision
and inference. With per-market draws, the RMSE stays close to the full-sample GMM benchmark
$\sqrt{\Sigma_{\theta,jj}/T}$ at every $c$ (Table~\ref{tab:common_draws}) and coverage stays \rev{between $0.90$ and $0.95$}
(Table~\ref{tab:common_draws_cov}). With shared i.i.d.\ draws, the RMSE
grows with $c$ and the coverage of the plug-in sandwich interval falls well below the nominal level, most visibly for
$\beta_x$ and $\sigma_x$. Moreover, increasing $T$ with $R$ held fixed lowers the RMSE of $\beta_x$ substantially
under per-market draws but only slightly under shared i.i.d.\ draws. These patterns are consistent
with Theorem~\ref{thm:common-draws}(ii), which predicts a variance inflation that is linear in $c$, and with
Theorem~\ref{thm:common-draws}(iii), which predicts the loss of the $\sqrt T$ rate.

Two remedies appear to help. Widening the sandwich variance by the estimated inflation term
\eqref{eq:widened-variance} brings coverage back to the level obtained with per-market draws\rev{, $0.85$--$0.97$ against $0.80$--$0.97$ over all ten combinations}
(Table~\ref{tab:common_draws_cov}). Shared Sobol'
draws remain close to per-market draws at moderate $c$ and deteriorate far more slowly than shared i.i.d.\ draws as
$c$ grows, in line with Corollary~\ref{cor:qmc}. \rev{At $T = 32{,}000$ and $c \ge 31.3$, $\sqrt T/R$ is
$0.17$--$0.35$ and shared Sobol' coverage falls to $0.74$--$0.93$, against $0.90$--$0.95$ for per-market draws.
Corollary~\ref{cor:qmc} is asymptotic and makes no prediction at these $(T,R)$. The pattern shows that the
displacement of a shared low-discrepancy block can remain material at moderate $R$, in line with the bound
$\sqrt T\,\epsilon_R = O\bigl(c\,(\log R)^{d_v}/\sqrt T\bigr)$.} In practice, a shared block is therefore better drawn from a
low-discrepancy sequence, and a common i.i.d.\ block should be used with a small $c = T/R$ and corrected standard
errors.

\rev{At a fixed ratio $c = T/R$, the distortion from a shared i.i.d.\ block persists as $T$ grows, whereas that from a shared Sobol' block vanishes. Theorem~\ref{thm:common-draws} and Corollary~\ref{cor:qmc} predict this difference.}
Under a shared i.i.d.\ block, the inflation term $c\,\Lambda_\nu$ in Theorem~\ref{thm:common-draws}(ii) depends on
$c$ alone, so at fixed $c$ the \rev{relative loss in precision (the RMSE ratio to per-market draws)} and \rev{the} coverage should not change with $T$. Under a shared
low-discrepancy block, Corollary~\ref{cor:qmc} bounds the displacement by \rev{a multiple of $\epsilon_R$, and
$\epsilon_R = O\bigl(R^{-1}(\log R)^{d_v}\bigr)$ with $d_v = 2$ in this design,
so $\sqrt T\,\epsilon_R = O\bigl(c\,(\log R)^{d_v}/\sqrt T\bigr)$} vanishes as $T$ grows with $c$
fixed. Table~\ref{tab:common_draws_fixedc} collects the cells with $c = 15.6$ from
Tables~\ref{tab:common_draws}--\ref{tab:common_draws_cov} and
Tables~\ref{tab:common_draws_T2000}--\ref{tab:common_draws_T8000_cov}. \rev{With shared i.i.d.\ draws, the coverage of $\beta_x$ stays at $0.42$--$0.43$ and that of $\sigma_x$ at $0.44$--$0.48$ as $T$ rises from $2{,}000$ to $32{,}000$. The RMSE relative to per-market draws rises from $2.6$ to $3.5$ for $\beta_x$ and from $2.8$ to $3.2$ for $\sigma_x$, because with a single pass the per-market RMSE at $T = 2{,}000$ still carries the small-sample error of SNFP seen in Table~\ref{tab:stat_accuracy}. With shared Sobol' draws, the coverage of $\beta_x$ rises from $0.76$ to $0.92$ and that of $\sigma_x$ from $0.73$ to $0.90$ over the same range, and the RMSE ratio falls from $1.2$--$1.3$ to $1.1$. The constant coverage under shared i.i.d.\ draws is consistent with the constant-in-$T$ inflation of Theorem~\ref{thm:common-draws}(ii), and the rising coverage under shared Sobol' draws with the vanishing displacement of Corollary~\ref{cor:qmc}.}

In sum, the Monte Carlo evidence highlights two points. First, SNFP offers substantial savings in computation time
and memory relative to NFP and remains feasible at sample sizes where NFP does not finish. Second, the results are
consistent with the asymptotic theory of Section~\ref{section:asymptotic_theory}. SNFP performs comparably to NFP in
bias, RMSE, and coverage, and simulation error has little effect with a fresh block of draws for each market. With a
common block, the RMSE grows and the standard intervals undercover, in line with Theorem~\ref{thm:common-draws}.

% =============================================================================
% Table: common-draw study, T = 32,000 slice of
% simulations_in_the_paper/tab4_common_draws_v2 (table_common_draws_main.tex and
% the feasible \hat\Lambda_\nu block of table_common_draws_inference.tex).
% 1,000 replications per cell; epoch-1 snapshots. Numbers hard-coded on purpose:
% the generated 10-cell fragment is too wide at \small.
% =============================================================================
\begin{table}[!htbp]
\centering
\small
\caption{RMSE under the three draw schemes at $T = 32{,}000$.}
\label{tab:common_draws}
\begin{threeparttable}
\begin{tabular}{ll *{4}{S[table-format=1.3]}}
\toprule
& & {$c = 7.8$} & {$c = 15.6$} & {$c = 31.3$} & {$c = 62.5$} \\
& & {$R = 4{,}096$} & {$R = 2{,}048$} & {$R = 1{,}024$} & {$R = 512$} \\
\midrule
\multicolumn{6}{l}{\emph{Per-market draws}} \\
& \quad Intercept   & 0.023 & 0.023 & 0.024 & 0.024 \\
& \quad $\beta_x$   & 0.007 & 0.007 & 0.007 & 0.007 \\
& \quad $\beta_p$   & 0.046 & 0.046 & 0.046 & 0.048 \\
& \quad $\sigma_x$  & 0.005 & 0.005 & 0.005 & 0.005 \\
& \quad $\sigma_p$  & 0.027 & 0.027 & 0.027 & 0.027 \\
[3pt] \multicolumn{6}{l}{\emph{Shared i.i.d.\ draws}} \\
& \quad Intercept   & 0.026 & 0.029 & 0.034 & 0.042 \\
& \quad $\beta_x$   & 0.017 & 0.023 & 0.032 & 0.047 \\
& \quad $\beta_p$   & 0.059 & 0.067 & 0.084 & 0.110 \\
& \quad $\sigma_x$  & 0.012 & 0.016 & 0.022 & 0.031 \\
& \quad $\sigma_p$  & 0.039 & 0.046 & 0.059 & 0.081 \\
[3pt] \multicolumn{6}{l}{\emph{Shared Sobol' draws}} \\
& \quad Intercept   & 0.023 & 0.024 & 0.024 & 0.026 \\
& \quad $\beta_x$   & 0.007 & 0.007 & 0.008 & 0.010 \\
& \quad $\beta_p$   & 0.046 & 0.047 & 0.050 & 0.057 \\
& \quad $\sigma_x$  & 0.005 & 0.006 & 0.006 & 0.009 \\
& \quad $\sigma_p$  & 0.028 & 0.029 & 0.032 & 0.039 \\
\bottomrule
\end{tabular}
\begin{tablenotes}[flushleft]
\footnotesize
\item \textit{Notes.} $c = T/R$. The design is as in Table~\ref{tab:sim_error}
with $1{,}000$ replications. \rev{The Sobol' block is Owen-scrambled independently in each replication and shared by
the pilot and all online markets within the replication. Reported statistics average over data and scrambles.}
\end{tablenotes}
\end{threeparttable}
\end{table}

\begin{table}[!htbp]
\centering
\small
\caption{Coverage of $95\%$ confidence intervals under the three draw schemes at $T = 32{,}000$.}
\label{tab:common_draws_cov}
\begin{threeparttable}
\begin{tabular}{ll *{4}{S[table-format=1.3]}}
\toprule
& & {$c = 7.8$} & {$c = 15.6$} & {$c = 31.3$} & {$c = 62.5$} \\
& & {$R = 4{,}096$} & {$R = 2{,}048$} & {$R = 1{,}024$} & {$R = 512$} \\
\midrule
\multicolumn{6}{l}{\emph{Per-market draws}} \\
& \quad Intercept   & 0.934 & 0.942 & 0.936 & 0.933 \\
& \quad $\beta_x$   & 0.924 & 0.932 & 0.931 & 0.931 \\
& \quad $\beta_p$   & 0.932 & 0.938 & 0.935 & 0.924 \\
& \quad $\sigma_x$  & 0.935 & 0.937 & 0.920 & 0.898 \\
& \quad $\sigma_p$  & 0.949 & 0.946 & 0.946 & 0.950 \\
[3pt] \multicolumn{6}{l}{\emph{Shared i.i.d.\ draws}} \\
& \quad Intercept   & 0.901 & 0.888 & 0.805 & 0.720 \\
& \quad $\beta_x$   & 0.556 & 0.417 & 0.301 & 0.217 \\
& \quad $\beta_p$   & 0.859 & 0.824 & 0.700 & 0.595 \\
& \quad $\sigma_x$  & 0.548 & 0.442 & 0.334 & 0.236 \\
& \quad $\sigma_p$  & 0.822 & 0.747 & 0.626 & 0.494 \\
[3pt] \multicolumn{6}{l}{\emph{Shared i.i.d.\ draws, widened interval}} \\
& \quad Intercept   & 0.944 & 0.942 & 0.941 & 0.919 \\
& \quad $\beta_x$   & 0.954 & 0.943 & 0.944 & 0.936 \\
& \quad $\beta_p$   & 0.941 & 0.949 & 0.941 & 0.917 \\
& \quad $\sigma_x$  & 0.939 & 0.943 & 0.949 & 0.931 \\
& \quad $\sigma_p$  & 0.935 & 0.932 & 0.928 & 0.902 \\
[3pt] \multicolumn{6}{l}{\emph{Shared Sobol' draws}} \\
& \quad Intercept   & 0.936 & 0.933 & 0.923 & 0.909 \\
& \quad $\beta_x$   & 0.927 & 0.915 & 0.876 & 0.812 \\
& \quad $\beta_p$   & 0.934 & 0.927 & 0.927 & 0.861 \\
& \quad $\sigma_x$  & 0.923 & 0.901 & 0.843 & 0.738 \\
& \quad $\sigma_p$  & 0.938 & 0.934 & 0.894 & 0.836 \\
\bottomrule
\end{tabular}
\begin{tablenotes}[flushleft]
\footnotesize
\item \textit{Notes.} Coverage uses the plug-in sandwich variance, which is
widened by the estimated inflation term \eqref{eq:widened-variance} in the widened-interval panel. See
Table~\ref{tab:common_draws}.
\end{tablenotes}
\end{threeparttable}
\end{table}

\begin{table}[!htbp]
\centering
\small
\caption{Shared draws at a fixed ratio $c = T/R = 15.6$: RMSE and coverage of $95\%$ confidence intervals.}
\label{tab:common_draws_fixedc}
\begin{threeparttable}
\begin{tabular}{ll *{3}{S[table-format=1.3]} *{3}{S[table-format=1.3]}}
\toprule
& & \multicolumn{3}{c}{RMSE} & \multicolumn{3}{c}{Coverage} \\
\cmidrule(lr){3-5} \cmidrule(lr){6-8}
& & {$T = 2{,}000$} & {$T = 8{,}000$} & {$T = 32{,}000$} & {$T = 2{,}000$} & {$T = 8{,}000$} & {$T = 32{,}000$} \\
& & {$R = 128$} & {$R = 512$} & {$R = 2{,}048$} & {$R = 128$} & {$R = 512$} & {$R = 2{,}048$} \\
\midrule
\multicolumn{8}{l}{\emph{Per-market draws}} \\
& \quad $\beta_x$   & 0.042 & 0.016 & 0.007 & 0.802 & 0.895 & 0.932 \\
& \quad $\sigma_x$  & 0.026 & 0.011 & 0.005 & 0.860 & 0.926 & 0.937 \\
[3pt] \multicolumn{8}{l}{\emph{Shared i.i.d.\ draws}} \\
& \quad $\beta_x$   & 0.107 & 0.045 & 0.023 & 0.428 & 0.420 & 0.417 \\
& \quad $\sigma_x$  & 0.072 & 0.032 & 0.016 & 0.465 & 0.475 & 0.442 \\
[3pt] \multicolumn{8}{l}{\emph{Shared Sobol' draws}} \\
& \quad $\beta_x$   & 0.051 & 0.018 & 0.007 & 0.756 & 0.839 & 0.915 \\
& \quad $\sigma_x$  & 0.034 & 0.013 & 0.006 & 0.733 & 0.848 & 0.901 \\
\bottomrule
\end{tabular}
\begin{tablenotes}[flushleft]
\footnotesize
\item \textit{Notes.} Entries are taken from Tables~\ref{tab:common_draws}, \ref{tab:common_draws_cov},
\ref{tab:common_draws_T2000}, \ref{tab:common_draws_T2000_cov}, \ref{tab:common_draws_T8000}, and
\ref{tab:common_draws_T8000_cov}. Coverage uses the plug-in sandwich variance without correction. The design is as
in Table~\ref{tab:sim_error} with $1{,}000$ replications.
\end{tablenotes}
\end{threeparttable}
\end{table}

\FloatBarrier
\section{Empirical Application: Yogurt Demand}\label{section:empirical}

We apply SNFP to yogurt demand using store-level scanner data from the Information Resources Inc.\ (IRI) dataset \citep{bronnenberg2008database} for 2001--2012. A market is a store--week pair and a product is a brand. We keep the top five brands in each store--week. \rev{The instruments are a constant and a cubic polynomial in the product's own lagged price. Table~\ref{tab:product_summary} reports summary statistics for the full sample with  $T=817{,}031$ markets.}

\begin{table}[!b]
\centering
\caption{Summary Statistics for the Full Sample}
\label{tab:product_summary}
\begin{threeparttable}
\begin{tabular}{lcccc}
\toprule
 & Mean & SD & Min & Max \\
\midrule
Price (2012 dollars per 6 oz) & 0.746 & 0.265 & 0.076 & 5.604 \\
Product share (\%)            & 0.448 & 0.958 & 0.000 & 87.6 \\
Outside share (\%)            & 97.8  & 3.84  & 0.05  & 100.0 \\
Products per market, $J_t$    & 4.94  & 0.26  & 1     & 5 \\
\midrule
Markets, $T$ & \multicolumn{4}{c}{817,031} \\
Weeks & \multicolumn{4}{c}{625} \\
Brands & \multicolumn{4}{c}{253} \\
\bottomrule
\end{tabular}
\begin{tablenotes}
\footnotesize
\item \textit{Notes:} A market is a store--week, and a product is one of its top five brands. The lagged price, which is the instrument, has nearly the same distribution as the price.
\end{tablenotes}
\end{threeparttable}
\end{table}

\begin{table}[!tb]
\centering
\caption{Full-Sample Demand Estimates for the Yogurt Market}
\label{tab:yogurt_estimates}
\begin{threeparttable}
\begin{tabular}{lccc}
\toprule
&
\multicolumn{2}{c}{Standard Logit}
&
\multicolumn{1}{c}{Random-Coefficient Logit} \\
\cmidrule(lr){2-3} \cmidrule(lr){4-4}
&
\multicolumn{1}{c}{OLS}
&
\multicolumn{1}{c}{IV}
&
\multicolumn{1}{c}{SNFP} \\
\midrule
Intercept
    & \rev{-4.871} & \rev{-4.968} & \rev{-2.902} \\
    & \rev{(0.002)} & \rev{(0.002)} & \rev{(0.032)} \\

Price
    & -1.793 & \rev{-1.663} & \rev{-7.145} \\
    & \rev{(0.003)} & (0.003) & \rev{(0.101)} \\

$\sigma_{\text{price}}$
    & --- & --- & \rev{2.767} \\
    & --- & --- & \rev{(0.035)} \\

\midrule
Own-price elasticity
    & \rev{-1.332} & \rev{-1.236} & \rev{-1.869}\\
    & \rev{(0.475)} & \rev{(0.441)} & \rev{(0.292)}\\
Cross-price elasticity
    & 0.005 & 0.005 & \rev{0.020}\\
    & (0.011) & (0.010) & \rev{(0.042)}\\

\midrule
Observations (Markets)
    & \rev{817,031} & 817,031 & 817,031 \\
\bottomrule
\end{tabular}

\begin{tablenotes}
\footnotesize
\item \textit{Notes:} Standard errors are in parentheses below the coefficients\rev{, heteroskedasticity-robust for OLS and IV and plug-in sandwich for SNFP}. For the elasticities, the table reports means across observations with standard deviations in parentheses.
\end{tablenotes}
\end{threeparttable}
\end{table}

The model has a random coefficient on price and a fixed intercept. Price is measured in 2012 dollars per 6-oz equivalent.\footnote{Market size is the population within a 2-mile radius of the store, scaled by seven, assuming each person consumes one 6-oz serving per day.} \rev{We estimate the model by SNFP, as set out in Algorithm~\ref{alg:snfp-skeleton}. A pilot of $T_0 = 5{,}000$ randomly drawn markets supplies the starting value and the scaling matrix $\hat A_{T_0}$, which is held fixed afterwards.\footnote{The pilot has to be drawn at random. The data are ordered by store and week, and a pilot of the first $1{,}000$ markets does not pin down $\sigma_{\text{price}}$ well enough to fix the scaling matrix.} The online pass visits the remaining $812{,}031$ markets once, in random order. Simulated shares use $R = 1{,}024$ fresh draws per market, as in Assumption~\ref{ass:sampling}. The standard errors are the plug-in sandwich standard errors of Corollary~\ref{thm:asymptotic_dist}.} \rev{For comparison, we also estimate the standard logit model without instruments (OLS) and with the same instruments (IV) as SNFP.}

Table~\ref{tab:yogurt_estimates} reports the estimates. \rev{The random-coefficient model and the standard logit model give very different pictures of demand. In the logit model every consumer has the same price coefficient, $-1.793$ by OLS and $-1.663$ by IV. SNFP estimates a mean price coefficient of $-7.145$ and a standard deviation of $\sigma_{\text{price}} = 2.767$ across consumers, so price sensitivity is strongly heterogeneous. The coefficients of the two models are not directly comparable, but their elasticities are. The mean own-price elasticity is $-1.869$ under SNFP, about $50\%$ larger in magnitude than the IV logit's $-1.236$.\footnote{At the estimates, $0.49\%$ of simulated consumers have a positive price coefficient. We cap individual price coefficients at $-0.001$ when computing elasticities. Without the cap the mean own-price elasticity is $-1.523$.} The two models also differ in how elasticities vary across brands. In the logit model the own-price elasticity is proportional to price, so the most expensive brands are the most elastic. In the random-coefficient model expensive brands attract the less price-sensitive consumers, and their demand is less elastic than that of cheap brands. Cross-price elasticities are small in both models, because the outside option accounts for about $98\%$ of each market. They are four times larger under SNFP, $0.020$ against $0.005$. In the logit model substitution depends only on market shares, whereas in the random-coefficient model consumers substitute more toward brands with similar prices.}

\rev{Estimating the model on a subsample, a common way to reduce the computational burden of NFP, gives misleading estimates. Figure~\ref{fig:subsample_compare} shows this by comparing the full-sample SNFP estimate with SNFP estimates from $1{,}000$ random subsamples at each of three sizes: 1\%, 5\%, and 25\% of the markets. Every replication uses the same pilot, scaling matrix, and per-market simulation draws as the full-sample run and makes one pass. Only the sampled markets and the order in which they are visited differ.}

\begin{figure}[!t]
    \centering
    \caption{\rev{SNFP} Estimates from Repeated Subsamples}
    \label{fig:subsample_compare}
    \vspace{2pt}
    \includegraphics[width=0.95\textwidth]{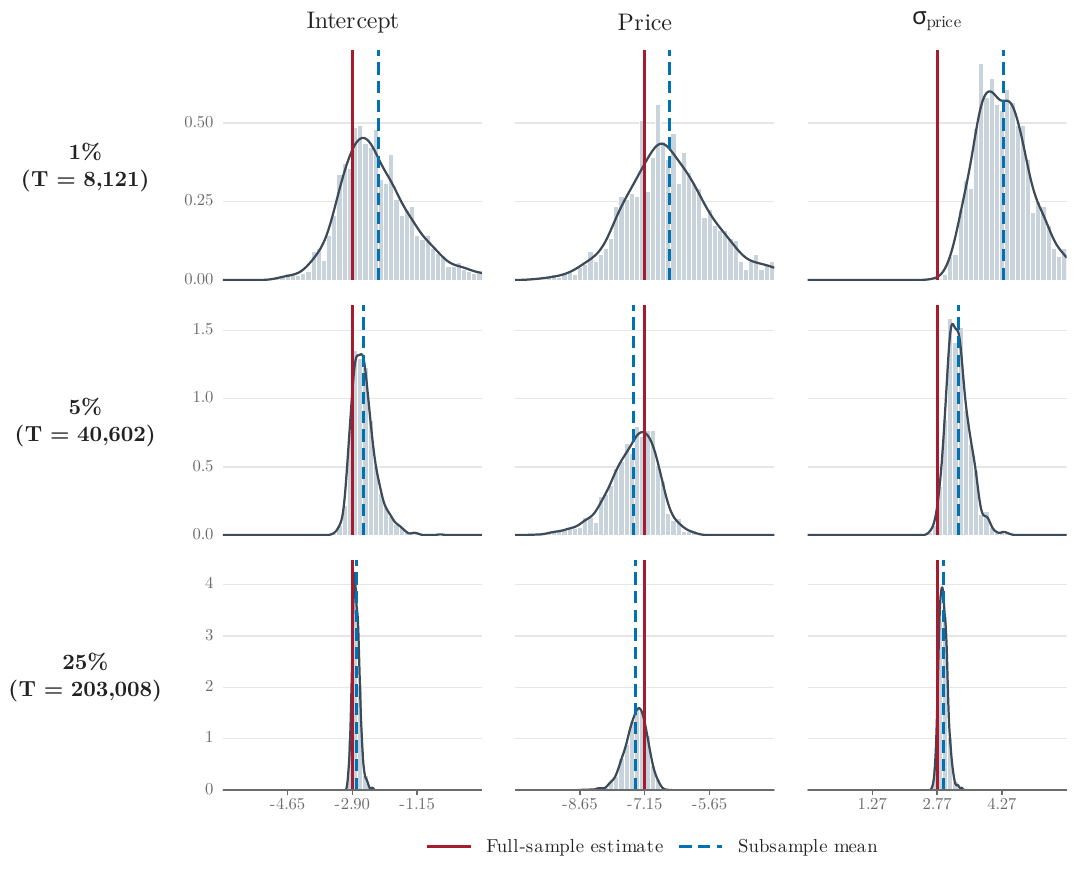}

    \vspace{2pt}
    \begin{minipage}{0.95\textwidth}
    \footnotesize \rev{\textit{Notes:} Red solid line: full-sample estimate. Blue dashed line: mean of the subsample estimates.}
    \end{minipage}
\end{figure}

\rev{The subsample estimates are biased relative to the full-sample estimate. The full-sample estimate of $\sigma_{\text{price}}$ is $2.767$. The subsample estimates average $4.294$ at 1\%, $3.261$ at 5\%, and $2.910$ at 25\%. At 1\% and 5\%, almost the entire distribution lies above the full-sample value. The bias shrinks as the subsample grows, but it is still visible at 25\%. The subsample estimates are also dispersed. The standard deviation of the $\sigma_{\text{price}}$ estimates is $0.618$ at 1\%, $0.270$ at 5\%, and $0.100$ at 25\%. Together, the bias and the dispersion make a small subsample unreliable. At 1\%, the middle $90\%$ of the $\sigma_{\text{price}}$ estimates ranges from $3.37$ to $5.40$, and only $0.4\%$ of the replications lie within $10\%$ of the full-sample value. This share rises to $20\%$ at 5\% and to $91\%$ at 25\%. The root mean squared error of $\sigma_{\text{price}}$ falls from $1.65$ at 1\% to $0.56$ at 5\% and $0.17$ at 25\%. It falls faster than the square root of the number of markets, because the bias shrinks along with the dispersion. The real data thus show the pattern of the simulation in Figure~\ref{fig:small_sample_blp}. With few markets the estimates are dispersed and displaced, and they concentrate around the full-sample value only as the number of markets grows. Using every market pays off in both precision and accuracy, and SNFP makes it feasible.}

\FloatBarrier

\section{Conclusion}\label{section:conclusion}

This paper develops the stochastic nested fixed point (SNFP) algorithm, an online stochastic-GMM estimator for the
random coefficients logit demand model. SNFP embeds the BLP demand inversion inside a stochastic gradient update and
processes one market at a time, so the memory required per step is $O(J \cdot R)$, independent of the number of markets
$T$. The method therefore scales to the disaggregated regime with $T > 10^6$ markets on standard hardware, \rev{where
full-sample nested fixed point (NFP) estimation becomes costly in both time and memory}.

We show that the Polyak--Ruppert average of the SNFP iterates is $\sqrt T$-consistent and asymptotically normal with the
classical GMM sandwich covariance, attaining the same asymptotic efficiency as the offline GMM estimator under the same
weighting matrix. \rev{The theory also covers implementations that share one block of draws across markets. Sharing
an i.i.d.\ block adds a variance term proportional to $T/R$ that the plug-in standard errors omit. A shared block of
quasi-Monte Carlo draws, such as a scrambled Sobol' or Halton sequence, preserves the $\sqrt T$ rate and the usual
inference, provided its integration error vanishes faster than $T^{-1/2}$.} The Monte Carlo
experiments are equally satisfactory. \rev{With a single pass over the data,} SNFP reproduces the bias \rev{and} RMSE of NFP \rev{at all but the
smallest sample size, its coverage approaches the nominal level as $T$ grows},
while \rev{it cuts} wall-clock time by one to two orders of magnitude and \rev{remains} well calibrated at sample sizes that NFP
\rev{does not reach within our time budget}. This paper suggests several extensions. The most natural is to dynamic discrete choice models with large
state spaces, where the inner problem is a Bellman fixed point rather than a demand inversion and the repeated solution
of the dynamic program over the full state space is the binding computational constraint. An online estimator
that
updates the structural parameters observation by observation while solving the dynamic program only locally could bring
estimation within reach at scales that are currently prohibitive. A complementary direction is a stochastic MPEC variant
that folds the inner fixed point into a constraint within the online update.

\newpage

\bibliography{SGD_BLP}

\clearpage

%---------------------------------------------------------------------------------------------------------------
%---------------------------------------------------------------------------------------------------------------
%
% Appendix 
%
%---------------------------------------------------------------------------------------------------------------
%---------------------------------------------------------------------------------------------------------------

\appendix

\begin{center}
{\Large\bfseries Appendix}
\end{center}

\section{Notation and Preliminary Lemmas}\label{appendix:lemmas}

This appendix collects the notation and the preliminary lemmas. These cover regularity of the BLP moment and
moment bounds on the simulation error, both implied by Assumption~\ref{ass:regular-markets}
(Lemmas~\ref{lem:blp-regularity}--\ref{lem:sim-error}), a deterministic recursion bound
(Lemma~\ref{lem:recursion-bound}), $L^2$ rates and a containment result for the iterates (Lemma~\ref{lem:L2_rate}),
and the asymptotic inactivity of the projection (Lemma~\ref{lem:projection}). The theorems are proved in
Appendix~\ref{appendix:thm1_proof}.

Throughout the appendix, $C < \infty$ and $c > 0$ denote generic constants that do not depend on $t$, $T$, $T_0$,
or $R$ and may change from line to line. Recall $\hat A := \hat A_{T_0}$ and write $\Delta A := \hat A - A^\ast$.
The filtration is
\[
  \mathcal F_{T_0} := \sigma\bigl(O_1,\ldots,O_{T_0},\{\nu_r^{\mathrm{pilot}}\}_{r=1}^R\bigr), \qquad
  \mathcal F_t := \mathcal F_{T_0} \vee \sigma\bigl(O_s,\{\nu_{r,s}\}_{r=1}^R : T_0 < s \le t\bigr), \quad t > T_0,
\]
so that $\hat\theta_{T_0}$ and $\hat A$ are $\mathcal F_{T_0}$-measurable and, for $t > T_0$, the pair
$(O_t, \{\nu_{r,t}\}_r)$ is independent of $\mathcal F_{t-1}$ by Assumption~\ref{ass:sampling}. Define
\[
  g_t^R(\theta) := g(O_t;\theta,P_t^R), \qquad g_t(\theta) := g(O_t;\theta,P^\ast), \qquad
  g_t^\circ := g_t(\theta^\ast),
\]
\[
  \bar m(\theta) := E[g(O;\theta,P^\ast)], \qquad \bar m_R(\theta) := E[g(O;\theta,P^R)], \qquad
  \beta_R(\theta) := \bar m_R(\theta) - \bar m(\theta),
\]
\[
  u_t^R(\theta) := g_t^R(\theta) - g_t(\theta), \qquad \zeta_t(\theta) := g_t(\theta) - \bar m(\theta), \qquad
  e_t := \theta_t - \theta_t^o,
\]
and $V_t := \|\theta_t - \theta^\ast\|^2$, $V_t^o := \|\theta_t^o - \theta^\ast\|^2$,
$\Gamma_t := \sum_{s=T_0+1}^t \gamma_s$. By Lemma~\ref{lem:blp-regularity} below, $\bar m$ is
continuously differentiable with Jacobian $G(\theta) := G(\theta,P^\ast)$ satisfying
$\|G(\theta) - G(\theta')\| \le \bar L\|\theta - \theta'\|$ for some $\bar L < \infty$. Since
$A^\ast G^\ast = I_d$ and $G$ is Lipschitz, we may fix $r_0 > 0$ such that the closed ball
$N_0 := \{\theta : \|\theta - \theta^\ast\| \le r_0\} \subseteq \Theta$ and the symmetric part of $A^\ast G(\theta)$
is bounded below by $\tfrac12 I_d$ on $N_0$. Finally, fix a small constant $r_1 > 0$, whose value is determined in
the proofs, and define the $\mathcal F_{T_0}$-measurable event
\[
  E_0 := \bigl\{\|\Delta A\| \le r_1\bigr\} \cap \bigl\{\|\hat\theta_{T_0} - \theta^\ast\| \le r_0/2 \bigr\};
\]
$\mathbb P(E_0) \to 1$ by Assumption~\ref{ass:pilot} together with $T_0 \to \infty$ and $R \to \infty$
(Assumption~\ref{ass:rates}).

The first lemma derives the regularity of the BLP moment function from the regular-markets condition. It replaces
high-level envelope, Lipschitz, and smoothness assumptions on $g$. All such properties are consequences of the
structure of the mixed-logit model.

\begin{lemma}[Regularity of the BLP moment]\label{lem:blp-regularity}
Under Assumptions~\ref{ass:sampling}, \ref{ass:param-space}, and \ref{ass:regular-markets}, there exists $C < \infty$ such that, uniformly in
$\theta \in \Theta$ and $t$:
\begin{enumerate}
\item[(i)] almost surely, $\|g(O_t;\theta,P)\| \le C$ and $\|\nabla_\theta g(O_t;\theta,P)\| \le C$, and both maps
are $C$-Lipschitz in $\theta$ on $\Theta$, where $P$ denotes either $P^\ast$ or any realization of $P_t^R$.
\item[(ii)] $\bar m$ is continuously differentiable on $\Theta$ with $C$-Lipschitz Jacobian $G(\cdot)$, and
$\Omega(\theta) := \mathrm{Var}[g(O;\theta,P^\ast)]$ is continuous on $\Theta$.
\end{enumerate}
\end{lemma}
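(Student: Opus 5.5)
The strategy is to control the inverted mean utility $\tilde\delta_t(S_t,\theta_2;P)$ and its derivatives uniformly, by combining the implicit function theorem with the share bounds in Assumption~\ref{ass:regular-markets}. Since $g(O_t;\theta,P)=Z_t'(\tilde\delta_t(S_t,\theta_2;P)-X_t\theta_1)$ and $\|Z_t\|,\|X_t\|\le C$ with $J_t\le\bar J$, everything reduces to the following three facts about $\tilde\delta_t$, uniformly over $\theta\in\Theta$, markets, and measures $P\in\{P^\ast,P_t^R\}$:
\begin{enumerate}
\item[(a)] $\tilde\delta_t$ is bounded;
\item[(b)] $\partial\tilde\delta_t/\partial\theta_2$ is bounded;
\item[(c)] $\partial\tilde\delta_t/\partial\theta_2$ is Lipschitz in $\theta_2$.
\end{enumerate}
Then $\nabla_\theta g=(-Z_t'X_t,\;Z_t'\partial\tilde\delta_t/\partial\theta_2')$, and (i) follows immediately.

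\textbf{Step (a).} Work on the compact $\Theta$ and the bounded characteristics. The individual deviation $\mu_{ijt}=x_{jt}'\Sigma v$ depends on the draw $v$. For $P^\ast$ we cannot bound $v$, so we argue through shares instead. Fix a market and consider the fixed point, where $\tilde s_j(\delta,\theta_2;P)=S_{jt}$. Suppose $\delta_j-\delta_k$ or $\delta_j$ were large. Because $\mu$ enters the logit additively with $\|x\|\le C_0$ and $\|\Sigma\|\le C$, any draw-by-draw ratio $\sigma_j/\sigma_0$ equals $\exp(\delta_j+\mu_j)$. For $P^\ast$ (Gaussian), Jensen/log-sum bounds give $\log(\tilde s_j/\tilde s_0)\in[\delta_j-C,\delta_j+C']$ after a Gaussian moment computation. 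Since $S_j/S_0\in[\underline s,1/\underline s]$, $\delta_j$ is bounded. For an empirical measure $P_t^R$, the bound is less clean, because draws may be large. One option is to work conditionally on the realization and allow $C$ to depend on $\max_r\|\nu_r\|$. I expect the intended route is instead the observation that the proof only needs bounds stated "for any realization" via the share identity. Concretely, at the fixed point the simulated shares equal $S_t$, which are bounded below. I would prove the bound using only that identity together with $\mu$ being common across products up to $x_{jt}'\Sigma v$ with $|x|\le C_0$. This is the main obstacle. If a fully realization-free bound fails, I would note that the constant is uniform on the event $\max_r\|\nu_r\|\le C\sqrt{\log R}$, or truncate the draws.

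\textbf{Step (b).} By the implicit function theorem,
\[
\frac{\partial\tilde\delta}{\partial\theta_2'}=-\Bigl[\frac{\partial\tilde s}{\partial\delta'}\Bigr]^{-1}\frac{\partial\tilde s}{\partial\theta_2'}.
\]
The Jacobian $D=\partial\tilde s/\partial\delta'=\int(\mathrm{diag}(\sigma)-\sigma\sigma')\,dP$ is strictly diagonally dominant. Its diagonal is $\int\sigma_j(1-\sigma_j)$, which exceeds the off-diagonal row sum by $\int\sigma_j\sigma_0$. At the fixed point $\tilde s_j=S_j\ge\underline s$ and the outside share is $\ge\underline s$. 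Together with step (a), which bounds the draw-level $\sigma_0/\sigma_j$ ratios, this gives $\int\sigma_j\sigma_0\ge c$. Hence $\|D^{-1}\|\le C$ by the Levy--Desplanques/Varah bound. The derivative $\partial\tilde s/\partial\theta_2$ equals $\int(\mathrm{diag}(\sigma)-\sigma\sigma')X_t(\cdot)v\,dP$, which is bounded because $\int\|v\|\,dP$ is bounded. For $P_t^R$ this requires the empirical moments of the draws, and the same caveat as in step (a) applies.

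\textbf{Step (c).} Both $D$ and $\partial\tilde s/\partial\theta_2$ are smooth in $(\delta,\theta_2)$, with derivatives bounded by polynomial moments of $v$. Their composition with the Lipschitz map $\theta_2\mapsto\tilde\delta$ is therefore Lipschitz, and matrix inversion is Lipschitz on $\{\|D^{-1}\|\le C\}$. This proves (c) and completes (i).

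\textbf{Part (ii).} The almost-sure bounds from (i) with $P=P^\ast$ allow dominated convergence to interchange differentiation and expectation. This gives $\bar m$ continuously differentiable with $G(\theta)=E[\nabla_\theta g]$, and the Lipschitz constant passes through the expectation. Continuity of $\Omega(\theta)=E[gg']-\bar m\bar m'$ follows from boundedness and continuity of $g$ in $\theta$, again by dominated convergence.
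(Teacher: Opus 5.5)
Your route is the same as the paper's. You reduce everything to uniform control of $\tilde\delta(S_t,\theta_2;P)$ and its $\theta_2$-derivative, bound the inverse share Jacobian at the fixed point by strict diagonal dominance, apply the implicit function theorem, read off (i) from $g=Z_t'(\tilde\delta-X_t\theta_1)$, and obtain (ii) by dominated convergence. The paper asserts the boundedness of $\tilde\delta$ and of $[\partial\tilde s/\partial\delta']^{-1}$ in one line and omits the verification. Two of your steps supply what it skips:
\begin{itemize}
\item your step (a), which reweights by $\sigma_0$ and works precisely because at the fixed point the normalizer is $S_{0t}\ge\underline s$;
\item your observation that the dominance margin $\int\sigma_j\sigma_0\,dP$ needs draw-level control, not just $\tilde s_j,\tilde s_0\ge\underline s$.
\end{itemize}
One tightening is needed. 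The ratio $\sigma_0/\sigma_j=e^{-\delta_j-x_{jt}'\Sigma\nu}$ is bounded only for bounded $\nu$. So derive the margin from the pointwise bound $\sigma_j\sigma_0\ge c\,e^{-C\|\nu\|}$, with $c>0$ depending on the step-(a) bound, combined with Jensen's inequality, $\int e^{-C\|\nu\|}\,dP^\ast\ge e^{-C\,E\|\nu\|}$.

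The obstacle you flag for $P=P_t^R$ cannot be argued away, because part (i) is false in that generality for Gaussian draws. Suppose every draw in the block lies within distance one of a point $\nu$; for each $\nu$ this has positive probability. Then $\tilde\delta_j=\log(S_{jt}/S_{0t})-x_{jt}'\Sigma\nu+O(1)$. Hence $\tilde\delta$ and $\partial\tilde\delta/\partial\theta_2$, and generically $g$ and $\nabla_\theta g$, grow linearly in $\|\nu\|$ whenever $\Sigma'x_{jt}\neq 0$. The paper's appeal to integrable Gaussian envelopes controls $P^\ast$ and expectations over the draws, not individual realizations. So no constant independent of the draws exists, and you should stop looking for a realization-free argument.

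What is true is (i) for $P_t^R$ with $C$ replaced by a random constant $K_t=C(1+M_t)^k$, where $M_t:=R^{-1}\sum_r e^{c\|\nu_{r,t}\|}$. Running your steps (a)--(c) with Jensen applied to the empirical measure gives exactly such a bound. The power-mean inequality then gives $E[M_t^p]\le E[e^{pc\|\nu\|}]<\infty$ uniformly in $R$. This is what the later arguments need, for example Lemma~\ref{lem:sim-error}(i) and the conditional-on-block analysis of Appendix~\ref{appendix:common_draws}. Alternatively, truncating the draws restores the statement verbatim.

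Your other fallback, the event $\{\max_r\|\nu_r\|\le C\sqrt{\log R}\}$, does not give an $R$-free constant if the bound runs through the maximum. Work with empirical exponential moments instead.
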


\begin{proof}
Write $\sigma_j(\delta,\theta_2,\nu)$ for the inner logit choice probability, so that the (simulated) share map is
$\tilde s_j(\delta,\theta_2;P) = \int \sigma_j(\delta,\theta_2,\nu)\,dP(\nu)$, linear in $P$. Each $\sigma_j$ takes
values in $(0,1)$ and is smooth in $(\delta,\theta_2)$ with derivatives bounded by polynomials in $\|\nu\|$ with
coefficients controlled by $C_0$ and the compactness of $\Theta$ (Assumptions~\ref{ass:param-space} and
\ref{ass:regular-markets}). Since the mixing distribution is Gaussian, all derivative envelopes are integrable.
At the solution
$\tilde\delta_t = \tilde\delta(S_t,\theta_2;P)$ of $S_t = \tilde s(\delta,\theta_2;P)$, which exists and is unique
by \citet{berry1994estimating} and \citet{berry2013connected}, the fitted shares equal the observed shares, so
$\tilde s_j(\tilde\delta_t) = S_{jt} \ge \underline s$ for $j = 0, 1, \ldots, J_t$ by
Assumption~\ref{ass:regular-markets}(iii). The share Jacobian $\partial \tilde s/\partial\delta'$ is then invertible at
$\tilde\delta_t$ with inverse bounded by a constant depending only on $(\bar J, C_0, \underline s)$, by the strict
diagonal dominance of the mixed-logit share Jacobian when all fitted shares are bounded away from zero. The
implicit function theorem therefore yields that $\theta_2 \mapsto \tilde\delta(S_t,\theta_2;P)$ is bounded and
Lipschitz, with bounded and Lipschitz derivative, uniformly over $\theta \in \Theta$, over markets satisfying
Assumption~\ref{ass:regular-markets}, and over simulation measures $P$. The detailed verification follows the
arguments of
\citet{berry2004limit} and \citet{freyberger2015asymptotic} and is omitted. Part (i) follows since
$g = Z_t'(\tilde\delta_t - X_t\theta_1)$ with $\|Z_t\|, \|X_t\|$ bounded. Part (ii) follows from (i) by dominated
convergence.
\end{proof}

The second lemma records the moment bounds on the simulation error.

\begin{lemma}[Simulation-error moments]\label{lem:sim-error}
Under Assumptions~\ref{ass:sampling}, \ref{ass:param-space}, and \ref{ass:regular-markets}, uniformly in $\theta, \theta' \in \Theta$ and
$t > T_0$:
\begin{enumerate}
\item[(i)] $\sup_\theta E\|g_t^R(\theta)\|^2 \le C$ and $\sup_\theta E\|g_t(\theta)\|^2 \le C$.
\item[(ii)] $\|\beta_R(\theta)\| \le C R^{-1}$, and $\theta \mapsto \beta_R(\theta)$ is Lipschitz with constant
$C R^{-1}$.
\item[(iii)] $E\|u_t^R(\theta)\|^2 \le C R^{-1}$.
\item[(iv)] $E\|u_t^R(\theta) - u_t^R(\theta')\|^2 \le C R^{-1}\|\theta - \theta'\|^2$.
\end{enumerate}
Moreover, for any $\mathcal F_{t-1}$-measurable $\tilde\theta$, $E[u_t^R(\tilde\theta) \mid \mathcal F_{t-1}]
= \beta_R(\tilde\theta)$ and the bounds in (i), (iii), and (iv) hold for the conditional moments given
$\mathcal F_{t-1}$ with $\theta$ replaced by $\tilde\theta$.
\end{lemma}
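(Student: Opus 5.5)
The plan is to reduce everything to a second-order expansion of the BLP inversion in the simulation error of the shares, carried out market by market with $\theta$ held fixed. Fix a market $O_t$ and $\theta\in\Theta$, and write $\tilde\delta^R := \tilde\delta(S_t,\theta_2;P_t^R)$ and $\tilde\delta^\ast := \tilde\delta(S_t,\theta_2;P^\ast)$. Since $g_t^R(\theta)-g_t(\theta) = Z_t'(\tilde\delta^R-\tilde\delta^\ast)$, all four statements reduce to bounds on $\tilde\delta^R-\tilde\delta^\ast$ and on its $\theta$-derivative. Part (i) is immediate from Lemma~\ref{lem:blp-regularity}(i): the almost sure bound $\|g\|\le C$ holds for $P^\ast$ and for every realization of $P_t^R$, so the second moments are bounded by $C^2$. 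The same argument gives the conditional version, because the bound is almost sure.

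\textbf{Step 1 (linearization).} Define the simulation error of the shares at the oracle mean utility,
\[
\varepsilon_t(\theta):=\tilde s(\tilde\delta^\ast,\theta_2;P_t^R)-S_t=R^{-1}\textstyle\sum_r\bigl\{\sigma(\tilde\delta^\ast,\theta_2,\nu_{r,t})-\tilde s(\tilde\delta^\ast,\theta_2;P^\ast)\bigr\}.
\]
The second expression uses the fact that $\tilde s(\tilde\delta^\ast,\theta_2;P^\ast)=S_t$. Conditional on $O_t$, this is an average of $R$ i.i.d.\ bounded mean-zero vectors, so $E[\varepsilon_t\mid O_t]=0$ and $E[\|\varepsilon_t\|^k\mid O_t]\le C_kR^{-k/2}$. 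The share equation $\tilde s(\tilde\delta^R,\theta_2;P_t^R)=S_t$, combined with a Taylor expansion of $\tilde s(\cdot,\theta_2;P_t^R)$ around $\tilde\delta^\ast$, gives
\[
\tilde\delta^R-\tilde\delta^\ast=-D_R^{-1}\varepsilon_t+\rho_t,
\qquad D_R:=\partial_{\delta'}\tilde s(\tilde\delta^\ast,\theta_2;P_t^R).
\]
Lemma~\ref{lem:blp-regularity}'s uniform bound on the inverse share Jacobian controls $D_R^{-1}$, and the bounded second derivatives of the logit kernel control the remainder: $\|\rho_t\|\le C\|\tilde\delta^R-\tilde\delta^\ast\|^2$. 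A preliminary a priori bound $\|\tilde\delta^R-\tilde\delta^\ast\|\le C\|\varepsilon_t\|$ is needed first. It follows from the mean value theorem applied to the inverse map, whose Lipschitz constant is uniform over $P$ because the Jacobian bound of Lemma~\ref{lem:blp-regularity} holds uniformly over simulation measures. Hence $\|\rho_t\|\le C\|\varepsilon_t\|^2$. I would then replace $D_R$ by $D^\ast:=\partial_{\delta'}\tilde s(\tilde\delta^\ast,\theta_2;P^\ast)$. The difference $D_R-D^\ast$ is itself an $R$-average of bounded mean-zero terms, so the resulting cross term $(D_R^{-1}-D^{\ast-1})\varepsilon_t$ is a product of two $O_p(R^{-1/2})$ quantities, and its expectation is $O(R^{-1})$ by Cauchy--Schwarz. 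The outcome is
\[
u_t^R(\theta)=\mathcal L_t(\theta)\varepsilon_t(\theta)+r_t(\theta),
\qquad \mathcal L_t:=-Z_t'D^{\ast-1},
\qquad E\|r_t\|\le CR^{-1},\ \ E\|r_t\|^2\le CR^{-2}.
\]
This is the multiplier $\mathcal L_t$ referenced later in the paper.

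\textbf{Step 2 (parts (ii)--(iii) and the conditional statement).} Because $\mathcal L_t$ is $O_t$-measurable and $E[\varepsilon_t\mid O_t]=0$, we have $\beta_R(\theta)=E[u_t^R(\theta)]=E[r_t(\theta)]$, so $\|\beta_R(\theta)\|\le CR^{-1}$. Part (iii) follows from $E\|\mathcal L_t\varepsilon_t\|^2\le CR^{-1}$. For the conditional claims, use that $(O_t,\{\nu_{r,t}\})$ is independent of $\mathcal F_{t-1}$. For an $\mathcal F_{t-1}$-measurable $\tilde\theta$, the freezing lemma then gives $E[u_t^R(\tilde\theta)\mid\mathcal F_{t-1}]=\beta_R(\tilde\theta)$, and the unconditional bounds, being uniform in $\theta$, transfer to the conditional moments evaluated at $\tilde\theta$.

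\textbf{Step 3 (Lipschitz parts: (iv) and the Lipschitz half of (ii)).} These are the main obstacle. I would differentiate the expansion of Step~1 in $\theta$. Applying the implicit function theorem to both $\tilde\delta^R$ and $\tilde\delta^\ast$ gives
\[
\nabla_\theta u_t^R=Z_t'\bigl(\partial_{\theta_2}\tilde\delta^R-\partial_{\theta_2}\tilde\delta^\ast\bigr),
\]
and this difference has exactly the same structure as $u_t^R$: it is linear in the empirical-process terms $(P_t^R-P^\ast)$ applied to the smooth kernels $\sigma$, $\partial_\delta\sigma$, $\partial_{\theta_2}\sigma$, plus quadratic remainders. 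Repeating Steps~1--2 for $\nabla_\theta u_t^R$ should give $\sup_\theta E\|\nabla_\theta u_t^R(\theta)\|^2\le CR^{-1}$ and $\|E\nabla_\theta u_t^R(\theta)\|\le CR^{-1}$. Part (iv) then follows from the mean value theorem, $\|u_t^R(\theta)-u_t^R(\theta')\|\le\sup_{\bar\theta}\|\nabla_\theta u_t^R(\bar\theta)\|\,\|\theta-\theta'\|$. The supremum over $\bar\theta$ is the delicate point. I would handle it with an envelope bound: the derivative kernels are smooth in $\theta$ with bounded $\theta$-derivatives, so the sup of the empirical-process term is controlled by its value on a fixed $\theta$ plus $\|\theta-\theta'\|$ times a second-order empirical average. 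That second average again has second moment $O(R^{-1})$, so its contribution is $CR^{-1}\|\theta-\theta'\|^2$ in $L^2$. The Lipschitz constant $CR^{-1}$ of $\beta_R$ follows from $\nabla\beta_R=E[\nabla_\theta r_t]$ together with the third-derivative bounds of the logit kernel. These routine higher-order verifications I would defer, as in Lemma~\ref{lem:blp-regularity}, to the expansions of \citet{berry2004limit} and \citet{freyberger2015asymptotic}.
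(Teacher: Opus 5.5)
Your Steps 1--2 match the paper's argument: the same linearization $u_t^R(\theta)=\mathcal L_t(\theta)\varepsilon_t(\theta)+r_t(\theta)$ with the same multiplier. The bias comes only from the quadratic remainder, because the linear term is conditionally mean zero given $O_t$. The variance comes from the linear term, and the conditional statements follow from the independence of $(O_t,\{\nu_{r,t}\}_r)$ and $\mathcal F_{t-1}$.

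\textbf{The step in Step~3 that fails.} After the pathwise mean value theorem you need $E\sup_{\bar\theta\in[\theta,\theta']}\|\nabla_\theta u_t^R(\bar\theta)\|^2\le CR^{-1}$. You claim the increment over the segment is $\|\theta-\theta'\|$ times a ``second-order empirical average'' with second moment $O(R^{-1})$. That claim does not hold. An envelope average such as $(P_t^R+P^\ast)H$ is not centered, so its second moment is of order one. A centered average evaluated at another intermediate point is again a supremum, so that version is circular. The most the envelope route gives is $E\|u_t^R(\theta)-u_t^R(\theta')\|^2\le CR^{-1}\|\theta-\theta'\|^2+C\|\theta-\theta'\|^4$. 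This is weaker than (iv) whenever $R^{-1/2}\ll\|\theta-\theta'\|\ll 1$, and combining it with (iii) does not close that range.

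\textbf{The fix.} The supremum is self-inflicted, and you already have what you need. Write $u_t^R(\theta)-u_t^R(\theta')=\int_0^1\nabla_\theta u_t^R(\theta_s)\,ds\,(\theta-\theta')$ with the deterministic path $\theta_s:=\theta'+s(\theta-\theta')$. Jensen gives $\|u_t^R(\theta)-u_t^R(\theta')\|^2\le\|\theta-\theta'\|^2\int_0^1\|\nabla_\theta u_t^R(\theta_s)\|^2\,ds$. Take expectations with Tonelli; the pointwise bound $\sup_\theta E\|\nabla_\theta u_t^R(\theta)\|^2\le CR^{-1}$ that you state in Step~3 then yields (iv).

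The paper takes a shorter path and expands $u_t^R(\theta)-u_t^R(\theta')$ directly. Its leading term is $[\mathcal L_t(\theta)-\mathcal L_t(\theta')]\varepsilon_t(\theta)+\mathcal L_t(\theta')[\varepsilon_t(\theta)-\varepsilon_t(\theta')]$. This involves only a conditionally centered i.i.d.\ average at two fixed parameter values, with summands of $L^2$ size $C\|\theta-\theta'\|$, so no uniformity in $\theta$ is ever needed.

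\textbf{A smaller point in Step~1.} The inverse-Jacobian bound of Lemma~\ref{lem:blp-regularity} is derived where fitted shares equal observed shares, so they are at least $\underline s$. Your $D_R$ and your a priori bound instead involve shares $S_t+\varepsilon_t$. You can repair this in either of two ways:
\begin{itemize}
\item Restrict to $\{\|\varepsilon_t\|_\infty\le\underline s/2\}$. Its complement has conditional probability $O(e^{-cR})$ by Hoeffding, and there $\|u_t^R\|\le C$ by Lemma~\ref{lem:blp-regularity}(i).
\item Expand along the mixture path $(1-s)P^\ast+sP_t^R$, which is available because the share map is linear in $P$. Every intermediate inverse on that path reproduces $S_t$.
\end{itemize}
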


\begin{proof}
Part (i) is immediate from Lemma~\ref{lem:blp-regularity}(i).
For the remaining parts, fix $\theta$ and condition on $O_t$. The simulated share map is linear in $P_t^R$ with
integrands bounded as in the proof of Lemma~\ref{lem:blp-regularity}, and the moment
$g(O_t;\theta,P) = Z_t'(\tilde\delta(S_t,\theta_2;P) - X_t\theta_1)$ is a twice continuously differentiable
function of the share values with derivatives bounded uniformly by Lemma~\ref{lem:blp-regularity}. A second-order
Taylor expansion of $\tilde\delta(S_t,\theta_2;\cdot)$ in the share-approximation error
$\tilde s(\cdot;P_t^R) - \tilde s(\cdot;P^\ast)$ therefore gives
\[
  u_t^R(\theta) = \mathcal L_t(\theta)\bigl[\tilde s(\cdot;P_t^R) - \tilde s(\cdot;P^\ast)\bigr] + \rho_t(\theta),
  \qquad \|\rho_t(\theta)\| \le C\,\bigl\|\tilde s(\cdot;P_t^R) - \tilde s(\cdot;P^\ast)\bigr\|^2,
\]
where
$\mathcal L_t(\theta) :=
-Z_t'\bigl[\partial \tilde s(\tilde\delta(S_t,\theta_2;P^\ast),\theta_2;P^\ast)/\partial\delta'\bigr]^{-1}$,
the derivative of the inverse share map with respect to the shares, is a bounded linear map that is Lipschitz in
$\theta$, both uniformly.
Conditionally on $O_t$, the share error is an average of $R$ i.i.d.\ mean-zero terms bounded by one, so the linear
term has conditional mean zero and conditional variance $O(R^{-1})$, while
$E[\|\rho_t(\theta)\|\mid O_t] = O(R^{-1})$ and $E[\|\rho_t(\theta)\|^2\mid O_t] = O(R^{-2})$.
Taking expectations yields (ii) (bias comes only from $\rho_t$, and its Lipschitz property in $\theta$ from that
of the expansion coefficients) and (iii). For (iv), apply the same expansion to
$u_t^R(\theta) - u_t^R(\theta')$. The linear coefficients and the integrands are Lipschitz in $\theta$, so the
conditional variance of the linear term is bounded by $CR^{-1}\|\theta-\theta'\|^2$ and the remainder contributes
$O(R^{-2}\|\theta - \theta'\|^2)$. Analogous expansions are developed in detail by \citet{berry2004limit} and
\citet{freyberger2015asymptotic}.
The conditional statements follow because $(O_t,\{\nu_{r,t}\}_r)$ is independent of $\mathcal F_{t-1}$
(Assumption~\ref{ass:sampling}), so conditional moments given $\mathcal F_{t-1}$ equal the corresponding unconditional
moment functions evaluated at $\tilde\theta$.
\end{proof}

The third lemma is a deterministic bound for the recursions that arise below.

\begin{lemma}[Recursion bound]\label{lem:recursion-bound}
Let $\gamma_t = \gamma_0 t^{-a}$ with $a \in (1/2,1)$ and let $c > 0$. There exist $t^\ast \in \mathbb N$ and
$M < \infty$, depending only on $(a, \gamma_0, c, C_1)$, such that any nonnegative sequence $(v_t)_{t \ge t_0}$
with $t_0 \ge t^\ast$ satisfying
\[
  v_t \le (1 - c\gamma_t)\,v_{t-1} + C_1\gamma_t^2 + C_2\gamma_t\kappa, \qquad t > t_0,
\]
for some $C_1, C_2 \ge 0$ and $\kappa \ge 0$, obeys
\[
  v_t \le M\gamma_t + \frac{C_2}{c}\,\kappa + v_{t_0}\exp\bigl(-c(\Gamma_t - \Gamma_{t_0})\bigr), \qquad t > t_0,
\]
where $\Gamma_t - \Gamma_{t_0} = \sum_{s=t_0+1}^t \gamma_s$.
\end{lemma}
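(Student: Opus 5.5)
The plan is to compare $v_t$ with an explicit supersolution of the recursion rather than unroll the sum. The candidate is $w_t := M\gamma_t + (C_2/c)\kappa$, and the quantity to track is the excess $d_t := v_t - w_t$. The goal is to show that $d_t \le (1-c\gamma_t)\,d_{t-1}$ for all $t > t_0$. Iterating this contraction then gives the transient term $v_{t_0}\exp\bigl(-c(\Gamma_t-\Gamma_{t_0})\bigr)$.

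First I choose $t^\ast$ so that $c\gamma_t \le 1$ for $t \ge t^\ast$; this makes the factors $1 - c\gamma_t$ lie in $[0,1]$. The $\kappa$-part of $w_t$ is an exact fixed point of the affine map: $(1-c\gamma_t)(C_2/c)\kappa + C_2\gamma_t\kappa = (C_2/c)\kappa$. So it suffices to check the supersolution inequality for the $\gamma^2$-part,
\[
(1-c\gamma_t)M\gamma_{t-1} + C_1\gamma_t^2 \le M\gamma_t,
\quad\text{equivalently}\quad
M(\gamma_{t-1}-\gamma_t) + C_1\gamma_t^2 \le cM\gamma_t\gamma_{t-1}.
\]

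This is the only step with content, and it is where $a<1$ enters. By the mean value theorem, $\gamma_{t-1}-\gamma_t \le a\gamma_0 (t-1)^{-a-1}$. Because $a<1$, we have $t^{-1} = o(t^{-a})$, so $\gamma_{t-1}-\gamma_t = o(\gamma_t\gamma_{t-1})$. After enlarging $t^\ast$ (depending only on $a,\gamma_0,c$), this gives $\gamma_{t-1}-\gamma_t \le \tfrac{c}{2}\gamma_t\gamma_{t-1}$ for $t \ge t^\ast$. The inequality then reduces to $C_1\gamma_t^2 \le \tfrac{c}{2}M\gamma_t\gamma_{t-1}$. Since $\gamma_t \le \gamma_{t-1}$, this holds once $M \ge 2C_1/c$, so $M$ depends only on $(c, C_1)$ as claimed.

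Subtracting the supersolution inequality for $w_t$ from the hypothesis on $v_t$ gives $d_t \le (1-c\gamma_t)d_{t-1}$. Since $1-c\gamma_t \in [0,1]$, taking positive parts yields $d_t^+ \le (1-c\gamma_t)\,d_{t-1}^+$. Iterating and using $1-x \le e^{-x}$, I get
\[
d_t^+ \le d_{t_0}^+\exp\bigl(-c(\Gamma_t-\Gamma_{t_0})\bigr) \le v_{t_0}\exp\bigl(-c(\Gamma_t-\Gamma_{t_0})\bigr),
\]
where the last step uses $d_{t_0}^+ \le v_{t_0}$, since $w_{t_0} \ge 0$. The claimed bound then follows from $v_t \le w_t + d_t^+$.
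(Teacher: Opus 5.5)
Your proof is correct and is essentially the paper's argument. The paper proves by induction the bound $v_t \le M\gamma_t + (C_2/c)\kappa + v_{t_0}\prod_{s=t_0+1}^{t}(1-c\gamma_s)$, using the same key inequality $M(\gamma_{t-1}-\gamma_t) + C_1\gamma_t^2 \le cM\gamma_t\gamma_{t-1}$, the same choice $M = 2C_1/c$, and the same condition on $t^\ast$; this is equivalent to your contraction of the excess $d_t^+$ over the supersolution $w_t$.
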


\begin{proof}
Choose $t^\ast$ so large that $c\gamma_t \le 1/2$ and $\gamma_{t-1} - \gamma_t \le (c/2)\gamma_t\gamma_{t-1}$ for
all $t > t^\ast$. The latter is possible because
$(\gamma_{t-1}-\gamma_t)/(\gamma_t\gamma_{t-1}) = \gamma_0^{-1}\{t^a - (t-1)^a\} = O(t^{a-1}) \to 0$ for $a < 1$.
Set $M := 2C_1/c$. We prove by induction that
$v_t \le M\gamma_t + (C_2/c)\kappa + v_{t_0}\prod_{s=t_0+1}^t(1 - c\gamma_s)$ for $t \ge t_0$. The claim then
follows from $\prod_s (1 - c\gamma_s) \le \exp(-c\sum_s \gamma_s)$. The base case is trivial. For the induction
step, the hypothesis and the assumed recursion give
\[
  v_t \le (1 - c\gamma_t)M\gamma_{t-1} + C_1\gamma_t^2 + \frac{C_2}{c}\kappa
  + v_{t_0}\prod_{s=t_0+1}^{t}(1 - c\gamma_s),
\]
and $(1-c\gamma_t)M\gamma_{t-1} + C_1\gamma_t^2 \le M\gamma_t$ is equivalent to
$M(\gamma_{t-1} - \gamma_t) + C_1\gamma_t^2 \le cM\gamma_t\gamma_{t-1}$, which holds because
$M(\gamma_{t-1}-\gamma_t) \le (c/2)M\gamma_t\gamma_{t-1}$ by the choice of $t^\ast$ and
$C_1\gamma_t^2 \le (c/2)M\gamma_t\gamma_{t-1}$ by the choice of $M$ and $\gamma_{t-1} \ge \gamma_t$.
\end{proof}

The fourth lemma provides $L^2$ rates for both recursions, a containment result, and an $L^2$ rate for their
difference. Recall $e_t = \theta_t - \theta_t^o$, and define the stopped difference
$\tilde e_t := e_t \mathbf 1\{\tau > t\}$, where
$\tau := \inf\{t > T_0 : \theta_t \notin N_0 \text{ or } \theta_t^o \notin N_0\}$.

\begin{lemma}[$L^2$ rates and containment]\label{lem:L2_rate}
Let Assumptions~\ref{ass:sampling}--\ref{ass:pilot} hold and let $r_1$ be sufficiently small. Then, for all $T$ large
enough and all $t > T_0$, on the event $E_0$:
\begin{enumerate}
\item[(i)] $E[V_t \mid \mathcal F_{T_0}] \le C\bigl(\gamma_t + R^{-2}\bigr)
  + C V_{T_0}\exp\bigl(-c(\Gamma_t - \Gamma_{T_0})\bigr)$, and the same bound with $R^{-2}$ deleted holds for
  $E[V_t^o \mid \mathcal F_{T_0}]$.
\item[(ii)] for every $r > 0$,
\[
  \mathbb P\Bigl( \max_{T_0 < t \le T} \max\{V_t, V_t^o\} > r^2 \Bigr)
  \le \frac{C}{r^2}\Bigl( E[V_{T_0}\mathbf 1_{E_0}] + \sum_{t > T_0} \gamma_t^2
  + R^{-2}\sum_{t=T_0+1}^{T}\gamma_t \Bigr) + \mathbb P(E_0^c);
\]
\item[(iii)] $E[\|\tilde e_t\|^2 \mid \mathcal F_{T_0}] \le C\bigl(R^{-2} + \gamma_t R^{-1}\bigr)$.
\end{enumerate}
\end{lemma}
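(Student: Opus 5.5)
The proof runs a conditional Lyapunov recursion for each of the three quantities on the event $E_0$, which is $\mathcal F_{T_0}$-measurable, and closes it with Lemma~\ref{lem:recursion-bound}.

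\emph{Part (i).} By nonexpansiveness of $\Pi_\Theta$ and $\theta^\ast \in \Theta$,
\[
V_t \le V_{t-1} - 2\gamma_t(\theta_{t-1}-\theta^\ast)^\top \hat A g_t^R(\theta_{t-1}) + \gamma_t^2\|\hat A\|^2\|g_t^R(\theta_{t-1})\|^2 .
\]
Conditioning on $\mathcal F_{t-1}$ and using Lemma~\ref{lem:sim-error}, we get $E[g_t^R(\theta_{t-1})\mid\mathcal F_{t-1}] = \bar m(\theta_{t-1}) + \beta_R(\theta_{t-1})$. The drift term is split as $\hat A = A^\ast + \Delta A$:
\begin{itemize}
\item The $A^\ast$ part gives at least $\tilde C V_{t-1}$ by Assumption~\ref{ass:drift}.
\item The $\Delta A$ part is bounded by $r_1 L_G V_{t-1}$, since $\bar m(\theta^\ast)=0$ and $\bar m$ is Lipschitz.
\item The bias part is bounded by $\|\theta_{t-1}-\theta^\ast\|\,C R^{-1} \le \epsilon V_{t-1} + C_\epsilon R^{-2}$.
\end{itemize}
Choosing $r_1$ small (with $\|\hat A\|$ bounded on $E_0$) and taking expectations given $\mathcal F_{T_0}$ yields
\[
v_t \le (1-c\gamma_t)v_{t-1} + C\gamma_t^2 + C\gamma_t R^{-2}.
\]
Lemma~\ref{lem:recursion-bound} with $\kappa=R^{-2}$ gives (i); for $t$ below $t^\ast$ we use the boundedness of $\Theta$. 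The oracle recursion is identical with $\beta_R \equiv 0$, which deletes $R^{-2}$.

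\emph{Part (ii).} The same one-step inequality gives a supermartingale-type decomposition:
\[
V_t \le V_{t-1} - c\gamma_t V_{t-1} + M_t + C\gamma_t^2 + C\gamma_t R^{-2},
\]
where $M_t$ is the martingale difference $-2\gamma_t(\theta_{t-1}-\theta^\ast)^\top\hat A\{g_t^R - E[g_t^R\mid\mathcal F_{t-1}]\}$ plus the centered square term. We drop the negative drift and sum. The running maximum is then controlled by $V_{T_0}$, plus $\max_t\bigl|\sum_{s\le t} M_s\bigr|$, plus $\sum_s\gamma_s^2$, plus $R^{-2}\sum_s\gamma_s$. The martingale maximum is handled by Doob's (or Kolmogorov's) $L^2$ maximal inequality. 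This would produce $\sum_s\gamma_s^2 E V_{s-1}$, which is at most $C\sum_s \gamma_s^2$ by boundedness of $\Theta$. A Markov bound then gives the stated form after adding $\mathbb P(E_0^c)$; to obtain the linear $1/r^2$ rate we may instead use the Robbins--Siegmund nonnegative supermartingale maximal inequality applied to
\[
V_t + \sum_{s>t}\bigl(C\gamma_s^2 + C\gamma_s R^{-2}\bigr),
\]
restricted to $E_0$. That inequality gives directly
\[
\mathbb P\Bigl(\max_t V_t > r^2\Bigr) \le r^{-2}\Bigl(E[V_{T_0}\mathbf 1_{E_0}] + \textstyle\sum_s\gamma_s^2 + R^{-2}\sum_s\gamma_s\Bigr)\cdot C,
\]
and we apply it to $V_t$ and $V_t^o$ separately and use a union bound.

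\emph{Part (iii).} This step is the main obstacle. Nonexpansiveness again gives
\[
\|e_t\|^2 \le \|e_{t-1} - \gamma_t\hat A\{g_t^R(\theta_{t-1}) - g_t(\theta_{t-1}^o)\}\|^2 .
\]
Write $g_t^R(\theta_{t-1}) - g_t(\theta^o_{t-1}) = [g_t(\theta_{t-1}) - g_t(\theta^o_{t-1})] + u_t^R(\theta_{t-1})$. The conditional mean of the first bracket is $\bar m(\theta_{t-1})-\bar m(\theta^o_{t-1}) = \bar G_t e_{t-1}$, with $\bar G_t$ the integrated Jacobian along the segment. On $\{\tau>t-1\}$ both iterates lie in the convex ball $N_0$, so the symmetric part of $A^\ast\bar G_t$ is at least $\tfrac12 I$, and $\hat A\bar G_t$ retains a positive lower bound for small $r_1$. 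The cross term with $u_t^R$ has conditional mean $\beta_R(\theta_{t-1})$, contributing at most $\gamma_t\|e_{t-1}\|CR^{-1} \le \epsilon\gamma_t\|e_{t-1}\|^2 + C\gamma_t R^{-2}$. The squared terms contribute $C\gamma_t^2(\|e_{t-1}\|^2 + R^{-1})$ by Lemma~\ref{lem:blp-regularity} and Lemma~\ref{lem:sim-error}(iii). Since $\mathbf 1\{\tau>t\} \le \mathbf 1\{\tau>t-1\}$, this gives
\[
E[\|\tilde e_t\|^2\mid\mathcal F_{T_0}] \le (1-c\gamma_t)E\|\tilde e_{t-1}\|^2 + C\gamma_t(R^{-2} + \gamma_t R^{-1}),
\]
with $\tilde e_{T_0}=0$. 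The delicate point is that the $\gamma_t^2R^{-1}$ term is not of the form $C_1\gamma_t^2$ with a constant free of $R$. We rerun the induction of Lemma~\ref{lem:recursion-bound} with $C_1 = CR^{-1}$. Because $M = 2C_1/c$ scales linearly in $C_1$, it gives $M\gamma_t = C\gamma_t R^{-1}$, and the $\kappa$-term gives $CR^{-2}$. The zero initial condition kills the exponential term, which yields (iii).
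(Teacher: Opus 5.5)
Your proposal is correct and follows essentially the paper's own argument. Parts (i) and (iii) use the nonexpansive-projection Lyapunov recursions with the split $\hat A = A^\ast + \Delta A$ and the $\beta_R$ bias term, closed by Lemma~\ref{lem:recursion-bound}. Part (ii) uses a nonnegative supermartingale of the form $V_t + \sum_{s>t}(\cdot)$ with Ville's maximal inequality on $E_0$, which is what the paper does; your explicit remark that Lemma~\ref{lem:recursion-bound} must be invoked with $C_1 = CR^{-1}$, where only $M = 2C_1/c$ and not $t^\ast$ depends on $C_1$, spells out a point the paper leaves implicit.
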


\begin{proof}
\emph{Part (i).}
Since $\Pi_\Theta$ is nonexpansive and $\theta^\ast \in \Theta$,
$V_t \le \|\theta_{t-1} - \theta^\ast - \gamma_t \hat A g_t^R(\theta_{t-1})\|^2$. Expanding and taking conditional
expectations, with $E[g_t^R(\theta_{t-1})\mid\mathcal F_{t-1}] = \bar m_R(\theta_{t-1})$,
\begin{equation}\label{eq:V-recursion}
  E[V_t \mid \mathcal F_{t-1}]
  \le V_{t-1} - 2\gamma_t\,(\theta_{t-1}-\theta^\ast)^\top \hat A\,\bar m_R(\theta_{t-1})
  + \gamma_t^2\,E\bigl[\|\hat A g_t^R(\theta_{t-1})\|^2 \mid \mathcal F_{t-1}\bigr].
\end{equation}
For the drift term, decompose
$\hat A \bar m_R(\theta) = A^\ast \bar m(\theta) + \Delta A\, \bar m(\theta) + \hat A\,\beta_R(\theta)$. By
Assumption~\ref{ass:drift}, $(\theta - \theta^\ast)^\top A^\ast \bar m(\theta) \ge \tilde C\|\theta-\theta^\ast\|^2$
on $\Theta$. By the Lipschitz property of $\bar m$ and $\bar m(\theta^\ast) = 0$,
$|(\theta-\theta^\ast)^\top \Delta A \bar m(\theta)| \le \|\Delta A\| L_G \|\theta - \theta^\ast\|^2$. By
Lemma~\ref{lem:sim-error}(ii),
$|(\theta-\theta^\ast)^\top \hat A \beta_R(\theta)| \le C R^{-1}\|\theta-\theta^\ast\|
\le (\tilde C/4)\|\theta-\theta^\ast\|^2 + C R^{-2}$. Hence, on $E_0$ with $r_1 \le \tilde C/(2L_G)$,
\[
  (\theta - \theta^\ast)^\top \hat A\,\bar m_R(\theta) \ge \frac{\tilde C}{4}\,\|\theta - \theta^\ast\|^2 - CR^{-2},
  \qquad \forall\,\theta \in \Theta.
\]
For the quadratic term, $E[\|g_t^R(\theta)\|^2] \le C$ uniformly by Lemma~\ref{lem:sim-error}(i). Substituting into
\eqref{eq:V-recursion},
\[
  E[V_t \mid \mathcal F_{t-1}] \le \Bigl(1 - \frac{\tilde C}{2}\gamma_t\Bigr) V_{t-1} + C\gamma_t R^{-2}
  + C\gamma_t^2,
\]
and Lemma~\ref{lem:recursion-bound} (applicable for $T$ large since $T_0 \to \infty$) yields the claim after
iterating conditional expectations. The oracle bound is identical with $\beta_R \equiv 0$.

\emph{Part (ii).}
Drop the negative drift in the display above to obtain
$E[V_t\mid\mathcal F_{t-1}] \le (1 + C\gamma_t^2) V_{t-1} + C\gamma_t(R^{-2} + \gamma_t)$ on $E_0$. Let
$\rho_t := \prod_{s=T_0+1}^t (1 + C\gamma_s^2) \le \rho_\infty < \infty$ and define
\[
  M_t := \frac{V_t}{\rho_t} + C\sum_{s = t+1}^{T}\gamma_s\bigl(R^{-2} + \gamma_s\bigr), \qquad T_0 \le t \le T,
\]
which is a nonnegative supermartingale on $E_0$. By Ville's maximal inequality,
\[
\mathbb P\Bigl(\max_{T_0 < t\le T} M_t \ge \lambda \,\Big|\, \mathcal F_{T_0}\Bigr) \le M_{T_0}/\lambda,
\qquad \text{and} \qquad \max_t V_t \le \rho_\infty \max_t M_t.
\]
Taking expectations restricted to $E_0$, applying the same bound to $V_t^o$, and adding $\mathbb P(E_0^c)$ for the
complementary event delivers (ii).

\emph{Part (iii).}
Note $\tilde e_{T_0} = e_{T_0} = 0$ and $\{\tau > t-1\} \in \mathcal F_{t-1}$. On $\{\tau > t-1\}$, both
$\theta_{t-1}$ and $\theta_{t-1}^o$ lie in $N_0$. By nonexpansiveness,
$\|e_t\| \le \|e_{t-1} - \gamma_t \hat A D_t\|$ with
$D_t := g_t^R(\theta_{t-1}) - g_t(\theta_{t-1}^o)
= [\bar m(\theta_{t-1}) - \bar m(\theta_{t-1}^o)] + [\zeta_t(\theta_{t-1}) - \zeta_t(\theta_{t-1}^o)]
+ u_t^R(\theta_{t-1})$.
Taking conditional expectations in $\|e_t\|^2 \le \|e_{t-1}\|^2 - 2\gamma_t e_{t-1}^\top \hat A D_t
+ \gamma_t^2\|\hat A D_t\|^2$, we bound the three components of $D_t$ in turn. The $\zeta$-difference is a
martingale difference. By the mean-value representation
$\bar m(\theta_{t-1}) - \bar m(\theta_{t-1}^o) = \int_0^1 G\bigl(\theta_{t-1}^o + s e_{t-1}\bigr)ds\; e_{t-1}$ with
the segment contained in the convex set $N_0$, the choice of $N_0$ and $r_1 \le 1/(4 L_G)$ give
$e_{t-1}^\top \hat A[\bar m(\theta_{t-1}) - \bar m(\theta_{t-1}^o)] \ge \tfrac14\|e_{t-1}\|^2$. By
Lemma~\ref{lem:sim-error}(ii) and the inequality $2xy \le \tfrac18 x^2 + 8y^2$,
$2|e_{t-1}^\top \hat A\,\beta_R(\theta_{t-1})| \le \tfrac{1}{8}\|e_{t-1}\|^2 + CR^{-2}$. Finally,
by Lemma~\ref{lem:blp-regularity}(i) and Lemma~\ref{lem:sim-error}(iii),
$E[\|D_t\|^2 \mid \mathcal F_{t-1}] \le C(\|e_{t-1}\|^2 + R^{-1})$. Hence, on $\{\tau > t-1\} \cap E_0$ and for
$T$ large,
\[
  E[\|e_t\|^2 \mid \mathcal F_{t-1}] \le \Bigl(1 - \frac{\gamma_t}{8}\Bigr)\|e_{t-1}\|^2
  + C\gamma_t R^{-2} + C\gamma_t^2 R^{-1}.
\]
Since $\tilde e_t = e_t \mathbf 1\{\tau > t\}$ and $\|\tilde e_t\| \le \|e_t\|\mathbf 1\{\tau > t-1\}$ with
$e_{t-1} = \tilde e_{t-1}$ on $\{\tau > t-1\}$, the same inequality holds for $E[\|\tilde e_t\|^2\mid\mathcal
F_{t-1}]$ with $\|e_{t-1}\|^2$ replaced by $\|\tilde e_{t-1}\|^2$ (on $\{\tau \le t-1\}$ the left side is zero).
Lemma~\ref{lem:recursion-bound} with $v_{T_0} = 0$, $\kappa = R^{-2}$, and $C_1 = CR^{-1}$ gives
$E[\|\tilde e_t\|^2 \mid \mathcal F_{T_0}] \le C(\gamma_t R^{-1} + R^{-2})$.
\end{proof}

The final lemma verifies that the projection constraint is asymptotically inactive. Its proof for the oracle
recursion invokes only the almost-sure convergence established in the first step of the proof of
Theorem~\ref{thm:oracle-clt}, which does not rely on this lemma.

\begin{lemma}[The projection is asymptotically inactive]\label{lem:projection}
Let Assumptions~\ref{ass:sampling}--\ref{ass:pilot} hold and let $\rho := \mathrm{dist}(\theta^\ast,\partial\Theta) > 0$
(Assumption~\ref{ass:param-space}). Then:
\begin{enumerate}
\item[(i)] conditionally on $\mathcal F_{T_0}$, on the event $E_0$, there exists an almost surely finite time
$\bar t$ such that the projection in the oracle recursion~\eqref{eq:snfp-oracle} is inactive for all $t \ge \bar t$.
That is, the projection binds at most finitely many times, almost surely.
\item[(ii)] if in addition $T^{1-a}/R^2 \to 0$ (implied by $\sqrt T/R \to 0$), then
\[
\mathbb P\bigl(\text{the projection in \eqref{eq:snfp} is active for some } t \in (T_0, T]\bigr) \to 0.
\]
\end{enumerate}
\end{lemma}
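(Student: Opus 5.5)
Both parts rest on the same fact. The projection can bind at step $t$ only if the unprojected point $\theta_{t-1} - \gamma_t\hat A g(\cdot)$ leaves $\Theta$. On $E_0$ we have $\|\hat A\|\le \|A^\ast\|+r_1$, and by Lemma~\ref{lem:blp-regularity}(i) $\|g\|\le C$ almost surely. Hence the step size is at most $C\gamma_t$, and for $t$ large this is below $\rho/2$. It therefore suffices to show that the iterate $\theta_{t-1}$ stays in the ball $B := \{\|\theta-\theta^\ast\|\le\rho/2\}$. Then $\theta_{t-1} - \gamma_t\hat A g$ lies within $\rho$ of $\theta^\ast$, hence in $\Theta$, and the projection is inactive.

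\emph{Part (i).} I would invoke the almost-sure convergence $\theta_t^o \to \theta^\ast$, conditional on $\mathcal F_{T_0}$ and on $E_0$, from the first step of the proof of Theorem~\ref{thm:oracle-clt}. That argument is a Robbins--Siegmund argument on $V_t^o$, using the recursion in Lemma~\ref{lem:L2_rate}(i) with $\beta_R\equiv 0$ and $\sum\gamma_t^2<\infty$, and it does not use this lemma. Almost-sure convergence gives a random finite $t_1$ with $\theta_t^o\in B$ for all $t\ge t_1$. Take $t_2$ deterministic with $C\gamma_t\le\rho/2$ for $t\ge t_2$, and set $\bar t := \max(t_1,t_2)+1$. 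For $t\ge\bar t$ the unprojected oracle point lies in $\Theta$, so the projection is inactive.

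\emph{Part (ii).} Here I need uniform control over the finite horizon $(T_0,T]$, and Lemma~\ref{lem:L2_rate}(ii) provides it with $r=\rho/2$:
\[
\mathbb P\Bigl(\max_{T_0<t\le T}V_t>\rho^2/4\Bigr)\le \frac{C}{\rho^2}\Bigl(E[V_{T_0}\mathbf 1_{E_0}]+\sum_{t>T_0}\gamma_t^2+R^{-2}\sum_{t=T_0+1}^T\gamma_t\Bigr)+\mathbb P(E_0^c).
\]
I would show that each term on the right vanishes.
\begin{itemize}
\item The tail $\sum_{t>T_0}\gamma_t^2 = O(T_0^{1-2a})$ tends to $0$ because $a>1/2$ and $T_0\to\infty$.
\item The simulation term satisfies $R^{-2}\sum_{t\le T}\gamma_t = O(T^{1-a}/R^2)\to 0$ by hypothesis. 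This is implied by $\sqrt T/R\to0$, since $T^{1-a}\le T$.
\item $\mathbb P(E_0^c)\to 0$, as noted when $E_0$ was defined.
\item $E[V_{T_0}\mathbf 1_{E_0}]\to 0$ by bounded convergence, because $V_{T_0}\le \mathrm{diam}(\Theta)^2$ and $\hat\theta_{T_0}\pto\theta^\ast$ by Assumption~\ref{ass:pilot}(iv).
\end{itemize}
Consequently $\theta_{t-1}\in B$ for all $t\in(T_0,T]$ with probability approaching one. Since $T_0\to\infty$, every $t>T_0$ eventually satisfies $t\ge t_2$, so $C\gamma_t\le\rho/2$ throughout the online phase and the projection never binds there.

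\emph{Main obstacle.} The delicate point is that the containment must be uniform over a growing horizon rather than eventual, which rules out a direct almost-sure argument. Lemma~\ref{lem:L2_rate}(ii) handles this, but only if the simulation-bias accumulation $R^{-2}\Gamma_T$ is controlled, and this is where the rate condition enters. A secondary care point is that the almost-sure bound on the step uses the $\sup$-bound on $g$ at an arbitrary realization $P_t^R$, which Lemma~\ref{lem:blp-regularity}(i) provides.
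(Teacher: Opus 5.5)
Your proof is correct and follows the paper's argument. Part (i) combines the almost-sure convergence from the first step of the proof of Theorem~\ref{thm:oracle-clt} with control of the step length, and part (ii) applies Lemma~\ref{lem:L2_rate}(ii) with $r=\rho/2$ and shows that the same four terms vanish. The only difference is how you control the step $\gamma_t\|\hat A g\|$: you bound it deterministically by the almost-sure envelope of Lemma~\ref{lem:blp-regularity}(i), whereas the paper uses Chebyshev's inequality with L\'evy's conditional Borel--Cantelli lemma in (i) and a union bound in (ii). Your version is simpler under the paper's assumptions. The paper's version needs only conditional second moments, so it survives the relaxation the paper mentions for unbounded $\Theta$, where those moment bounds may grow with $V_{t-1}^o$.
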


\begin{proof}
\emph{(i)} By the first step of the proof of Theorem~\ref{thm:oracle-clt}, $V_t^o \to 0$ almost surely, so
$\theta_t^o \in B(\theta^\ast, \rho/2)$ for all $t$ large. Moreover,
$\sum_t \mathbb P\bigl(\gamma_t\|\hat A g_t(\theta_{t-1}^o)\| > \rho/2 \mid \mathcal F_{t-1}\bigr)
\le C\rho^{-2}\sum_t \gamma_t^2 < \infty$ almost surely, by Chebyshev's inequality and
Lemma~\ref{lem:sim-error}(i), so L\'evy's extension of the Borel--Cantelli lemma implies
$\gamma_t\|\hat A g_t(\theta_{t-1}^o)\| \le \rho/2$ for all $t$ large, almost surely. On the intersection of these
events, the unprojected point lies in $B(\theta^\ast, \rho) \subseteq \Theta$, so the projection is inactive.
We note that this argument rests only on the nonexpansiveness of $\Pi_\Theta$, the interiority
$\theta^\ast \in \mathrm{int}(\Theta)$, and the almost-sure convergence delivered by the drift condition. The
compactness of $\Theta$ plays no role, and the conditional form of the Borel--Cantelli argument remains valid when
the conditional moment bound of Lemma~\ref{lem:sim-error}(i) is relaxed to grow with $V_{t-1}^o$.

\emph{(ii)} By Lemma~\ref{lem:L2_rate}(ii) with $r = \rho/2$,
$\mathbb P(\max_{T_0 < t \le T} V_t > \rho^2/4) \to 0$, using $E[V_{T_0}\mathbf 1_{E_0}] \to 0$ as in the proof
of Theorem~\ref{thm:negligible-sim},
$\sum_{t>T_0}\gamma_t^2 \to 0$, and $R^{-2}\sum_{t \le T}\gamma_t = O(T^{1-a}/R^2) \to 0$. Moreover,
$\mathbb P\bigl(\exists\, t \in (T_0,T]: \gamma_t\|\hat A g_t^R(\theta_{t-1})\| > \rho/2\bigr)
\le C\rho^{-2}\sum_{t > T_0}\gamma_t^2 \to 0$ by the union bound, Chebyshev's inequality, and
Lemma~\ref{lem:sim-error}(i). On the complement of these events the unprojected point lies in
$B(\theta^\ast,\rho) \subseteq \Theta$ for every $t \in (T_0,T]$.
\end{proof}

\section{Proofs of the Theorems}\label{appendix:thm1_proof}

\begin{proof}[Proof of Theorem~\ref{thm:consistency-snfp}]
By Lemma~\ref{lem:L2_rate}(i),
\[
  E[V_T \mathbf 1_{E_0}] \le C\bigl(\gamma_T + R^{-2}\bigr)
  + C\,E[V_{T_0}\mathbf 1_{E_0}]\exp\bigl(-c(\Gamma_T - \Gamma_{T_0})\bigr).
\]
As $T \to \infty$, $\gamma_T \to 0$ and $R \to \infty$ (Assumption~\ref{ass:rates}). Moreover,
$\Gamma_T - \Gamma_{T_0} \ge c\gamma_0\bigl(T^{1-a} - T_0^{1-a}\bigr) \to \infty$ because $T_0/T \to 0$, and
$V_{T_0} \le r_0^2/4$ on the event $E_0$ by the definition of $E_0$. Hence $E[V_T\mathbf 1_{E_0}] \to 0$, and since
$\mathbb P(E_0) \to 1$, $\theta_T \pto \theta^\ast$. For the average, Jensen's inequality gives
\[
  E\bigl[\|\bar\theta_T - \theta^\ast\|\mathbf 1_{E_0}\bigr]
  \le \frac{1}{T - T_0}\sum_{t = T_0+1}^T \Bigl( C\sqrt{\gamma_t} + CR^{-1}
  + C\exp\bigl(-\tfrac{c}{2}(\Gamma_t - \Gamma_{T_0})\bigr) \Bigr).
\]
The first average is $O(T^{-a/2})$, and the second is $O(R^{-1})$. For the third, the telescoping bound
$\gamma_t e^{-c(\Gamma_t - \Gamma_{T_0})} \le c^{-1}\bigl(e^{-c(\Gamma_{t-1}-\Gamma_{T_0})}
- e^{-c(\Gamma_t - \Gamma_{T_0})}\bigr)$ implies
$\sum_{t > T_0} e^{-c(\Gamma_t - \Gamma_{T_0})} \le (c\gamma_T)^{-1} = O(T^a)$, so the third average is
$O(T^{a-1}) \to 0$ since $a < 1$. Hence $\bar\theta_T \pto \theta^\ast$. The oracle statements follow from the
identical argument with $R^{-1}$ and $R^{-2}$ deleted.
\end{proof}

\begin{proof}[Proof of Theorem~\ref{thm:oracle-clt}]
Take $r_1$ small enough that, on $E_0$, in addition to the requirements of Lemma~\ref{lem:L2_rate}, every
eigenvalue of $\hat A G^\ast = I_d + \Delta A\, G^\ast$ has real part at least $1/2$. This is possible since
$\|\Delta A\,G^\ast\| \le r_1\|G^\ast\|$. Throughout, we argue conditionally on $\mathcal F_{T_0}$ on the event
$E_0$, so that $\hat A$ is a fixed matrix with $\hat A G^\ast$ Hurwitz.

First, $\theta_t^o \to \theta^\ast$ almost surely, conditionally on $\mathcal F_{T_0}$. Indeed, the drift bound in
the proof of Lemma~\ref{lem:L2_rate}(i) with $\beta_R \equiv 0$ gives
$E[V_t^o \mid \mathcal F_{t-1}] \le (1 + C\gamma_t^2)V_{t-1}^o - \tilde c\,\gamma_t V_{t-1}^o + C\gamma_t^2$,
so the almost-supermartingale convergence theorem of \citet{robbins1971convergence} implies that $V_t^o$ converges
almost surely and $\sum_t \gamma_t V_{t-1}^o < \infty$. Since $\sum_t \gamma_t = \infty$, the limit is zero.

Second, by Lemma~\ref{lem:projection}(i), there is an almost surely finite
time beyond which the projection in~\eqref{eq:snfp-oracle} is inactive, so the tail of the oracle sequence obeys
the unprojected recursion
$\theta_t^o = \theta_{t-1}^o - \gamma_t\bigl[h(\theta_{t-1}^o) + \varepsilon_t\bigr]$ with mean field
$h(\theta) := \hat A\,\bar m(\theta)$ and noise $\varepsilon_t := \hat A\,\zeta_t(\theta_{t-1}^o)$.
The conditions of Theorem~2 of \citet{polyak1992acceleration} hold. We have $h(\theta^\ast) = 0$, and
$\nabla h(\theta^\ast) = \hat A G^\ast$ is Hurwitz with $\nabla h$ Lipschitz near $\theta^\ast$. The noise
$\varepsilon_t$ is a martingale difference sequence whose conditional covariance
$\hat A\,\Omega(\theta_{t-1}^o)\hat A^\top \to \hat A\,\Omega\,\hat A^\top$ almost surely, where
$\Omega(\theta)$ is continuous at $\theta^\ast$ by Lemma~\ref{lem:blp-regularity}(ii), and the Lindeberg
condition is immediate because the moments are uniformly bounded by Lemma~\ref{lem:blp-regularity}(i).
Therefore, conditionally on $\mathcal F_{T_0}$ on the event $E_0$,
\[
  \sqrt{T - T_0}\,\bigl(\bar\theta_T^o - \theta^\ast\bigr) \;\xrightarrow{d}\;
  N\bigl(0,\;\Sigma(\hat A)\bigr), \qquad
  \Sigma(\hat A) := (\hat A G^\ast)^{-1}\,\hat A\,\Omega\,\hat A^\top\,(\hat A G^\ast)^{-\top}.
\]
This convergence is uniform over scaling matrices with $\|\hat A - A^\ast\| \le r_1$, since the constants in the
argument of \citet{polyak1992acceleration} depend on the gain matrix only through the Hurwitz margin of
$\hat A G^\ast$ and the moment bounds, both of which are uniform on $E_0$.

Third, we remove the conditioning. Let $f$ be bounded and Lipschitz. By the previous step,
$E[f(\sqrt{T-T_0}(\bar\theta_T^o - \theta^\ast)) \mid \mathcal F_{T_0}]\,\mathbf 1_{E_0}
= E[f(Z_{\Sigma(\hat A)})]\,\mathbf 1_{E_0} + o_p(1)$, where $Z_\Sigma \sim N(0,\Sigma)$. Since
$\Delta A \pto 0$ (Assumption~\ref{ass:pilot} with $T_0, R \to \infty$), $\Sigma(\hat A) \pto \Sigma(A^\ast)
= A^\ast\Omega A^{\ast\top} = \Sigma_\theta$ by continuity of $\Sigma(\cdot)$ at $A^\ast$ and the identity
$A^\ast G^\ast = I_d$. Hence $E[f(Z_{\Sigma(\hat A)})] \pto E[f(Z_{\Sigma_\theta})]$, and bounded convergence
yields $E[f(\sqrt{T-T_0}(\bar\theta_T^o - \theta^\ast))] \to E[f(Z_{\Sigma_\theta})]$. Finally,
$\sqrt{T}/\sqrt{T-T_0} \to 1$ because $T_0 = o(T)$, and Slutsky's theorem completes the proof.
\end{proof}

\begin{proof}[Proof of Theorem~\ref{thm:negligible-sim}]
Let $\tau$ and $\tilde e_t$ be as in Lemma~\ref{lem:L2_rate}. On the event $E_0 \cap \{\tau > T\}$, $e_t =
\tilde e_t$ for all $T_0 < t \le T$, so
\[
  \sqrt T\,\|\bar\theta_T - \bar\theta_T^o\| \le \frac{\sqrt T}{T - T_0}\sum_{t=T_0+1}^T \|\tilde e_t\|.
\]
By Lemma~\ref{lem:L2_rate}(iii) and Jensen's inequality,
$E[\|\tilde e_t\|\mathbf 1_{E_0}] \le C(R^{-1} + \gamma_t^{1/2}R^{-1/2})$, whence
\[
  E\Bigl[\frac{\sqrt T}{T-T_0}\sum_{t=T_0+1}^T \|\tilde e_t\|\,\mathbf 1_{E_0}\Bigr]
  \le C\Bigl(\frac{\sqrt T}{R} + \frac{T^{(1-a)/2}}{\sqrt R}\Bigr) \to 0,
\]
where the first term vanishes by hypothesis and the second because $\sqrt T/R \to 0$ and $a > 1/2$ imply
$R / T^{1-a} = (R/\sqrt T)\,T^{a - 1/2} \to \infty$. By Markov's inequality,
$\sqrt T(\bar\theta_T - \bar\theta_T^o)\,\mathbf 1_{E_0 \cap \{\tau > T\}} \pto 0$.
It remains to show $\mathbb P(\{\tau \le T\} \cup E_0^c) \to 0$. By Lemma~\ref{lem:L2_rate}(ii) with $r = r_0$,
\[
  \mathbb P(\tau \le T) \le \frac{C}{r_0^2}\Bigl(E[V_{T_0}\mathbf 1_{E_0}] + \sum_{t>T_0}\gamma_t^2
  + R^{-2}\sum_{t \le T}\gamma_t\Bigr) + \mathbb P(E_0^c).
\]
Here $E[V_{T_0}\mathbf 1_{E_0}] \to 0$ by dominated convergence, since $V_{T_0}\mathbf 1_{E_0} \le r_0^2/4$ by
the definition of $E_0$ and $V_{T_0} \pto 0$ by consistency of the pilot estimate (Assumption~\ref{ass:pilot}).
Next, $\sum_{t > T_0}\gamma_t^2 = O(T_0^{1-2a}) \to 0$ since $a > 1/2$ and $T_0 \to \infty$. Finally,
$R^{-2}\sum_{t\le T}\gamma_t = O(T^{1-a}/R^2) = O\bigl((\sqrt T/R)^2\,T^{-a}\bigr) \to 0$. Since
$\mathbb P(E_0^c) \to 0$, the proof is complete.
\end{proof}

\begin{proof}[Proof of Corollary~\ref{thm:asymptotic_dist}]
Write $\sqrt T(\bar\theta_T - \theta^\ast) = \sqrt T(\bar\theta_T^o - \theta^\ast)
+ \sqrt T(\bar\theta_T - \bar\theta_T^o)$ and apply Theorems~\ref{thm:oracle-clt} and \ref{thm:negligible-sim}
with Slutsky's theorem. When $W = \Omega^{-1}$, $\Sigma_\theta$ reduces to $(G^{*\top}\Omega^{-1}G^\ast)^{-1}$ by
direct substitution.
\end{proof}

\clearpage
\section{Common Simulation Draws: Auxiliary Results and Proofs}\label{appendix:common_draws}

Throughout this appendix Assumption~\ref{ass:sampling-common} is in force, and the superscript $c$ marks objects that
are random through the common block, namely $\bar m^{c}_R(\theta) := E[g(O;\theta,P^R) \mid P^R]$ and
$b_R(\theta) := \bar m^{c}_R(\theta) - \bar m(\theta,P^\ast)$. These are \emph{not} the deterministic $\bar m_R$
and $\beta_R$ of Appendix~\ref{appendix:lemmas}, which describe the fresh-draw scheme of
Lemma~\ref{lem:sim-error}. In particular the bound $\|\beta_R\| \le C R^{-1}$ of Lemma~\ref{lem:sim-error}(ii)
must not be carried across, since $\|b_R\|_\infty$ is of exact order $R^{-1/2}$. The influence function $\psi$ is as
in~\eqref{eq:psi-common}, $\mathcal L_O(\theta)$ is the inversion multiplier $\mathcal L_t(\theta)$ of the proof of
Lemma~\ref{lem:sim-error} written for a generic market $O$, and $N_0$ is the ball around $\theta^\ast$ on which the
symmetric part of $A^\ast G(\theta)$ exceeds $\tfrac12 I_d$. Write $\Delta A := \hat A_{T_0} - A^\ast$ and define
the two good events
\[
  E_0 := \bigl\{\|\Delta A\| \le r_1\bigr\}, \qquad
  E_1 := \Bigl\{\|b_R\|_\infty + \mathrm{Lip}(b_R)
    + \sup_{\theta\in\Theta}\bigl\|\nabla_\theta \bar m^{c}_R(\theta) - G(\theta)\bigr\| \le r_2\Bigr\},
\]
where $r_1, r_2 > 0$ are fixed and small. Here $\mathbb P(E_0) \to 1$ by Assumption~\ref{ass:pilot} and
$\mathbb P(E_1) \to 1$ by Lemma~\ref{lem:perturbation}.

\begin{lemma}[Perturbation of the mean moment]\label{lem:perturbation}
Under Assumption~\ref{ass:sampling-common} and Assumptions~\ref{ass:param-space} and~\ref{ass:regular-markets},
\[
  b_R(\theta) = \frac1R\sum_{r=1}^R \psi(\nu_r;\theta) + r_R(\theta), \qquad \|r_R\|_\infty = O_p(R^{-1}),
\]
and consequently $\|b_R\|_\infty = O_p(R^{-1/2})$, $\mathrm{Lip}(b_R) = O_p(R^{-1/2})$, and
$\sup_{\theta\in\Theta}\|\nabla_\theta \bar m^{c}_R(\theta) - G(\theta)\| = O_p(R^{-1/2})$.
\end{lemma}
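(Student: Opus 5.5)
The plan is to run the same second-order Taylor expansion of the inverse share map that underlies Lemma~\ref{lem:sim-error}, now applied with the common empirical measure $P^R$, and then integrate over the market alone. Fix $\theta$ and a market $O$. Write $\Delta_O(\theta) := \tilde s(\tilde\delta^\ast(O,\theta_2),\theta_2;P^R) - \tilde s(\tilde\delta^\ast(O,\theta_2),\theta_2;P^\ast) = R^{-1}\sum_r\{\sigma(\tilde\delta^\ast,\theta_2,\nu_r) - \tilde s(\tilde\delta^\ast,\theta_2;P^\ast)\}$. By the implicit function theorem argument of Lemma~\ref{lem:blp-regularity}, which gives a uniformly bounded inverse share Jacobian and uniformly bounded second derivatives whenever all shares are at least $\underline s$,
\[
g(O;\theta,P^R) - g(O;\theta,P^\ast) = \mathcal L_O(\theta)\,\Delta_O(\theta) + \rho_O(\theta), \qquad \|\rho_O(\theta)\| \le C\|\Delta_O(\theta)\|^2 .
\]
This requires a small caveat: the expansion needs $\Delta_O$ small. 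I would therefore work on the event that $\sup_{O,\theta}\|\Delta_O(\theta)\|$ is small; on the complement, boundedness of $g$ gives an $O(\|\Delta_O\|^2)$ bound anyway, since $\|g\|\le C$. Taking the expectation over $O$ conditional on $P^R$, and exchanging the finite sum over $r$ with $E_O$, gives exactly $b_R(\theta) = R^{-1}\sum_r\psi(\nu_r;\theta) + r_R(\theta)$ with $r_R(\theta) = E_O[\rho_O(\theta)]$.

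The main step, and the expected obstacle, is the uniform bound $\|r_R\|_\infty = O_p(R^{-1})$. Pointwise it is easy: $E\|\Delta_O(\theta)\|^2 \le C/R$ by independence of the draws, so $E|r_R(\theta)| \le C/R$. Uniformity needs
\[
E\Bigl[\sup_\theta E_O\|\Delta_O(\theta)\|^2\Bigr] \le E_O E\Bigl[\sup_\theta \|\Delta_O(\theta)\|^2\Bigr] = O(R^{-1}).
\]
I would prove this by viewing $\theta\mapsto \sqrt R\,\Delta_O(\theta)$ as an empirical process indexed by the smooth, finite-dimensional class $\{\nu\mapsto \sigma(\tilde\delta^\ast(O,\theta_2),\theta_2,\nu): \theta\in\Theta\}$. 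Its members are bounded by one and Lipschitz in $\theta$ with envelope polynomial in $\|\nu\|$, which has all Gaussian moments. A Lipschitz-in-parameter bracketing entropy bound (e.g.\ van der Vaart--Wellner Thm.~2.7.11) then yields a maximal inequality $E\sup_\theta\|\sqrt R\,\Delta_O\|^2 \le C$, with $C$ uniform in $O$ by Assumption~\ref{ass:regular-markets}.

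The remaining claims follow from the same class argument. First, $\psi(\cdot;\theta)$ is mean zero, bounded, and Lipschitz in $\theta$ with integrable envelope, because $\mathcal L_O$ and $\tilde\delta^\ast$ are Lipschitz in $\theta$ uniformly over markets. Hence $\sup_\theta\|R^{-1}\sum_r\psi(\nu_r;\theta)\| = O_p(R^{-1/2})$ by the same maximal inequality, and so $\|b_R\|_\infty = O_p(R^{-1/2})$. Second, for $\mathrm{Lip}(b_R)$ and $\sup_\theta\|\nabla_\theta\bar m^c_R - G\|$, I would differentiate under $E_O$, using dominated convergence and Lemma~\ref{lem:blp-regularity}(i). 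This writes $\nabla_\theta b_R(\theta) = E_O[\nabla_\theta g(O;\theta,P^R) - \nabla_\theta g(O;\theta,P^\ast)]$. The difference inside is a smooth function of the simulated shares and of their first derivatives in $(\delta,\theta_2)$, each of which is an empirical average over $\nu_r$. A first-order expansion therefore bounds it by $C$ times the sup-norm of centered empirical averages of a class of derivative functions of $\sigma$, again Lipschitz in $\theta$ with polynomial envelopes. The same maximal inequality gives $O_p(R^{-1/2})$ uniformly, and $\mathrm{Lip}(b_R) \le \sup_\theta\|\nabla_\theta b_R\|$ by convexity of $\Theta$.
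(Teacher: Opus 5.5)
Your route is the paper's: the second-order expansion of the BLP inversion from the proof of Lemma~\ref{lem:sim-error}, integration over the market with the block held fixed so that Fubini turns the linear term into $R^{-1}\sum_r\psi(\nu_r;\theta)$, and a bracketing maximal inequality for finite-dimensional, index-Lipschitz classes with polynomial envelopes. Your two departures are sound. First, you control the remainder market by market via $\sup_\theta E_O\le E_O\sup_\theta$ and Tonelli, where the paper uses one supremum of the share error over $(\delta,\theta_2)$; this is legitimate given a per-market bound that is uniform in $O$. Second, for the gradient you expand $\nabla_\theta g(O;\theta,P^R)-\nabla_\theta g(O;\theta,P^\ast)$ directly. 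That needs only first-derivative classes of $\sigma$ and gives exactly the $O_p(R^{-1/2})$ the lemma claims. The paper instead differentiates the decomposition and relies on the further claim that the remainder is Lipschitz in $\theta$ with constant $O_p(Z_R^2)$, which buys a linearization of $\nabla_\theta b_R$ that the lemma does not need.

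The step that fails as written is $\|\rho_O(\theta)\|\le C\|\Delta_O(\theta)\|^2$, with the share error evaluated only at $\tilde\delta^\ast(O,\theta_2)$. Let $\tilde\delta^R:=\tilde\delta(S,\theta_2;P^R)$, $J_\ast:=\partial_\delta\tilde s(\tilde\delta^\ast,\theta_2;P^\ast)$ and $\bar J_R:=\int_0^1\partial_\delta\tilde s(\tilde\delta^\ast+u(\tilde\delta^R-\tilde\delta^\ast),\theta_2;P^R)\,du$. Since $\tilde s(\tilde\delta^\ast,\theta_2;P^R)=S+\Delta_O(\theta)$ and $\tilde s(\tilde\delta^R,\theta_2;P^R)=S$, one has exactly $\tilde\delta^R-\tilde\delta^\ast=-\bar J_R^{-1}\Delta_O(\theta)$. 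Hence $\rho_O(\theta)=-\mathcal L_O(\theta)(\bar J_R-J_\ast)\bar J_R^{-1}\Delta_O(\theta)$, and
\[
\|\rho_O(\theta)\|\le C\,\|\Delta_O(\theta)\|\bigl(\|\Delta_O(\theta)\|+D_R\bigr),\qquad D_R:=\sup_{(\delta,\theta_2)\in K}\bigl\|\partial_\delta\tilde s(\delta,\theta_2;P^R)-\partial_\delta\tilde s(\delta,\theta_2;P^\ast)\bigr\|,
\]
where $K$ is a compact set containing the inversion points and the segments joining them. The Jacobian discrepancy $D_R$ is an $O_p(R^{-1/2})$ fluctuation in its own right, and the share error at a single point does not control it. So the cross term has the same order $R^{-1}$ as your quadratic term but is not bounded by it. Your maximal inequality over $\{\nu\mapsto\sigma(\tilde\delta^\ast(O,\theta_2),\theta_2,\nu)\}_{\theta}$ never reaches this term.

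The same issue affects your complement step. On the complement of a global event, the usable bound is in terms of a global supremum, not $\|\Delta_O(\theta)\|^2$. For an $O_p$ claim you can simply work on $\{\sup_{O,\theta}\|\Delta_O(\theta)\|+D_R\le\epsilon\}$, whose probability tends to one.

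The repair uses tools you already invoke. The entries of $\partial_\delta\sigma$ are bounded by one and Lipschitz in $(\delta,\theta_2)$ with a polynomial envelope, so $E[D_R^2]\le CR^{-1}$. Then
\[
\sup_\theta E_O\|\rho_O(\theta)\|\le C\,E_O\sup_\theta\|\Delta_O(\theta)\|^2+C\,D_R\,E_O\sup_\theta\|\Delta_O(\theta)\|,
\]
which is $O_p(R^{-1})$ by your bound together with Cauchy--Schwarz. The paper's proof has the same omission: it bounds $\|r_R\|_\infty$ by $CZ_R^2$ with $Z_R$ a supremum of share values only, leaving the Jacobian term implicit.
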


\begin{proof}
Take expectations over $O$, given the block, in the second-order expansion of the BLP inversion established in the
proof of Lemma~\ref{lem:sim-error},
$u^R(\theta) = \mathcal L_O(\theta)[\tilde s(\cdot;P^R) - \tilde s(\cdot;P^\ast)] + \rho_O(\theta)$ with
$\|\rho_O(\theta)\| \le C\|\tilde s(\cdot;P^R) - \tilde s(\cdot;P^\ast)\|^2$. At any $(\delta,\theta_2)$ the share
error is the centered average $R^{-1}\sum_r\{\sigma(\delta,\theta_2,\nu_r) - E_\nu[\sigma(\delta,\theta_2,\nu)]\}$,
so Fubini gives $E_O[\mathcal L_O(\theta)\{\cdot\}] = R^{-1}\sum_r \psi(\nu_r;\theta)$ with $\psi$ as
in~\eqref{eq:psi-common}. Write
$Z_R := \sup_{(\delta,\theta_2)}\|\tilde s(\delta,\theta_2;P^R) - \tilde s(\delta,\theta_2;P^\ast)\|$. The class
$\{\nu \mapsto \sigma_j(\delta,\theta_2,\nu)\}$ is bounded by one, indexed by a compact set, and Lipschitz in the
index with a square-integrable envelope (derivatives of the mixed logit bring down polynomials in $\|\nu\|$), so
its bracketing entropy integral is finite and the maximal inequality for empirical processes
\citep{van1996weak} gives $E[Z_R^2] \le C R^{-1}$, and hence $\|r_R\|_\infty \le C Z_R^2 = O_p(R^{-1})$.
The same
maximal inequality, applied to the uniformly bounded, centered, index-Lipschitz classes
$\{\nu \mapsto \psi(\nu;\theta)\}$ and $\{\nu \mapsto \nabla_\theta\psi(\nu;\theta)\}$, gives $O_p(R^{-1/2})$ for
the corresponding suprema. The three stated rates follow, using
$\nabla_\theta b_R = R^{-1}\sum_r \nabla_\theta\psi(\nu_r;\cdot) + \nabla_\theta r_R$ and the fact that the
remainder in the expansion is Lipschitz in $\theta$ with constant $O_p(Z_R^2)$.
\end{proof}

\begin{lemma}[Pseudo-true parameter]\label{lem:pseudo-true}
Let Assumption~\ref{ass:sampling-common} and Assumptions~\ref{ass:param-space}, \ref{ass:identification},
\ref{ass:regular-markets} and~\ref{ass:drift} hold, with $r_2$ small. Then, on $E_1$:
\begin{enumerate}
\item[(i)] $A^\ast\bar m^{c}_R(\theta) = 0$ has a unique solution $\theta^R$ in $\Theta$, it lies in $N_0$, and
$\|\theta^R - \theta^\ast\| \le C\|b_R\|_\infty$.
\item[(ii)] $\theta^R - \theta^\ast = -A^\ast b_R(\theta^\ast) + O_p(R^{-1})$ and
$\sqrt R\,(\theta^R - \theta^\ast) \xrightarrow{d} N(0,\Lambda_\nu)$.
\end{enumerate}
If in addition Assumption~\ref{ass:pilot} holds and $r_1$ is small, then, on $E_0 \cap E_1$, the equation
$\hat A_{T_0}\,\bar m^{c}_R(\theta) = 0$ has a unique solution $\theta_c^R$ in $N_0$, and
$\|\theta_c^R - \theta^R\| \le C\|\Delta A\|\,\|b_R\|_\infty$.
\end{lemma}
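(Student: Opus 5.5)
Set $F_R(\theta) := A^\ast\bar m^{c}_R(\theta) = A^\ast\bar m(\theta) + A^\ast b_R(\theta)$ on $E_1$ and treat it as a small perturbation of the population field $A^\ast\bar m$. The plan has three pieces: uniqueness and location of the root from Assumption~\ref{ass:drift} plus smallness of $b_R$, existence from a contraction or degree argument on $N_0$, and the linearization from Lemma~\ref{lem:perturbation}.

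\emph{Part (i).} For uniqueness and localization, Assumption~\ref{ass:drift} gives
\[
(\theta-\theta^\ast)^\top F_R(\theta)\ge \tilde C\|\theta-\theta^\ast\|^2 - \|A^\ast\|\,\|b_R\|_\infty\|\theta-\theta^\ast\|
\quad\text{for all }\theta\in\Theta.
\]
So any root satisfies $\|\theta-\theta^\ast\|\le (\|A^\ast\|/\tilde C)\|b_R\|_\infty$, which lies inside $N_0$ once $r_2$ is small. On $N_0$ the Jacobian is $\nabla F_R = A^\ast G(\theta) + A^\ast\nabla b_R$. Its symmetric part is at least $\tfrac12 I_d - \|A^\ast\| r_2 \ge \tfrac14 I_d$ on $E_1$. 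For two roots $\theta,\theta'\in N_0$, convexity of $N_0$ and the mean-value identity then give $0=(\theta-\theta')^\top(F_R(\theta)-F_R(\theta'))\ge\tfrac14\|\theta-\theta'\|^2$, so the root is unique.

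For existence, I would use the map $\Psi(\theta)=\theta - \kappa F_R(\theta)$ on the closed ball $B:=\{\|\theta-\theta^\ast\|\le \rho_R\}$, with $\rho_R := C\|b_R\|_\infty$ and $\kappa$ small. Because the symmetric part of $\nabla F_R$ exceeds $\tfrac14 I_d$ and $\|\nabla F_R\|$ is bounded, $\Psi$ is a contraction with factor $1-c\kappa$. It also maps $B$ into itself for the chosen $C$, since $\|\Psi(\theta^\ast)-\theta^\ast\|\le \kappa\|A^\ast\|\,\|b_R\|_\infty$. Banach's fixed-point theorem then gives existence. Alternatively, Brouwer applied to the inward-pointing field on the sphere also works.

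\emph{Part (ii).} Expand $0=F_R(\theta^R)=A^\ast G^\ast(\theta^R-\theta^\ast)+O(\|\theta^R-\theta^\ast\|^2)+A^\ast b_R(\theta^\ast)+A^\ast[b_R(\theta^R)-b_R(\theta^\ast)]$. Using $A^\ast G^\ast=I_d$, together with $\mathrm{Lip}(b_R)=O_p(R^{-1/2})$ and $\|\theta^R-\theta^\ast\|=O_p(R^{-1/2})$ from Lemma~\ref{lem:perturbation} and (i), every remainder is $O_p(R^{-1})$. This yields $\theta^R-\theta^\ast=-A^\ast b_R(\theta^\ast)+O_p(R^{-1})$. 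Next substitute $b_R(\theta^\ast)=R^{-1}\sum_r\psi(\nu_r;\theta^\ast)+O_p(R^{-1})$. The draws $\nu_r$ are i.i.d., $\psi$ is bounded, and $E_\nu\psi=0$, so the Lindeberg--L\'evy CLT gives $\sqrt R\,(\theta^R-\theta^\ast)\xrightarrow{d}N(0,A^\ast\Omega_\nu A^{\ast\top})=N(0,\Lambda_\nu)$.

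\emph{Final claim.} The field $\hat A_{T_0}\bar m^{c}_R$ is handled the same way on $N_0$. Its Jacobian $\hat A\nabla\bar m^{c}_R$ has symmetric part at least $\tfrac18 I_d$ for $r_1,r_2$ small, which gives uniqueness and, by the same contraction argument, existence. For the distance bound, note that $\bar m(\theta^\ast)=0$ and write $\bar m^{c}_R(\theta^R)=\bar m(\theta^R)+b_R(\theta^R)$. Both terms are $O(\|b_R\|_\infty)$, using (i) and the Lipschitz bound on $\bar m$. Since $A^\ast\bar m^{c}_R(\theta^R)=0$,
\[
\hat A\bar m^{c}_R(\theta^R)=\Delta A\,\bar m^{c}_R(\theta^R),
\qquad\text{so}\qquad
\|\hat A\bar m^{c}_R(\theta^R)\|\le C\|\Delta A\|\,\|b_R\|_\infty .
\]
Strong monotonicity then gives $\tfrac18\|\theta_c^R-\theta^R\|^2\le(\theta^R-\theta_c^R)^\top\hat A\bar m^{c}_R(\theta^R)$, and dividing by $\|\theta_c^R-\theta^R\|$ yields the claim.

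\textbf{Main obstacle.} I expect the delicate step to be the existence argument together with the monotonicity of the non-symmetric fields $A^\ast G(\theta)+A^\ast\nabla b_R$ and $\hat A\nabla\bar m^c_R$. The difficulty is ensuring the relevant segments stay inside $N_0$ so that the symmetric-part bound applies. The localization bound from the global drift handles this, and it is the step I would check first.
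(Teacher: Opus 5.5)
Your proposal is correct and follows essentially the same route as the paper. The shared steps are: drift-based localization of every root inside $N_0$; strong monotonicity from the $\tfrac14 I_d$ (respectively $\tfrac18 I_d$) lower bound on the symmetric part of the Jacobian, used both for uniqueness and for the $\|\Delta A\|\,\|b_R\|_\infty$ comparison; and a mean-value expansion combined with Lemma~\ref{lem:perturbation} and the i.i.d.\ CLT for part (ii). Your one addition is the explicit Banach contraction on the ball of radius $C\|b_R\|_\infty$ for existence, which spells out a step the paper compresses into the assertion that a strongly monotone field on the compact convex set $N_0$ has a unique interior zero.
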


\begin{proof}
\emph{(i).} On $E_1$, Assumption~\ref{ass:drift} gives
$(\theta-\theta^\ast)^\top A^\ast\bar m^{c}_R(\theta) \ge \tilde C\|\theta-\theta^\ast\|^2
- \|A^\ast\|\,\|b_R\|_\infty\|\theta-\theta^\ast\| > 0$ whenever $\|\theta-\theta^\ast\| > \|A^\ast\|r_2/\tilde C$,
so every solution lies in a ball contained in $N_0$ for $r_2$ small. On $N_0$ the map
$F(\theta) := A^\ast\bar m^{c}_R(\theta)$ has Jacobian
$A^\ast G(\theta) + A^\ast[\nabla_\theta\bar m^{c}_R(\theta) - G(\theta)]$, whose symmetric part exceeds
$\tfrac14 I_d$ for $r_2$ small, so $F$ is strongly monotone on the compact convex set $N_0$ and has there a unique,
interior zero $\theta^R$. Uniqueness on all of $\Theta$ follows from the first display. Evaluating
Assumption~\ref{ass:drift} at $\theta^R$, where $\bar m(\theta^R,P^\ast) = -b_R(\theta^R)$, gives
$\tilde C\|\theta^R-\theta^\ast\|^2 \le \|A^\ast\|\,\|b_R\|_\infty\|\theta^R-\theta^\ast\|$.

\emph{(ii).} A mean-value expansion of $0 = A^\ast\bar m^{c}_R(\theta^R)$ about $\theta^\ast$, together with
$A^\ast G^\ast = I_d$ (Assumption~\ref{ass:identification}), the Lipschitz property of $G$, and
$b_R(\theta^R) = b_R(\theta^\ast) + O(\mathrm{Lip}(b_R)\|\theta^R-\theta^\ast\|)$, makes every correction term
$O_p(R^{-1})$ by part~(i) and Lemma~\ref{lem:perturbation}, whence
$\theta^R-\theta^\ast = -A^\ast b_R(\theta^\ast) + O_p(R^{-1})$. By Lemma~\ref{lem:perturbation},
$\sqrt R\,b_R(\theta^\ast) = R^{-1/2}\sum_r \psi(\nu_r;\theta^\ast) + O_p(R^{-1/2})$, an i.i.d.\ sum of bounded,
mean-zero vectors, so the multivariate central limit theorem and the map $x \mapsto -A^\ast x$ give the stated
limit with $\Lambda_\nu = A^\ast\Omega_\nu A^{\ast\top}$.

\emph{Final clause.} Under Assumption~\ref{ass:pilot}, $\hat A_{T_0} \pto A^\ast$, so $\mathbb P(E_0) \to 1$. On
$E_0 \cap E_1$ the same strong-monotonicity argument applies to $\hat F(\theta) := \hat A_{T_0}\bar m^{c}_R(\theta)$,
whose Jacobian's symmetric part exceeds $\tfrac18 I_d$ for $r_1, r_2$ small. With $\hat F(\theta_c^R) = 0$ and
$A^\ast\bar m^{c}_R(\theta^R) = 0$, strong monotonicity gives
$\tfrac18\|\theta_c^R-\theta^R\|^2 \le -(\theta_c^R-\theta^R)^\top\Delta A\,\bar m^{c}_R(\theta^R)$, and
$\|\bar m^{c}_R(\theta^R)\| \le L_G\|\theta^R-\theta^\ast\| + \|b_R\|_\infty \le C\|b_R\|_\infty$ by part~(i).
\end{proof}

\begin{proof}[Proof of Theorem~\ref{thm:common-draws}]
Condition on $\mathcal F_{T_0}$, which now contains the block. Given the block, $g(O_t;\theta,P^R)$ is an
i.i.d.-across-$t$ unbiased draw of $\bar m^{c}_R(\theta)$, so the recursion is an \emph{exact} stochastic
approximation to $\bar m^{c}_R$. There is no simulation-error term, and the coupling argument of
Theorem~\ref{thm:negligible-sim} is not needed. On $E_0 \cap E_1$ the proofs of
Theorems~\ref{thm:consistency-snfp} and~\ref{thm:oracle-clt} therefore apply verbatim with
$(P^\ast, \theta^\ast, G^\ast, \Omega, \beta_R, u_t^R)$ replaced by
$(P^R, \theta_c^R, \nabla_\theta\bar m^{c}_R(\theta_c^R),
\mathrm{Var}[g(O;\theta_c^R,P^R)\mid P^R], 0, 0)$. Two substitutions need checking, and both are supplied by
Lemmas~\ref{lem:perturbation}--\ref{lem:pseudo-true}. First, the drift transfers. For $r_1, r_2$ small,
$(\theta-\theta_c^R)^\top \hat A_{T_0}\bar m^{c}_R(\theta) \ge c_0\|\theta-\theta_c^R\|^2$ on all of $\Theta$,
because on $N_0$ this is the lower bound on the symmetric part of $\hat A_{T_0}\nabla_\theta\bar m^{c}_R$ used in
the proof of Lemma~\ref{lem:pseudo-true}, while off $N_0$ each of the three replacements
$A^\ast \to \hat A_{T_0}$, $\bar m(\cdot,P^\ast) \to \bar m^{c}_R$, $\theta^\ast \to \theta_c^R$ perturbs the
two-region bound of Assumption~\ref{ass:drift} by at most $C(r_1+r_2)$, and compactness lifts it. Second, the
conditional sandwich variance converges, $\Sigma(\hat A_{T_0},P^R) \pto \Sigma_\theta$, since
$\hat A_{T_0} \pto A^\ast$, $\nabla_\theta\bar m^{c}_R(\theta_c^R) \pto G^\ast$, and
$\sup_\theta E_O\|g(O;\theta,P^R) - g(O;\theta,P^\ast)\|^2 = O_p(R^{-1})$. Hence, writing
$\xi_T := \sqrt{T-T_0}\,(\bar\theta_T - \theta_c^R)$, the conditional limit of $\xi_T$ is $N(0,\Sigma_\theta)$ and
\begin{equation}\label{eq:cd-decomposition}
  \sqrt T\,(\bar\theta_T - \theta^\ast)
  = \sqrt{\tfrac{T}{T-T_0}}\;\xi_T \;+\; \sqrt T\,(\theta_c^R - \theta^R)
  \;+\; \sqrt{\tfrac{T}{R}}\;\sqrt R\,(\theta^R - \theta^\ast).
\end{equation}
Part~(i) follows from $\bar\theta_T - \theta_c^R \pto 0$ and $\theta_c^R - \theta^\ast = O_p(R^{-1/2})$, which
require only $T_0 \to \infty$ and $R \to \infty$. For part~(ii), the first factor tends to one because
$T_0 = o(T)$. The middle term is $o_p(1)$ because
$\sqrt T\|\theta_c^R - \theta^R\| \le C\|\Delta A\|\,O_p(\sqrt{T/R})$ with $T/R$ bounded and $\Delta A \pto 0$.
Finally, the remaining two terms are asymptotically independent, $\sqrt R(\theta^R-\theta^\ast)$ being
$\mathcal F_{T_0}$-measurable while the conditional limit of $\xi_T$ does not depend on the block, so
Lemma~\ref{lem:pseudo-true}(ii) and Slutsky's theorem give $N(0,\Sigma_\theta + c\Lambda_\nu)$. For part~(iii),
multiply~\eqref{eq:cd-decomposition} by $\sqrt{R/T}$ \emph{before} bounding any term. The first becomes
$\sqrt{R/T}\,\xi_T(1+o(1)) = o_p(1)$ since $R/T \to 0$ and $\xi_T$ is conditionally tight, the middle becomes
$\sqrt R\,\|\theta_c^R-\theta^R\| \le C\|\Delta A\|\,O_p(1) = o_p(1)$, and the last is
$\sqrt R(\theta^R-\theta^\ast) \xrightarrow{d} N(0,\Lambda_\nu)$. The bound used for the middle term in part~(ii),
of order $\|\Delta A\|\sqrt{T/R}$, is not $o_p(1)$ when $R/T \to 0$, which is why the rescaling must precede it.
\end{proof}

\begin{proof}[Proof of Corollary~\ref{cor:qmc}]
Under the stated quadrature condition the linear term of Lemma~\ref{lem:perturbation} is bounded by $C\epsilon_R$,
so $\|b_R\|_\infty + \mathrm{Lip}(b_R) + \sup_\theta\|\nabla_\theta\bar m^{c}_R - G\| \le C\epsilon_R$. Choosing
$R_0$ with $C\epsilon_R \le r_2$ for all $R \ge R_0$ makes $E_1$ hold surely, and
Lemma~\ref{lem:pseudo-true}(i) gives $\|\theta^R - \theta^\ast\| \le C\epsilon_R$ with no central limit term,
part~(ii) of that lemma being unnecessary because the displacement is now a deterministic bias. The conditional
analysis in the previous proof is unchanged, with $O_p(R^{-1})$ replaced by $C\epsilon_R^2$ in the variance
step, because it uses only the $\mathcal F_{T_0}$-measurability of $P^R$, the martingale-difference property of
the noise, the transferred drift, and the bounds defining $E_1$. In~\eqref{eq:cd-decomposition} the last term is
then $O(\sqrt T\,\epsilon_R) \to 0$ and the middle term is $O_p(\|\Delta A\|\sqrt T\,\epsilon_R) = o_p(1)$, leaving
the limit $N(0,\Sigma_\theta)$. Conditional and unconditional centering coincide, so the plug-in sandwich
interval requires no correction. For a randomly scrambled net the same argument runs conditionally on the scramble with
$\epsilon_R = O_p(R^{-1+\eta})$, and the unconditional statements follow by dominated convergence.
\end{proof}
\clearpage
\section{Simulation Details: Small-Sample Bias in the BLP Estimator}\label{appendix:small_sample_details}

This appendix states the data-generating process behind Figure~\ref{fig:small_sample_blp} and
reports the accompanying accuracy statistics.

We use a three-parameter BLP model with $J = 4$ products per
market. The mean utility is $\delta_{jt} = \alpha + \mu \, p_{jt} + \xi_{jt}$, with a single
random coefficient $\sigma_p$ on price. Consumer $i$'s utility is
$u_{ijt} = \delta_{jt} + \sigma_p \, p_{jt} \, \nu_i + \varepsilon_{ijt}$, where
$\nu_i \sim N(0,1)$ and $\varepsilon_{ijt}$ is type I extreme value. True values are
$\alpha_0 = -3$, $\mu_0 = -1$, $\sigma_{p,0} = 0.5$. The data-generating process draws
$\xi_{jt}$ and $\eta_{jt}$ from a truncated normal distribution and sets prices as
$p_{jt} = 0.5 |2 + \rho \, \xi_{jt} + \eta_{jt}|$, where $\rho$ controls endogeneity. The
instrument is $z_{1,jt} = a_{1,jt} + \lambda_z \, \eta_{jt}$, where
$a_{1,jt} \sim \text{Uniform}(0,1)$ and $\lambda_z$ governs the correlation between the
instrument and the price shock $\eta_{jt}$. The
instrument vector is $Z_{jt} = (1, z_{1,jt}, z_{1,jt}^2, z_{1,jt}^3)'$. Estimation uses NFP
with $R = 500$ simulation draws, L-BFGS-B optimization over $\sigma_p \in [0.01, 5]$ with
random starting value $\sigma_p^{(0)} \sim \text{Uniform}(0.01, 5)$, and the plug-in sandwich
variance estimator.

Table~\ref{tab:app_batch_detail} reports bias, RMSE, and the boundary rate for the NFP
estimator with random starting values. All results use $\lambda_z = 0.5$, $\rho = 2$,
$J = 4$, $R = 500$, and $1{,}000$ replications per sample size. Bias and RMSE are computed
across all converged replications. The boundary rate is the share of replications with
$\hat\sigma_p \leq 0.02$ (the small-sample artifact discussed in
Section~\ref{section:intro}), and decreases monotonically from $33\%$ at $T = 2{,}000$
to less than $1\%$ at $T = 128{,}000$.

\begin{table}[h!]
\centering
\caption{Small-sample performance of NFP.}
\label{tab:app_batch_detail}
\small
\begin{tabular}{ll *{4}{S[table-format=-1.4]}}
\toprule
& & {$T = 2{,}000$} & {$T = 8{,}000$} & {$T = 32{,}000$} & {$T = 128{,}000$} \\
\midrule
\multicolumn{2}{l}{\textit{Bias}} \\
& Intercept   &  0.0720 &  0.0039 & -0.0031 & -0.0044 \\
& $\mu$       & -0.1706 & -0.0156 &  0.0040 &  0.0078 \\
& $\sigma_p$  & -0.0177 & -0.0780 & -0.0339 & -0.0115 \\[3pt]
\multicolumn{2}{l}{\textit{RMSE}} \\
& Intercept   &  0.3367 &  0.1757 &  0.0966 &  0.0467 \\
& $\mu$       &  0.7259 &  0.3477 &  0.1864 &  0.0895 \\
& $\sigma_p$  &  0.5305 &  0.3426 &  0.1942 &  0.0818 \\[3pt]
\multicolumn{2}{l}{\textit{Boundary rate ($\hat\sigma_p \leq 0.02$)}} \\
&             & {33.1\%} & {24.9\%} & {6.1\%}  & {0.1\%}  \\
\bottomrule
\end{tabular}
\end{table}

Figure~\ref{fig:app_nfp_tstat} plots the $t$-statistic density of the NFP estimator
against the $N(0,1)$ reference, one panel per parameter. As $T$ grows, the
densities concentrate around zero and sharpen toward the standard-normal shape,
confirming the asymptotic approximation in
Section~\ref{section:asymptotic_theory}. At $T = 2{,}000$ the intercept and
$\mu$ densities exhibit visible skew, whereas by $T = 128{,}000$ all three distributions
are close to $N(0,1)$.
\begin{figure}[h!]
    \centering
    \caption{$t$-statistic densities of the NFP estimator at
            $\lambda_z = 0.5$, $\rho = 2$.} \includegraphics[width=0.95\linewidth]{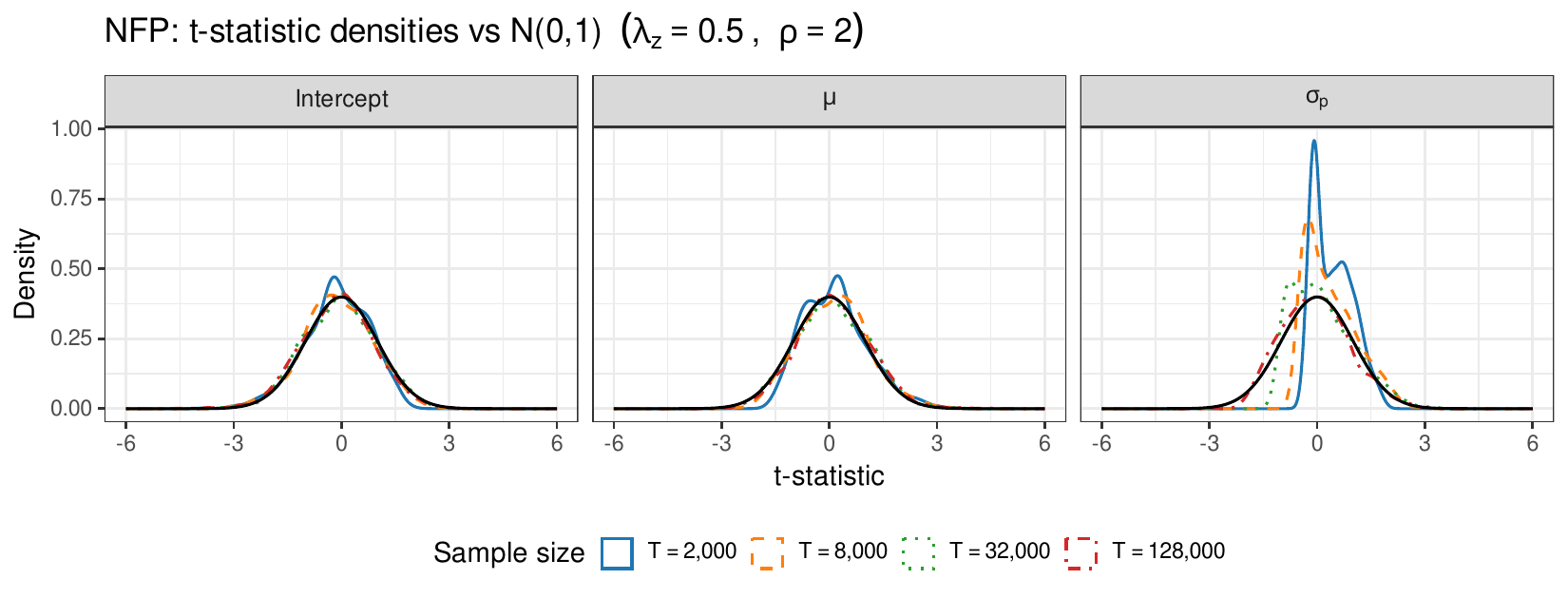} \label{fig:app_nfp_tstat} 
\end{figure} 
Figure~\ref{fig:app_qq_nfp_overlap} displays the QQ-plots of the studentized NFP estimator, $(\hat\theta - \theta^\ast)/\hat{\mathrm{SE}}$, against the standard normal. All four sample sizes are overlaid on a single panel per parameter, which makes the collapse toward the $45^\circ$ line as $T$ grows immediately visible. For $\sigma_p$, boundary replications ($\hat\sigma_p \leq 0.02$) are excluded, consistent with Figure~\ref{fig:small_sample_blp} and Table~\ref{tab:app_batch_detail}. 
\begin{figure}[htbp] \centering
\caption{QQ-plots of the studentized NFP estimator, all sample sizes overlaid. One panel per parameter, with
$\lambda_z = 0.5$, $\rho = 2$.} \label{fig:app_qq_nfp_overlap}
    \begin{subfigure}[t]{0.32\textwidth} 
        \includegraphics[width=\textwidth]{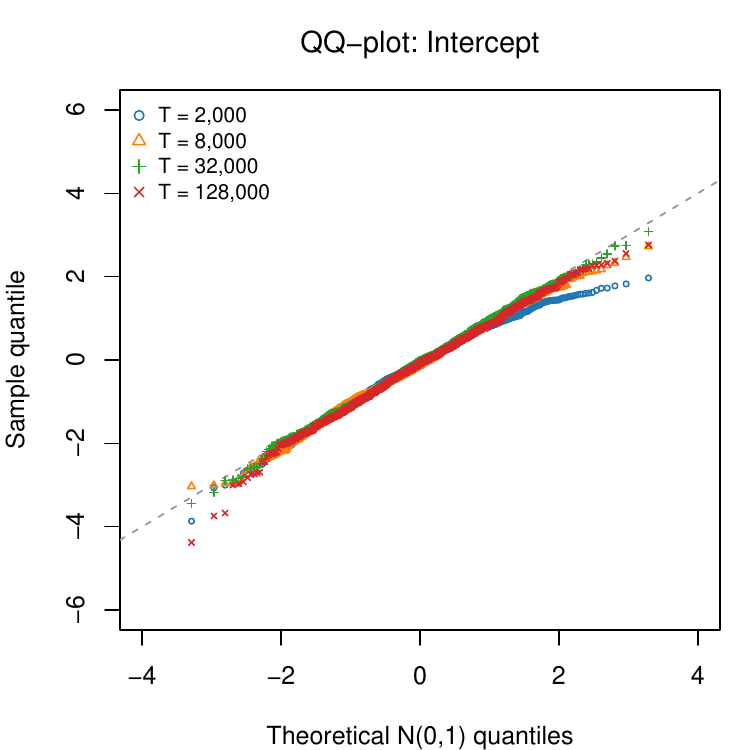} \caption{Intercept} \end{subfigure} \begin{subfigure}[t]{0.32\textwidth} \includegraphics[width=\textwidth]{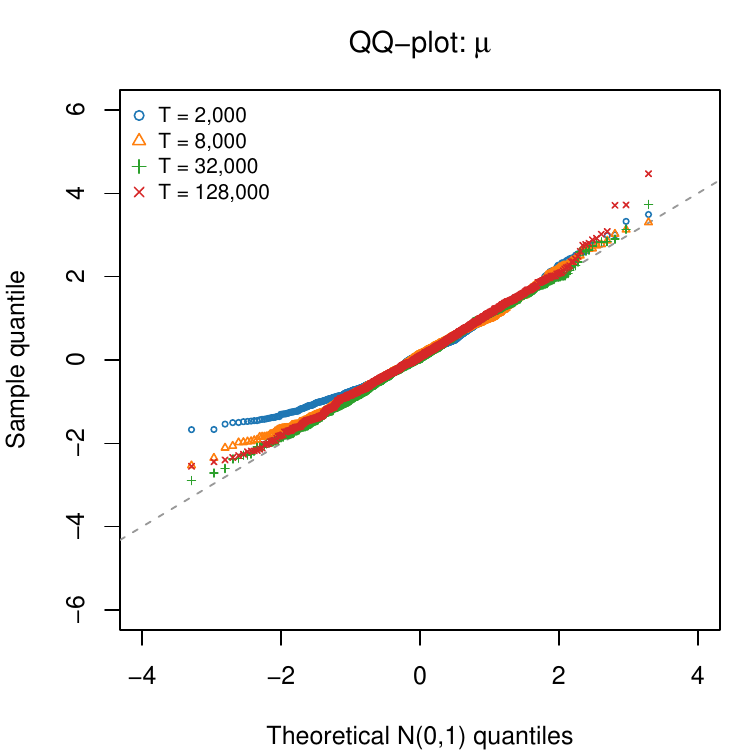} \caption{$\mu$} 
    \end{subfigure} 
    \begin{subfigure}[t]{0.32\textwidth} 
        \includegraphics[width=\textwidth]{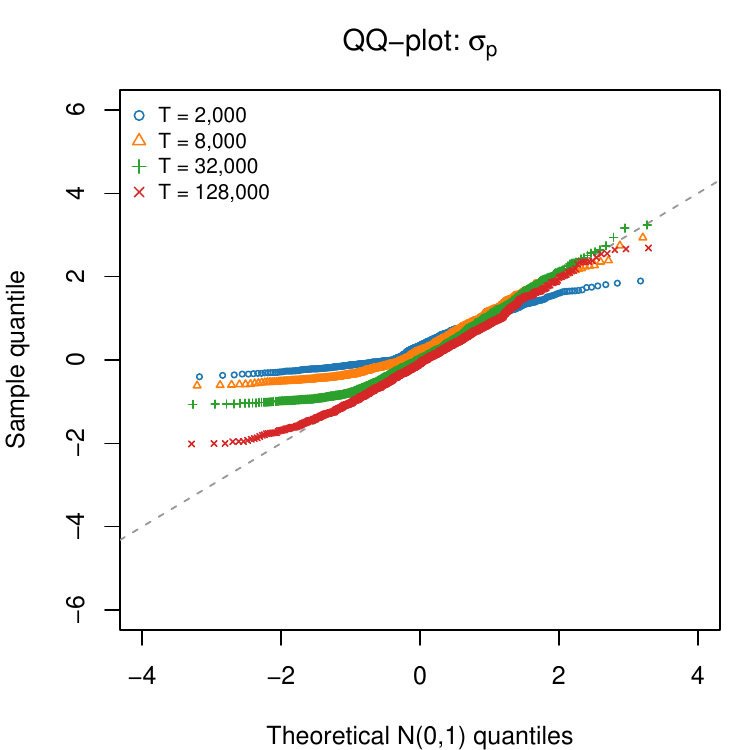} \caption{$\sigma_p$} 
    \end{subfigure} 
\end{figure} 
\clearpage

\section{SNFP with Multiple Epochs}\label{appendix:additional_mc}%
\label{appendix:time_memory_table}\label{appendix:snfp_ep1}

\rev{The Monte Carlo results of Section~\ref{section: Monte Carlo} run SNFP for a single epoch, so that each market is
visited exactly once, which is the setting covered by the theory of Section~\ref{section:asymptotic_theory}. In
practice, SNFP can also be run for several epochs. A later epoch reads a copy of the data file with the markets in a
new random order, and the Polyak--Ruppert average continues across epochs. Revisiting a market breaks the
independence between the current observation and the past iterates used in our proofs, so we treat additional
epochs as a finite-sample device. This appendix reports two complementary comparisons. The first matches accuracy:
SNFP runs until it agrees with the full-sample estimator to within a quarter of an NFP standard error, and we record
how many passes this takes and how long. The second fixes the budget at three shuffled epochs and reports bias, RMSE,
interval length and coverage alongside the single-epoch and NFP results of Table~\ref{tab:stat_accuracy}. All design
choices are those of Section~\ref{section: Monte Carlo}, namely the same DGP, a $T_0 = 1{,}000$ pilot,
$\theta_0 \sim \mathrm{Unif}(\theta^\ast \pm 0.1)$, the same box bounds and sandwich standard errors, and the same
hardware.}

\subsection{Accuracy-Matched Comparison}\label{appendix:accuracy_matched}

\rev{We stop SNFP when it matches the accuracy of NFP. Let $d_t$ be the distance defined in
Section~\ref{subsection:mc_efficiency}, the largest coordinate-wise gap between the Polyak--Ruppert average after
market $t$ and the NFP estimate on the same replication, in units of NFP standard errors.} We evaluate $d_t$ on a
log-spaced grid of checkpoints, about 50 per decade of $t$, and SNFP stops the first time $d_t \le \tau$ holds at
three consecutive checkpoints, with $\tau = 0.25$, so that every coordinate agrees with the full-sample estimator to
within a quarter of an NFP standard error. The reported time is the time at which this target is first attained.
The rule is an oracle time-to-target criterion \citep{bottou2008tradeoffs,dolan2002benchmarking}.
\rev{Table~\ref{tab:time_memory_tau} reports the results. Its NFP columns repeat those of Table~\ref{tab:time_memory}.}

% =============================================================================
% Numbers: Experimental_Design_2026_09_10, tau = 0.25, denom = plugin_rep.
%   epochs / SNFP seconds : results/tables/grid_stopping.csv (hit_epoch,
%                           hit_est_sec), mean and sd over 10 reps.
%   SNFP peak memory      : results/timing/timing_grid_B1_E*.csv (max_rss_kb).
%   NFP time and memory   : retained from tab:time_memory unchanged.
% =============================================================================
\begin{table}[!htbp]
\centering
\caption{Computational cost of NFP vs.\ accuracy-matched SNFP.}
\label{tab:time_memory_tau}
\small
\begin{threeparttable}
\begin{tabular}{r c r r r r r r}
\toprule
\multicolumn{1}{c}{\multirow{2}{*}{$T$}} & \multirow{2}{*}{\shortstack{Median\\ epochs}} & \multicolumn{3}{c}{Time (seconds)} & \multicolumn{3}{c}{Peak memory (MB)} \\
\cmidrule(lr){3-5} \cmidrule(lr){6-8}
 & & SNFP & NFP & NFP/SNFP & SNFP & NFP & NFP/SNFP \\
\midrule
$2{,}000$   & $6.63$ & $\underset{(4.1)}{4.4}$ & $\underset{(20.4)}{34.9}$ & $7.9\times$ & $\underset{(0.9)}{86.8}$ & $\underset{(0.1)}{87.1}$ & $1.0\times$ \\[4pt]
$8{,}000$   & $2.75$ & $\underset{(1.4)}{4.9}$ & $\underset{(83.3)}{161.9}$ & $32.7\times$ & $\underset{(0.0)}{87.1}$ & $\underset{(0.2)}{98.5}$ & $1.1\times$ \\[4pt]
$32{,}000$  & $2.04$ & $\underset{(10.9)}{15.7}$ & $\underset{(487.2)}{1{,}008.9}$ & $64.3\times$ & $\underset{(1.0)}{86.2}$ & $\underset{(0.6)}{140.5}$ & $1.6\times$ \\[4pt]
$128{,}000$ & $1.87$ & $\underset{(13.7)}{48.2}$ & $\underset{(1{,}533.9)}{2{,}291.8}$ & $47.6\times$ & $\underset{(1.0)}{86.4}$ & $\underset{(1.0)}{288.6}$ & $3.3\times$ \\[4pt]
$512{,}000$ & $1.08$ & $\underset{(74.0)}{133.3}$ & $\underset{(5{,}149.8)}{12{,}441.4}$ & $93.4\times$ & $\underset{(2.7)}{87.6}$ & $\underset{(1.2)}{859.0}$ & $9.8\times$ \\
\bottomrule
\end{tabular}
\begin{tablenotes}[flushleft]
\footnotesize
\item \textit{Notes.} Time and memory are means across 10 replications, with standard
deviations in parentheses.
\end{tablenotes}
\end{threeparttable}
\end{table}

The median epochs column of Table~\ref{tab:time_memory_tau} reports the number of passes through the sample at
which $d_t \le \tau$ first holds at three consecutive checkpoints, as the median over the 10 replications. It
falls from $6.63$ at $T = 2{,}000$ to $1.08$ at $T = 512{,}000$. Larger samples therefore require fewer passes
through the data to reproduce the full-sample estimator. We report the median because the number of passes at a
small sample varies widely across replications, from $2.37$ to $37.25$ at $T = 2{,}000$, as the stopping time
depends on the order in which the markets are read.

Under this stopping rule, SNFP remains substantially faster than NFP at every sample size, with
speedup factors between $7.9$ and $93.4$. At $T = 512{,}000$, SNFP reaches the target in $133.3$ seconds, while
NFP requires $12{,}441.4$ seconds. \rev{The speedup factors are smaller than those of the single-epoch comparison in
Table~\ref{tab:time_memory}, because SNFP now reads the sample several times at small $T$ and slightly more than
once at large $T$. The two comparisons also differ in how the speedup varies with $T$. Under a fixed budget of one
pass, the accuracy level of Table~\ref{tab:time_memory} improves from $1.73$ to $0.35$ standard errors across the
grid, so the same nominal budget is roughly five times less demanding at the top of the grid than at the bottom,
whereas under a common accuracy target the speedup shows no such pattern.} The standard deviation of computation
time is much smaller for SNFP at every $T$, \rev{as in the single-epoch comparison.} Peak memory is unaffected by
the number of passes, because SNFP holds one market in memory at a time, \rev{so the memory columns coincide with
those of Table~\ref{tab:time_memory} up to replication noise.}

\subsection{Three-Epoch SNFP}\label{appendix:three_epochs}

\rev{Table~\ref{tab:stat_accuracy_epochs} reports the bias, RMSE, average confidence interval length and coverage of
SNFP run for three shuffled epochs, alongside the single-epoch SNFP and NFP entries of Table~\ref{tab:stat_accuracy},
with $1{,}000$ replications at each sample size.} Three-epoch SNFP tracks NFP at every sample size, with a largest RMSE
gap of $0.016$ and agreement to within $0.003$ from $T = 8{,}000$ onward. \rev{Its coverage at $T = 2{,}000$ lies
between $0.924$ and $0.972$, against $0.800$ to $0.970$ for a single epoch, so the additional passes remove most of
the small-sample under-coverage documented in Section~\ref{subsection:mc_accuracy}. Its RMSE decreases approximately
at the $T^{-1/2}$ rate, and its coverage remains close to the nominal $95\%$ level at every sample size, including
$T = 128{,}000$ where only SNFP is reported. From $T = 8{,}000$ onward, the single-epoch RMSE is within $0.010$ of the
three-epoch RMSE, whereas the two coverage rates differ by up to $0.046$ at $T = 8{,}000$ and by at most $0.013$ at
$T = 128{,}000$.} Both variants approach the benchmark as $T$ increases.

% =============================================================================
% Numbers: SNFP v4 ep3 (T0=1000, S=3) for T in {2k, 8k, 32k}; SNFP v6 ep3 for
% T=128k (redraw-protected). NFP from second_trial with finite-only filter,
% n=1000 per tier.
% =============================================================================
\begin{table}[htbp]
\centering
\scriptsize
\setlength{\tabcolsep}{2pt}
\caption{Statistical accuracy of SNFP by number of epochs.}
\label{tab:stat_accuracy_epochs}
\begin{threeparttable}
\begin{tabular}{ll S[table-format=-1.3] S[table-format=-1.3] S[table-format=-1.3] S[table-format=-1.3] S[table-format=-1.3] S[table-format=-1.3] S[table-format=-1.3] S[table-format=-1.3] S[table-format=-1.3] S[table-format=-1.3] S[table-format=-1.3] S[table-format=-1.3]}
\toprule
& & \multicolumn{3}{c}{$T = 2{,}000$} & \multicolumn{3}{c}{$T = 8{,}000$} & \multicolumn{3}{c}{$T = 32{,}000$} & \multicolumn{3}{c}{$T = 128{,}000$} \\
\cmidrule(lr){3-5}\cmidrule(lr){6-8}\cmidrule(lr){9-11}\cmidrule(lr){12-14}
\multicolumn{2}{l}{Method} & {SNFP} & {SNFP} & {NFP} & {SNFP} & {SNFP} & {NFP} & {SNFP} & {SNFP} & {NFP} & {SNFP} & {SNFP} & {NFP} \\
\multicolumn{2}{l}{Epochs} & {1} & {3} & {---} & {1} & {3} & {---} & {1} & {3} & {---} & {1} & {3} & {---} \\
\midrule
& Bias \\
& \quad Intercept &  0.061 &  0.025 &  0.003 &  0.014 &  0.004 & -0.003 &  0.007 &  0.003 &  0.001 & -0.003 &  0.000 &  {---} \\
& \quad $\beta_x$ &  0.009 &  0.005 &  0.001 &  0.003 &  0.001 & -0.001 &  0.002 &  0.001 &  0.000 & -0.001 &  0.000 &  {---} \\
& \quad $\beta_p$ & -0.086 & -0.040 & -0.002 & -0.025 & -0.009 &  0.006 & -0.014 & -0.007 & -0.001 &  0.007 & -0.001 &  {---} \\
& \quad $\sigma_x$ &  0.003 &  0.001 &  0.000 &  0.001 &  0.001 &  0.000 &  0.001 &  0.000 &  0.000 &  0.000 &  0.000 &  {---} \\
& \quad $\sigma_p$ &  0.028 & -0.004 & -0.008 & -0.009 & -0.010 & -0.004 & -0.006 & -0.004 &  0.000 & -0.005 & -0.003 &  {---} \\
[3pt] & RMSE \\
& \quad Intercept &  0.133 &  0.099 &  0.089 &  0.053 &  0.047 &  0.045 &  0.025 &  0.023 &  0.023 &  0.012 &  0.011 &  {---} \\
& \quad $\beta_x$ &  0.044 &  0.029 &  0.025 &  0.016 &  0.013 &  0.013 &  0.007 &  0.007 &  0.006 &  0.003 &  0.003 &  {---} \\
& \quad $\beta_p$ &  0.234 &  0.192 &  0.176 &  0.101 &  0.091 &  0.088 &  0.050 &  0.047 &  0.045 &  0.023 &  0.022 &  {---} \\
& \quad $\sigma_x$ &  0.025 &  0.020 &  0.019 &  0.010 &  0.009 &  0.009 &  0.005 &  0.005 &  0.005 &  0.002 &  0.002 &  {---} \\
& \quad $\sigma_p$ &  0.099 &  0.106 &  0.114 &  0.056 &  0.056 &  0.054 &  0.030 &  0.029 &  0.028 &  0.015 &  0.014 &  {---} \\
[3pt] & CI length \\
& \quad Intercept &  0.356 &  0.349 &  0.346 &  0.173 &  0.173 &  0.172 &  0.087 &  0.087 &  0.086 &  0.043 &  0.043 &  {---} \\
& \quad $\beta_x$ &  0.101 &  0.100 &  0.099 &  0.050 &  0.049 &  0.049 &  0.025 &  0.025 &  0.025 &  0.012 &  0.012 &  {---} \\
& \quad $\beta_p$ &  0.707 &  0.689 &  0.681 &  0.339 &  0.337 &  0.336 &  0.170 &  0.170 &  0.170 &  0.084 &  0.084 &  {---} \\
& \quad $\sigma_x$ &  0.075 &  0.074 &  0.074 &  0.037 &  0.037 &  0.037 &  0.018 &  0.018 &  0.018 &  0.009 &  0.009 &  {---} \\
& \quad $\sigma_p$ &  0.428 &  0.435 &  0.435 &  0.213 &  0.212 &  0.210 &  0.106 &  0.105 &  0.105 &  0.052 &  0.052 &  {---} \\
[3pt] & Coverage (95\%) \\
& \quad Intercept &  0.839 &  0.934 &  0.946 &  0.910 &  0.937 &  0.939 &  0.923 &  0.938 &  0.943 &  0.944 &  0.945 &  {---} \\
& \quad $\beta_x$ &  0.800 &  0.924 &  0.955 &  0.885 &  0.931 &  0.954 &  0.931 &  0.946 &  0.948 &  0.935 &  0.947 &  {---} \\
& \quad $\beta_p$ &  0.898 &  0.935 &  0.945 &  0.925 &  0.945 &  0.951 &  0.918 &  0.933 &  0.940 &  0.942 &  0.946 &  {---} \\
& \quad $\sigma_x$ &  0.856 &  0.938 &  0.950 &  0.934 &  0.955 &  0.952 &  0.936 &  0.954 &  0.951 &  0.950 &  0.953 &  {---} \\
& \quad $\sigma_p$ &  0.970 &  0.972 &  0.973 &  0.953 &  0.955 &  0.954 &  0.937 &  0.943 &  0.946 &  0.937 &  0.950 &  {---} \\
\bottomrule
\end{tabular}
\begin{tablenotes}[flushleft]
\footnotesize
\item \textit{Notes.} \rev{The single-epoch SNFP and NFP entries repeat Table~\ref{tab:stat_accuracy}, and the
three-epoch entries use the same simulated data, draws, pilot and tuning parameters within each replication.}
NFP is not reported at $T = 128{,}000$ because it does not complete within the computation-time budget.
\end{tablenotes}
\end{threeparttable}
\end{table}

The two tables together indicate how many passes SNFP requires. When the sample size is small, a single pass is
not sufficient, and additional epochs are needed before SNFP reproduces the full-sample estimator. When $T$ is large,
the number of passes required to attain a given accuracy decreases rather than increases, and at the largest
sample size some replications of SNFP attain the target after reading the sample less than once. \rev{A single epoch
is therefore an adequate budget for large datasets, and additional shuffled epochs are a cheap remedy at small $T$,
where each pass costs little.}

\FloatBarrier

\section{Common Simulation Draws: Additional Simulation Results}\label{appendix:common_draws_mc}

Tables~\ref{tab:common_draws_T2000}--\ref{tab:common_draws_T8000_cov} report the common-draw experiment of
Section~\ref{section: Monte Carlo} at $T = 2{,}000$ and $T = 8{,}000$. The design and the reported statistics are
the same as in Tables~\ref{tab:common_draws} and~\ref{tab:common_draws_cov}.

\begin{table}[!htbp]
\centering
\small
\caption{RMSE under the three draw schemes at $T = 2{,}000$.}
\label{tab:common_draws_T2000}
\begin{threeparttable}
\begin{tabular}{ll *{3}{S[table-format=1.3]}}
\toprule
& & {$c = 0.98$} & {$c = 3.9$} & {$c = 15.6$} \\
& & {$R = 2{,}048$} & {$R = 512$} & {$R = 128$} \\
\midrule
\multicolumn{5}{l}{\emph{Per-market draws}} \\
& \quad Intercept   & 0.126 & 0.131 & 0.140 \\
& \quad $\beta_x$   & 0.040 & 0.040 & 0.042 \\
& \quad $\beta_p$   & 0.226 & 0.234 & 0.266 \\
& \quad $\sigma_x$  & 0.022 & 0.023 & 0.026 \\
& \quad $\sigma_p$  & 0.098 & 0.100 & 0.114 \\
[3pt] \multicolumn{5}{l}{\emph{Shared i.i.d.\ draws}} \\
& \quad Intercept   & 0.128 & 0.137 & 0.177 \\
& \quad $\beta_x$   & 0.045 & 0.059 & 0.107 \\
& \quad $\beta_p$   & 0.232 & 0.253 & 0.355 \\
& \quad $\sigma_x$  & 0.026 & 0.037 & 0.072 \\
& \quad $\sigma_p$  & 0.103 & 0.125 & 0.183 \\
[3pt] \multicolumn{5}{l}{\emph{Shared Sobol' draws}} \\
& \quad Intercept   & 0.126 & 0.125 & 0.145 \\
& \quad $\beta_x$   & 0.040 & 0.041 & 0.051 \\
& \quad $\beta_p$   & 0.225 & 0.227 & 0.275 \\
& \quad $\sigma_x$  & 0.022 & 0.024 & 0.034 \\
& \quad $\sigma_p$  & 0.098 & 0.103 & 0.136 \\
\bottomrule
\end{tabular}
\begin{tablenotes}[flushleft]
\footnotesize
\item \textit{Notes.} $c = T/R$. See Table~\ref{tab:common_draws}.
\end{tablenotes}
\end{threeparttable}
\end{table}

\begin{table}[!htbp]
\centering
\small
\caption{Coverage of $95\%$ confidence intervals under the three draw schemes at $T = 2{,}000$.}
\label{tab:common_draws_T2000_cov}
\begin{threeparttable}
\begin{tabular}{ll *{3}{S[table-format=1.3]}}
\toprule
& & {$c = 0.98$} & {$c = 3.9$} & {$c = 15.6$} \\
& & {$R = 2{,}048$} & {$R = 512$} & {$R = 128$} \\
\midrule
\multicolumn{5}{l}{\emph{Per-market draws}} \\
& \quad Intercept   & 0.855 & 0.843 & 0.851 \\
& \quad $\beta_x$   & 0.817 & 0.801 & 0.802 \\
& \quad $\beta_p$   & 0.901 & 0.899 & 0.887 \\
& \quad $\sigma_x$  & 0.907 & 0.881 & 0.860 \\
& \quad $\sigma_p$  & 0.957 & 0.966 & 0.962 \\
[3pt] \multicolumn{5}{l}{\emph{Shared i.i.d.\ draws}} \\
& \quad Intercept   & 0.841 & 0.832 & 0.797 \\
& \quad $\beta_x$   & 0.741 & 0.625 & 0.428 \\
& \quad $\beta_p$   & 0.895 & 0.886 & 0.816 \\
& \quad $\sigma_x$  & 0.838 & 0.692 & 0.465 \\
& \quad $\sigma_p$  & 0.955 & 0.915 & 0.839 \\
[3pt] \multicolumn{5}{l}{\emph{Shared i.i.d.\ draws, widened interval}} \\
& \quad Intercept   & 0.848 & 0.858 & 0.862 \\
& \quad $\beta_x$   & 0.869 & 0.901 & 0.909 \\
& \quad $\beta_p$   & 0.914 & 0.932 & 0.922 \\
& \quad $\sigma_x$  & 0.928 & 0.944 & 0.925 \\
& \quad $\sigma_p$  & 0.974 & 0.969 & 0.961 \\
[3pt] \multicolumn{5}{l}{\emph{Shared Sobol' draws}} \\
& \quad Intercept   & 0.855 & 0.850 & 0.829 \\
& \quad $\beta_x$   & 0.816 & 0.795 & 0.756 \\
& \quad $\beta_p$   & 0.901 & 0.908 & 0.875 \\
& \quad $\sigma_x$  & 0.897 & 0.879 & 0.733 \\
& \quad $\sigma_p$  & 0.959 & 0.952 & 0.891 \\
\bottomrule
\end{tabular}
\begin{tablenotes}[flushleft]
\footnotesize
\item \textit{Notes.} See Table~\ref{tab:common_draws_cov}.
\end{tablenotes}
\end{threeparttable}
\end{table}

\begin{table}[!htbp]
\centering
\small
\caption{RMSE under the three draw schemes at $T = 8{,}000$.}
\label{tab:common_draws_T8000}
\begin{threeparttable}
\begin{tabular}{ll *{3}{S[table-format=1.3]}}
\toprule
& & {$c = 0.98$} & {$c = 3.9$} & {$c = 15.6$} \\
& & {$R = 8{,}192$} & {$R = 2{,}048$} & {$R = 512$} \\
\midrule
\multicolumn{5}{l}{\emph{Per-market draws}} \\
& \quad Intercept   & 0.052 & 0.053 & 0.054 \\
& \quad $\beta_x$   & 0.016 & 0.016 & 0.016 \\
& \quad $\beta_p$   & 0.099 & 0.100 & 0.105 \\
& \quad $\sigma_x$  & 0.010 & 0.010 & 0.011 \\
& \quad $\sigma_p$  & 0.053 & 0.053 & 0.054 \\
[3pt] \multicolumn{5}{l}{\emph{Shared i.i.d.\ draws}} \\
& \quad Intercept   & 0.053 & 0.057 & 0.067 \\
& \quad $\beta_x$   & 0.019 & 0.027 & 0.045 \\
& \quad $\beta_p$   & 0.102 & 0.115 & 0.154 \\
& \quad $\sigma_x$  & 0.013 & 0.019 & 0.032 \\
& \quad $\sigma_p$  & 0.056 & 0.066 & 0.097 \\
[3pt] \multicolumn{5}{l}{\emph{Shared Sobol' draws}} \\
& \quad Intercept   & 0.052 & 0.053 & 0.055 \\
& \quad $\beta_x$   & 0.016 & 0.016 & 0.018 \\
& \quad $\beta_p$   & 0.099 & 0.100 & 0.108 \\
& \quad $\sigma_x$  & 0.010 & 0.011 & 0.013 \\
& \quad $\sigma_p$  & 0.053 & 0.054 & 0.061 \\
\bottomrule
\end{tabular}
\begin{tablenotes}[flushleft]
\footnotesize
\item \textit{Notes.} $c = T/R$. See Table~\ref{tab:common_draws}.
\end{tablenotes}
\end{threeparttable}
\end{table}

\begin{table}[!htbp]
\centering
\small
\caption{Coverage of $95\%$ confidence intervals under the three draw schemes at $T = 8{,}000$.}
\label{tab:common_draws_T8000_cov}
\begin{threeparttable}
\begin{tabular}{ll *{3}{S[table-format=1.3]}}
\toprule
& & {$c = 0.98$} & {$c = 3.9$} & {$c = 15.6$} \\
& & {$R = 8{,}192$} & {$R = 2{,}048$} & {$R = 512$} \\
\midrule
\multicolumn{5}{l}{\emph{Per-market draws}} \\
& \quad Intercept   & 0.904 & 0.908 & 0.897 \\
& \quad $\beta_x$   & 0.890 & 0.891 & 0.895 \\
& \quad $\beta_p$   & 0.921 & 0.921 & 0.905 \\
& \quad $\sigma_x$  & 0.931 & 0.932 & 0.926 \\
& \quad $\sigma_p$  & 0.958 & 0.955 & 0.957 \\
[3pt] \multicolumn{5}{l}{\emph{Shared i.i.d.\ draws}} \\
& \quad Intercept   & 0.902 & 0.876 & 0.841 \\
& \quad $\beta_x$   & 0.796 & 0.643 & 0.420 \\
& \quad $\beta_p$   & 0.910 & 0.869 & 0.803 \\
& \quad $\sigma_x$  & 0.851 & 0.683 & 0.475 \\
& \quad $\sigma_p$  & 0.945 & 0.884 & 0.765 \\
[3pt] \multicolumn{5}{l}{\emph{Shared i.i.d.\ draws, widened interval}} \\
& \quad Intercept   & 0.910 & 0.908 & 0.911 \\
& \quad $\beta_x$   & 0.926 & 0.934 & 0.949 \\
& \quad $\beta_p$   & 0.924 & 0.920 & 0.921 \\
& \quad $\sigma_x$  & 0.935 & 0.937 & 0.933 \\
& \quad $\sigma_p$  & 0.961 & 0.944 & 0.919 \\
[3pt] \multicolumn{5}{l}{\emph{Shared Sobol' draws}} \\
& \quad Intercept   & 0.906 & 0.903 & 0.903 \\
& \quad $\beta_x$   & 0.893 & 0.887 & 0.839 \\
& \quad $\beta_p$   & 0.920 & 0.913 & 0.900 \\
& \quad $\sigma_x$  & 0.933 & 0.921 & 0.848 \\
& \quad $\sigma_p$  & 0.956 & 0.949 & 0.917 \\
\bottomrule
\end{tabular}
\begin{tablenotes}[flushleft]
\footnotesize
\item \textit{Notes.} See Table~\ref{tab:common_draws_cov}.
\end{tablenotes}
\end{threeparttable}
\end{table}

\FloatBarrier

\section{NFP with Existing Packages}\label{appendix:pkg_comparison}

The NFP benchmark in Tables~\ref{tab:time_memory} and \ref{tab:time_memory_tau} is our own implementation. To check that the cost comparison
does not reflect a slow implementation of NFP, we also estimate the model with two existing packages, \texttt{PyBLP}
(version 1.2.0) \citep{conlon2020best} and \texttt{BLPestimatoR} (version 0.3.4). The design is the one of
Section~\ref{subsection:mc_efficiency}. Every estimator uses the same data file, the same $R = 1{,}000$ common draws
in every market, the same eight instruments, one-step GMM with weighting matrix $(Z'Z)^{-1}$, and L-BFGS-B over
$\sigma \in [10^{-4}, 2]^2$ from the same starting value. The contraction stops when the sup-norm change in $\delta$
is below $10^{-6}$.\footnote{\texttt{BLPestimatoR} iterates on $\exp(\delta)$, and we set its tolerance to
$10^{-12}$ so that the implied tolerance on $\delta$ is at most $10^{-6}$ for every product.} \texttt{PyBLP} uses a
finite-difference gradient, as our NFP does, and \texttt{BLPestimatoR} uses its analytic gradient. Both packages
compute standard errors as part of estimation, and we exclude that step from the reported time because our NFP
computes none; including it adds 7--12\%. Each replication runs on one core of the same AMD EPYC 9654 nodes. In
every replication, the estimates of both packages agree with our NFP estimate to within $0.005$ standard errors in
every coordinate.

Table~\ref{tab:pkg_comparison} reports the results. Neither package changes the conclusion of
Section~\ref{subsection:mc_efficiency}. \texttt{BLPestimatoR} is the fastest NFP implementation at $T \le 8{,}000$,
but its time grows faster than linearly in $T$. It is $358$ times slower than SNFP at $T = 128{,}000$, and at
$T = 512{,}000$ it did not finish within a 24-hour limit. \texttt{PyBLP} is between $12.8$ and $157.1$ times slower
than SNFP and takes $5.8$ hours at $T = 512{,}000$. The memory gap is larger still. Both packages take the draws
as market-level agent data, and their peak memory grows roughly in proportion to $T$, at about $0.17$~MB
(\texttt{PyBLP}) and $0.20$~MB (\texttt{BLPestimatoR}) per market. At $T = 512{,}000$, \texttt{PyBLP} needs
$86.8$~GB, $990$ times the $87.6$~MB used by SNFP.

\begin{table}[tbp]
\centering
\caption{Computational cost of SNFP and of NFP by implementation.}
\label{tab:pkg_comparison}
\small
\begin{threeparttable}
\begin{tabular}{r r r r r r r r}
\toprule
\multicolumn{1}{c}{$T$} & \multicolumn{1}{c}{SNFP} & \multicolumn{2}{c}{NFP (ours)} & \multicolumn{2}{c}{\texttt{PyBLP}}
 & \multicolumn{2}{c}{\texttt{BLPestimatoR}} \\
\cmidrule(lr){3-4} \cmidrule(lr){5-6} \cmidrule(lr){7-8}
 & & \multicolumn{1}{c}{Mean} & \multicolumn{1}{c}{/SNFP} & \multicolumn{1}{c}{Mean} & \multicolumn{1}{c}{/SNFP}
 & \multicolumn{1}{c}{Mean} & \multicolumn{1}{c}{/SNFP} \\
\midrule
\multicolumn{8}{l}{\textit{Panel A: Time (seconds)}} \\[2pt]
$2{,}000$ & $\underset{(4.1)}{4.4}$ & $\underset{(20.4)}{34.9}$ & $7.9\times$ & $\underset{(8.3)}{56.7}$ & $12.8\times$
 & $\underset{(3.4)}{16.9}$ & $3.8\times$ \\[4pt]
$8{,}000$ & $\underset{(1.4)}{4.9}$ & $\underset{(83.3)}{161.9}$ & $32.7\times$ & $\underset{(29.7)}{219.9}$
 & $44.5\times$ & $\underset{(13.5)}{111.8}$ & $22.6\times$ \\[4pt]
$32{,}000$ & $\underset{(10.9)}{15.7}$ & $\underset{(487.2)}{1{,}008.9}$ & $64.3\times$ & $\underset{(69.5)}{761.3}$
 & $48.5\times$ & $\underset{(150.8)}{1{,}166.4}$ & $74.3\times$ \\[4pt]
$128{,}000$ & $\underset{(13.7)}{48.2}$ & $\underset{(1{,}533.9)}{2{,}291.8}$ & $47.6\times$
 & $\underset{(252.6)}{3{,}370.3}$ & $69.9\times$ & $\underset{(1{,}862.6)}{17{,}235.8}$ & $357.6\times$ \\[4pt]
$512{,}000$ & $\underset{(74.0)}{133.3}$ & $\underset{(5{,}149.8)}{12{,}441.4}$ & $93.4\times$
 & $\underset{(4{,}948.9)}{20{,}931.6}$ & $157.1\times$ & \multicolumn{1}{c}{---} & \multicolumn{1}{c}{---} \\[4pt]
\midrule
\multicolumn{8}{l}{\textit{Panel B: Peak memory (MB)}} \\[2pt]
$2{,}000$ & $\underset{(0.9)}{86.8}$ & $\underset{(0.1)}{87.1}$ & $1.0\times$ & $\underset{(4.6)}{475.9}$ & $5.5\times$
 & $\underset{(0.8)}{718.8}$ & $8.3\times$ \\[4pt]
$8{,}000$ & $\underset{(0.0)}{87.1}$ & $\underset{(0.2)}{98.5}$ & $1.1\times$ & $\underset{(4.3)}{1{,}512.1}$
 & $17.4\times$ & $\underset{(0.5)}{2{,}046.2}$ & $23.5\times$ \\[4pt]
$32{,}000$ & $\underset{(1.0)}{86.2}$ & $\underset{(0.6)}{140.5}$ & $1.6\times$ & $\underset{(31.3)}{5{,}570.9}$
 & $64.6\times$ & $\underset{(0.2)}{7{,}518.9}$ & $87.2\times$ \\[4pt]
$128{,}000$ & $\underset{(1.0)}{86.4}$ & $\underset{(1.0)}{288.6}$ & $3.3\times$ & $\underset{(31.2)}{21{,}804.8}$
 & $252.4\times$ & $\underset{(1{,}646.5)}{25{,}763.4}$ & $298.2\times$ \\[4pt]
$512{,}000$ & $\underset{(2.7)}{87.6}$ & $\underset{(1.2)}{859.0}$ & $9.8\times$ & $\underset{(26.2)}{86{,}769.3}$
 & $990.0\times$ & \multicolumn{1}{c}{---} & \multicolumn{1}{c}{---} \\
\bottomrule
\end{tabular}
\begin{tablenotes}[flushleft]
\footnotesize
\item \textit{Notes.} Means across 10 replications, with standard deviations in parentheses. ``/SNFP'' is the ratio
of the mean to the SNFP mean. SNFP is stopped by the accuracy-matched rule of Table~\ref{tab:time_memory_tau}, and
its columns and those of our NFP repeat that table. Time for \texttt{PyBLP} and \texttt{BLPestimatoR} excludes the
standard-error step. ``---'': \texttt{BLPestimatoR} did not finish within a 24-hour limit at $T = 512{,}000$.
\end{tablenotes}
\end{threeparttable}
\end{table}

\section{Mini-Batch SNFP}\label{appendix:minibatch}

The SNFP update in~\eqref{eq:snfp} uses one market per step. A natural variant averages the moment over a
mini-batch of $B$ markets. Let $\mathcal B_s$ denote the $s$-th batch of $B$ consecutive online markets. The update
is
\[
  \theta_s = \Pi_\Theta\Bigl\{\theta_{s-1} - \gamma_s\,\hat A_{T_0}\,\frac{1}{B}\sum_{t\in\mathcal B_s}
  g(O_t;\theta_{s-1},P_t^R)\Bigr\}, \qquad \gamma_s = \gamma_0\, s^{-a},
\]
and $\bar\theta$ is the Polyak--Ruppert average of the iterates $\theta_s$. With $B = 1$ this is the baseline SNFP.
A larger $B$ reduces the noise in each step but also reduces the number of steps per epoch by the factor $B$. It
does not reduce the cost per market, because the inner contraction mapping must be solved market by market.

We rerun the design of Table~\ref{tab:time_memory_tau} with $B \in \{1, 5, 10\}$ and $\gamma_0 \in \{1, 0.5\}$,
using the same data, starting values and stopping rule. The number of epochs is the number of markets processed
until the stopping rule is met, divided by $T$. \rev{The time is the elapsed time at the stop, measured in the
loop. The column for $B = 1$ and $\gamma_0 = 1$ reports the runs of
Table~\ref{tab:time_memory_tau}, summarized here by the mean number of epochs, $10.41$ at $T = 2{,}000$, rather
than the median, $6.63$.}

Table~\ref{tab:batch_g1} reports the results with $\gamma_0 = 1$, the value used in the main text. Mini-batching
reduces the average number of epochs and the time at every sample size. Table~\ref{tab:batch_g05} shows that this
gain appears to come mainly from the smaller effective step size. With $B$ markets per step, an epoch has $B$ times
fewer steps, so the cumulative step size over an epoch is smaller, much as with a smaller $\gamma_0$. With
$\gamma_0 = 0.5$, SNFP with $B = 1$ performs about as well as the mini-batch versions with $\gamma_0 = 1$, and
mini-batching yields no systematic further improvement. The replication-level variation is large, especially at
$T = 2{,}000$, so small differences across columns should not be overinterpreted. Overall, we find no evidence that
mini-batching improves on single-market updates with a suitably chosen step size in this design.

\begin{table}[!htbp]
\centering
\small
\caption{Mini-batch SNFP with $\gamma_0 = 1$.}
\label{tab:batch_g1}
\begin{threeparttable}
\begin{tabular}{r ccc ccc}
\toprule
\multicolumn{1}{c}{\multirow{2}{*}{$T$}} & \multicolumn{3}{c}{Average epochs} & \multicolumn{3}{c}{Time (seconds)} \\
\cmidrule(lr){2-4} \cmidrule(lr){5-7}
 & $B = 1$ & $B = 5$ & $B = 10$ & $B = 1$ & $B = 5$ & $B = 10$ \\
\midrule
$2{,}000$ & $\underset{(10.65)}{10.41}$ & $\underset{(4.35)}{4.41}$ & $\underset{(1.34)}{2.28}$ & $\underset{(4.1)}{4.4}$ & $\underset{(1.6)}{2.0}$ & $\underset{(0.6)}{1.4}$ \\[4pt]
$8{,}000$ & $\underset{(0.96)}{2.97}$ & $\underset{(0.76)}{1.82}$ & $\underset{(1.04)}{2.05}$ & $\underset{(1.4)}{4.9}$ & $\underset{(1.1)}{3.2}$ & $\underset{(1.6)}{3.6}$ \\[4pt]
$32{,}000$ & $\underset{(1.78)}{2.39}$ & $\underset{(0.52)}{1.36}$ & $\underset{(0.52)}{1.25}$ & $\underset{(10.9)}{15.7}$ & $\underset{(3.8)}{9.6}$ & $\underset{(3.5)}{8.6}$ \\[4pt]
$128{,}000$ & $\underset{(0.61)}{1.94}$ & $\underset{(0.48)}{1.39}$ & $\underset{(0.49)}{1.28}$ & $\underset{(13.7)}{48.2}$ & $\underset{(12.8)}{37.0}$ & $\underset{(12.4)}{33.2}$ \\[4pt]
$512{,}000$ & $\underset{(0.72)}{1.32}$ & $\underset{(0.28)}{1.05}$ & $\underset{(0.29)}{1.09}$ & $\underset{(74.0)}{133.3}$ & $\underset{(26.3)}{106.5}$ & $\underset{(30.7)}{111.2}$ \\
\bottomrule
\end{tabular}
\begin{tablenotes}[flushleft]
\footnotesize
\item \textit{Notes.} The table reports means across 10 replications,
with standard deviations in parentheses below. Epochs and times are measured at the stopping rule of
Table~\ref{tab:time_memory_tau}.
\end{tablenotes}
\end{threeparttable}
\end{table}

\begin{table}[!htbp]
\centering
\small
\caption{Mini-batch SNFP with $\gamma_0 = 0.5$.}
\label{tab:batch_g05}
\begin{threeparttable}
\begin{tabular}{r ccc ccc}
\toprule
\multicolumn{1}{c}{\multirow{2}{*}{$T$}} & \multicolumn{3}{c}{Average epochs} & \multicolumn{3}{c}{Time (seconds)} \\
\cmidrule(lr){2-4} \cmidrule(lr){5-7}
 & $B = 1$ & $B = 5$ & $B = 10$ & $B = 1$ & $B = 5$ & $B = 10$ \\
\midrule
$2{,}000$ & $\underset{(3.68)}{3.71}$ & $\underset{(1.06)}{2.32}$ & $\underset{(2.86)}{3.10}$ & $\underset{(1.5)}{1.9}$ & $\underset{(0.4)}{1.4}$ & $\underset{(1.0)}{1.6}$ \\[4pt]
$8{,}000$ & $\underset{(0.67)}{1.99}$ & $\underset{(0.99)}{2.29}$ & $\underset{(0.49)}{2.01}$ & $\underset{(1.0)}{3.8}$ & $\underset{(1.6)}{4.2}$ & $\underset{(0.8)}{3.6}$ \\[4pt]
$32{,}000$ & $\underset{(0.43)}{1.35}$ & $\underset{(0.57)}{1.35}$ & $\underset{(0.55)}{1.29}$ & $\underset{(3.0)}{9.6}$ & $\underset{(3.9)}{9.5}$ & $\underset{(4.3)}{9.5}$ \\[4pt]
$128{,}000$ & $\underset{(0.53)}{1.40}$ & $\underset{(0.42)}{1.32}$ & $\underset{(0.36)}{1.20}$ & $\underset{(14.8)}{39.6}$ & $\underset{(9.9)}{33.0}$ & $\underset{(9.4)}{33.7}$ \\[4pt]
$512{,}000$ & $\underset{(0.24)}{1.01}$ & $\underset{(0.26)}{1.07}$ & $\underset{(0.68)}{1.17}$ & $\underset{(24.1)}{104.6}$ & $\underset{(32.1)}{112.5}$ & $\underset{(70.3)}{119.2}$ \\
\bottomrule
\end{tabular}
\begin{tablenotes}[flushleft]
\footnotesize
\item \textit{Notes.} The table reports means across 10 replications,
with standard deviations in parentheses below. Epochs and times are measured at the stopping rule of
Table~\ref{tab:time_memory_tau}.
\end{tablenotes}
\end{threeparttable}
\end{table}

\FloatBarrier

\end{document}